\newif\ifarxiv\arxivtrue
\definecolor{ForestGreen}{RGB}{34,139,34}
\ifarxiv\bibliographystyle{abbrv}\fi
\def\input{../media/#.tex}1{\input{../media/#1.tex}}
\def\eat#1{}
\newtheorem{thm}{Theorem}[section]
\newtheorem{defi}[thm]{Definition}
\newtheorem{lem}[thm]{Lemma}
\newtheorem{prop}[thm]{Proposition}
\newtheorem{cor}[thm]{Corollary}
\def\splitref#1:#2@{\thing#1@\;\ref{#1:#2}}
\def\thing#1#2@{\ifx#1sSect.\else\ifx#1fFig.\else\ifx#1TTable\else\ifx#1tThm.\else\ifx#1dDef.\else\ifx#1lLem.\else\ifx#1cCor.\else\ifx#1pProp.\fi\fi\fi\fi\fi\fi\fi\fi}
\def\tr{\operatorname{tr}}
\def\Cx{{\mathbb C}}
\def\Ir{{\mathbb Z}}
\def\Tw{{\mathbb T}}
\def\mod{{\mathop{\rm mod}\nolimits}}
\def\tr{\mathop{\rm tr}\nolimits}
\def\rank{{\mathop{\rm rank}\nolimits}\,}
\let\tw\widetilde
\def\re{\Re e}
\def\dff{\bf} %  for in-text definitions
\newcommand{\raisemath}[1]{\mathpalette{\raisem@th{#1}}}
\newcommand{\raisem@th}[3]{\raisebox{#1}{$#2#3$}}
\def\six{\mathop{\mathrm s\mkern-1mu\mathrm i}\nolimits}
\DeclareRobustCommand{\sixnp}{\mathop{\mathrm s\mkern-1mu \textup \i}\nolimits}
\DeclareRobustCommand{\sixR}{\mathop{\hbox{\raisebox{.45em}{$\scriptstyle\rightharpoonup$}}\hspace{-.9em}\sixnp}}
\DeclareRobustCommand\sixL{\mathop{\hbox{\raisebox{.45em}{$\scriptstyle\leftharpoonup$}}\hspace{-.9em}\sixnp}}
\def\sixrel#1:#2{\six(#1{:}#2)}
\def\romkern{\kern-1pt}
\def\rom#1{\textnormal{I\if#11\else\romkern I\if#12\else\romkern I\fi\fi}}
\def\symS{\textnormal{\textsf S}\xspace}
\def\ph{\eta}\def\rv{\tau}\def\ch{\gamma}
\newcommand{\supp}{\operatorname{supp}}
\def\1{\mathbbm{1}}
\def\Dk{{\mathbb D}}
\def\TT{{\mathcal T}}
\def\SS{{\mathcal S}}
\def\bs{\boldsymbol}
\def\>{\rangle}
\def\<{\langle}
\def\range{\operatorname{range}}
\def\spn{\operatorname{span}}
\def\sgn{\operatorname{sgn}}
\begin{document}

\title{Quantum walks: Schur functions meet symmetry protected topological phases
}
\nopagebreak

\author{C. Cedzich}
\affiliation{Institut f\"ur Theoretische Physik, Leibniz Universit\"at Hannover, Appelstr. 2, 30167 Hannover, Germany}
\affiliation{Institut f\"ur Theoretische Physik, Universit\"at zu K\"oln, Z\"ulpicher Stra{\ss}e 77, 50937 K\"oln, Germany}
\author{T. Geib}
\affiliation{Institut f\"ur Theoretische Physik, Leibniz Universit\"at Hannover, Appelstr. 2, 30167 Hannover, Germany}
\author{F.~A. Gr\"unbaum}
\affiliation{Department of Mathematics, University of California, Berkeley CA 94720}
\author{L. Vel\'azquez}
\affiliation{Departamento de Matem\'{a}tica Aplicada \& IUMA,  Universidad de Zaragoza,  Mar\'{\i}a de Luna 3, 50018 Zaragoza, Spain}
\author{A.~H. Werner}
\affiliation{{QMATH}, Department of Mathematical Sciences, University of Copenhagen, Universitetsparken 5, 2100 Copenhagen, Denmark,}
\affiliation{{NBIA}, Niels Bohr Institute, University of Copenhagen, Denmark}
\author{R.~F. Werner}
\affiliation{Institut f\"ur Theoretische Physik, Leibniz Universit\"at Hannover, Appelstr. 2, 30167 Hannover, Germany}

\begin{abstract}
This paper uncovers and exploits a link between a central object in harmonic analysis, the so-called Schur functions, and the very hot topic of symmetry protected topological phases of quantum matter. This connection is found in the setting of quantum walks, i.e. quantum analogs of classical random walks. We prove that topological indices classifying symmetry protected topological phases of quantum walks are encoded by matrix Schur functions built out of the walk. This main result of the paper reduces the calculation of these topological indices to a linear algebra problem: calculating symmetry indices of finite-dimensional unitaries obtained by evaluating such matrix Schur functions at the symmetry protected points $\pm1$. The Schur representation fully covers the complete set of symmetry indices for 1D quantum walks with a group of symmetries realizing any of the symmetry types of the tenfold way. The main advantage of the Schur approach is its validity in the absence of translation invariance, which allows us to go beyond standard Fourier methods, leading to the complete classification of non-translation invariant phases for typical examples.
\end{abstract}

\pacs{03.65.Vf,  %Topological phases (quantum mechanics)
03.65.Db;  %Functional analytical methods
\hskip5pt
MSC-class: 47A56, %Functions whose values are linear operators
81Q99, %Quantum theory (none of the above)
30J99 %Function theory on the disk (none of the above)
}

\maketitle

\section{Introduction}
\label{sec:int}

Topological phases of matter are currently one of the most stimulating topics in quantum physics \cite{kitaevPeriodic,KitaevLectureNotes,MPSphaseI,MPSphaseII,Schnyder1,Schnyder2,HasanKaneReview,KaneMeleQSH,KaneMeleTopOrder,ZhangTopologicalReview,Feffer}, leading to the Nobel prize in 2016. Theoretically predicted decades earlier, their experimental realization had to wait until 2007 \cite{konig2007quantum}. Since then, the impact of the potential applications, ranging from spintronics to topological superconductivity or quantum computation \cite{KitaevLectureNotes,KitaevModel,KitaevAnyons,AnyonsUniversalQuantumComp}, has fueled the race towards the discovery of new forms of topological matter, fostering a spectacular symbiosis between the theoretical and experimental efforts.
The key feature of these new states of matter is their exceptional stability against perturbations, which is of topological origin. The archetype here is the so called bulk-edge principle: the interface between systems in different phases supports robust bound states, regardless of how the phases are joined.

The precise meaning of such statements depends crucially on the specific framework. In this paper the physical systems are discrete-time evolutions of single particles with internal degrees of freedom, so-called ``quantum walks'' \cite{Aharonov,Grimmet,Ambainis2001,electric,SpaceTimeCoinFlux,Anonymous:rSnqW2vc,gauge}. They are described by a unitary operator $W$ on a lattice satisfying a locality condition which reflects the decay of the interactions as a function of the distance. Their simplicity and versatility make them ideal platforms for creating and studying new symmetry protected topological phases \cite{Kita,Kita2,short,long,ti,Asbo1,Asbo2,Asbo4}.

In one spatial dimension, the mathematical description of these phases involves a set $\mathfrak U$ of unitaries with some common symmetries --giving a representation of any one of the symmetry types of the tenfold way \cite{Altland-Zirnbauer}-- and satisfying some locality and spectral gap conditions \cite{short,long}. These constraints divide $\mathfrak U$ into homotopy classes (topological phases), so that a perturbation $t\mapsto W(t)\in\mathfrak U$ cannot change the homotopy class unless a discontinuity (phase transition) arises. Stated in another way, if a symmetry preserving continuous path $W(t)$ of walks --i.e. unitaries satisfying the locality conditions-- connects walks $W_1$, $W_2$ in different homotopy classes, then $W(t)$ closes the spectral gap for some $t$. The bulk-edge principle sketched above becomes a stability result on eigenvalues and eigenvectors of crossovers $W$ acting as $W_1\in\mathfrak U$ or $W_2\in\mathfrak U$ on different regions of the lattice: if $W_1,W_2$ are in different homotopy classes, the walk $W$ exhibits eigenvalues in the gaps with eigenvectors localized around the interface between $W_1$ and $W_2$ which cannot be eliminated by perturbing the system continuously.

A central problem in this context is to find a complete set of labels for these topological phases. In the purely translation invariant case  Fourier analysis is widely used to represent an infinite-dimensional quantum walk as a finite-dimensional matrix valued function providing simple expressions for such homotopy invariants \cite{ti}. A more sophisticated machinery which applies also in the disordered case is provided by K-theory \cite{kitaevPeriodic,Thiang,SchulzBaldesBook,SchulzZ2,Schulz2016index,SchulzNonTechnicalOverview}, but its complexity makes it difficult to apply this in practice.

One of the novel ideas in this work is the proposal of new tools for the analysis of symmetry protected topological indices in quantum walks which combine both, the simplicity of Fourier and the universality of K-theory. Known by the name of Schur functions, they constitute one of the gems resulting from the interplay between harmonic analysis and complex variables,
which go back to the early XXth century in the hands of Issai Schur \cite{Schur1,Schur2} (quick introductions to scalar and matrix valued Schur functions can be found in \cite[Section 1.3]{Simon} and \cite[Chapters 1 and 3]{DaPuSi}, for a more general and detailed treatment see for instance \cite{Dubovoj}). The flexibility of Schur functions is evidenced by a wide variety of applications such as interpolation problems, orthogonal polynomials, operator theory, system and control theory, stochastic processes, electrical engineering, signal processing or geophysics (see \cite{Simon,DaPuSi,NagyFoias,FoiasFrazho,Kailath} and references therein).

Regarding their quantum significance, Schur functions play a crucial role in describing recurrence properties of quantum walks --more generally, discrete-time unitary evolutions \cite{GVWW,BGVW}--, in the spirit of P\'olya's recurrence theory for classical random walks \cite{Polya,Feller}. The main message of this paper is that, once again, Schur functions are a very useful tool as they provide a bridge between harmonic analysis and the study of symmetry protected topological phases of quantum walks. This Schur approach to symmetry protected topological phases is based on a recent theory \cite{short,long} which avoids any translation invariance assumption and gives a complete set of topological indices for one-dimensional quantum walks in $\mathfrak U$.

As we shall see, matrix valued Schur functions do a similar job for non-translation invariant quantum walks as Fourier transform does in the periodic case, reducing the calculation of topological indices to finite-dimensional linear algebra problems. In a sense, one can even think of Schur functions as nonlinear versions of Fourier tools. This is illustrated by their relation to orthogonal polynomials on the unit circle \cite{Simon,DaPuSi}, which generalize standard Fourier expansions beyond the Lebesgue measure on the unit circle. Moreover, the continuous version of Schur functions, which plays a prominent role in Dirac's and Krein's systems \cite{Denisov}, appears in connection with appropriate ``scattering transforms'', regarded as nonlinear versions of the Fourier transform, a situation similar to that in the study of many integrable systems.

Therefore, the relevance of the Schur approach relies on the capability to go beyond translation invariance, which is important for several reasons:
\begin{itemize}
\item It is required by a rigorous treatment of the bulk-edge principle \cite{short,long} because joining different phases breaks translation invariance.
\item The bulk-edge principle guarantees the emergence of symmetry protected eigenvectors even when joining non-translation invariant bulks (see the end of Sect.~\ref{sec:ssrep}).
\item Although the general theory states that every phase includes a crossover $W$ of translation invariant walks $W_1$ and $W_2$ \cite{long}, changing $W_i$ by a translation invariant walk in the same phase may change the phase of the crossover. Hence, the knowledge of the translation invariant phases of a model is of little help to surmise its total number of phases (see the examples in Sect.~\ref{sec:ssrep} and \ref{sec:ex}).
\item Dealing with non-translation invariant walks unifies the treatment of all translation invariant cases, i.e. periodic situations with arbitrary period, something which seems hard to tackle with Fourier methods.
\end{itemize}

Finally, we have to remark that the Schur bridge between harmonic analysis and quantum walks goes both ways. This has been already the case for quantum recurrence which has spurred new results on Schur and Nevanlinna functions, as well as in related areas such as orthogonal polynomials \cite{CGVWW,GV}. Regarding possible payoffs of the Schur approach to topological phases, it could open new ways of thinking about homotopy groups of operator spaces, the general stability problem of isolated eigenvalues, or extensions of the Gohberg-Krein formula for Fredholm indices, closely related to the topological indices of quantum walks \cite{long,ti}.

The paper is structured as follows: Sect.~\ref{sec:SPTP} summarizes the general theory of symmetry protected topological phases for 1D quantum walks given in \cite{short,long}. A selection of results on Schur functions of interest for our purpose is given in Sect.~\ref{sec:FR-S}. This section contains a new result for matrix valued Schur functions, Theorem~\ref{thm:m-eg}, which is the cornerstone of the Schur approach to topological indices developed in Sect.~\ref{sec:S-I} and \ref{sec:dec}, see Theorems~\ref{thm:si}, \ref{thm:eigC}, \ref{thm:dec} and Corollary~\ref{cor:dec}. Sect.~\ref{sec:ssrep} and \ref{sec:ex} apply the Schur machinery to the complete classification of symmetry protected topological phases in examples of non-translation invariant quantum walks: the split-step walk, coined walks with coins of arbitrary size and some unitary equivalent transformations thereof.

\section{Symmetry protected topological phases of 1D quantum walks}
\label{sec:SPTP}

Let us summarize the main results on the topological classification of 1D symmetric quantum walks, first announced in \cite{short}, and fully developed in \cite{long}. Unlike other approaches, this provides an explicit set of indices which completely classifies the phases without any translation invariance assumption --concerning the special features of the translation invariant case see \cite{ti}--, getting such indices close to the Schur representation that we will develop.

%Let us summarize the main results on the topological classification of 1D symmetric quantum walks, first announced in \cite{short}, and fully developed in \cite{long}. This classification holds for non-translation invariant walks, concerning the special features of the translation invariant case see \cite{ti}.

We will deal with unitary operators $W$ on the line, i.e. on a Hilbert space $(\HH,\<\;\,|\;\,\>)$ with cell structure
$$
 \HH = \bigoplus_{x\in\Ir} \HH_x,
$$
where the {\bf cells} $\HH_x$ are finite-dimensional subspaces, but not necessarily with the same dimension since no assumption about translation invariance shall be made. The index $x\in\Ir$ labels the sites of a 1D lattice, while the cells $\HH_x$ describe internal degrees of freedom.
The identification of certain symmetry indices will require to deal also with unitaries on a left/right half-line, i.e. on a sum of cells to the left/right of a given one. Due to this reason, for convenience we will use the abbreviations
$$
 \HH_{>a} = \bigoplus_{x>a} \HH_x,
 \qquad
 \HH_{\ge a} = \bigoplus_{x\ge a} \HH_x,
 \qquad
 \HH_{<a} = \bigoplus_{x<a} \HH_x,
 \qquad
 \HH_{\le a} = \bigoplus_{x\le a} \HH_x.
$$
Besides, $P_x$ and $P_{>a}$ will denote the orthogonal projections of $\HH$ onto $\HH_x$ and $\HH_{>a}$, respectively, and analogously for the remaining cases.

To study properly symmetry protected topological phases in 1D quantum walks we need to consider unitaries $W$ on a (finite or infinite) sum of cells $\HH_x$ under the following setting:

\begin{itemize}

\item[(a)] The unitary $W$ has a discrete group of symmetries providing a concrete representation of one of the symmetry types of the {\it tenfold way} \cite{Altland-Zirnbauer}.
Every symmetry $\sigma$ acts locally in each cell, i.e. $\sigma=\oplus_x\sigma_x$ where $\sigma_x$ acts on $\HH_x$. Besides, the representation of the symmetry type in each cell is {\bf balanced}, i.e. every $\sigma_x$ has a symmetry invariant unitary with spectral gaps around the symmetry protected points $\pm1$. This requirement guarantees that what we will call left and right indices are well defined.

\item[(b)] The essential spectrum of $W$ has gaps around $\pm1$, i.e. the spectrum of $W$ around $\pm1$ consists only of isolated eigenvalues of finite multiplicity. This, instead of a strict gap, gives room for the presence of symmetry protected eigenvectors with eigenvalues $\pm1$. We will refer to a (strictly) {\bf gapped} or {\bf essentially gapped} unitary to distinguish both situations.

\item[(c)] A mild locality assumption will be made, namely, that $[W,P_{\ge a}]=WP_{\ge a}-P_{\ge a}W$ is compact, a property which is independent of $a$ and we call this {\bf essential locality}. This condition is equivalent to the compactness of $P_{<a}WP_{\ge a}$ and $P_{\ge a}WP_{<a}$, as follows from $[W,P_{\ge a}] = P_{<a}WP_{\ge a}-P_{\ge a}WP_{<a}$.

\end{itemize}

A unitary $W$ satisfying (a) and (b) will be called an {\bf admissible unitary} for the given representation of a symmetry type. Under the additional restriction (c), we will refer to such a unitary as an {\bf admissible walk}. When $P_yWP_x=0$ for $|y-x|$ greater than some $x$-independent length $L$, a walk $W$ will be called (strictly) {\bf local}. Nevertheless, assumption (c) leaves room for non-local walks on the line, e.g. all those with decay $\|P_yWP_x\|\le c|y-x|^{-\alpha}$ for some $c>0$ and $\alpha>1$ \cite{ti}.

A walk $W$ on the line is called {\bf translation invariant} if $[W,S_a]=0$ for some shift operator $S_a$, $a\in\Ir$, which refers to a unitary operator on $\HH$ such that $S_a\HH_x=\HH_{x+a}$. No translation invariant assumption is made in the characterization of symmetry protected topological phases described below, which holds for arbitrary admissible walks. For a translation invariant admissible walk $W$ the essential gap and strict gap conditions coincide because then no eigenspace of $W$ may be finite-dimensional. Therefore, any discussion of symmetry protected eigenvectors with eigenvalues $\pm1$ requires breaking translation invariance.

Every symmetry $\sigma$ of any of the alluded to symmetry types is represented, up to a phase, by an involution on the Hilbert state space $\HH$ acting on operators as $X \mapsto \sigma X \sigma^*$, where $\sigma^*$ stands for the adjoint of $\sigma$. These involutions are of one of the following three types:
\begin{center}
		\begin{tabular}{|c|c|c|}
			\dff particle-hole 	& \dff time-reversal 	& \dff chiral
			\\
			\hline
			antiunitary    		& antiunitary  			& unitary
			\\
			$\ph W\ph^*=W$    	& $\rv W\rv^*=W^*$  	& $\ch W\ch^*=W^*$
		\end{tabular}
\end{center}
The anitiunitary involutions necessarily satisfy\footnote{Here and in what follows $\1$ stands for the identity on the whole Hilbert space $\HH$, while we use the notation $\1_{\HH_C}$ for the identity on a subspace $\HH_C\subset\HH$.} $\sigma^2=\pm\1$ because $\sigma^2\sigma=\sigma\sigma^2$, while it is possible to adjust the phase convention for the symmetries so that $\ch^2=\1$ or $\ch=\ph\rv$ and the three involutions commute whenever all of them are present, in which case $\ch^2=\ph^2\rv^2$. Therefore, each {\bf symmetry type} $\symS$ of the {\it tenfold way} is determined by a commutative group of order 1, 2 or 4 constituted by unitary/antiunitary involutions, together with a choice for the signs of the squared antiunitary involutions whenever they are present. The particle-hole and chiral symmetries make the spectrum of a unitary $W$ invariant under complex conjugation, which distinguishes the $\pm1$-eigenspaces $\HH^\pm$ as the only ones invariant under any of the symmetries.

In this setting, the {\bf symmetry protected topological phases} are the homotopy classes of admissible walks under the norm topology. It has been proved in \cite{long} that these homotopy classes are labelled by three indices taking values in a group isomorphic to $\{0\}$, $\Ir_2$ or $\Ir$, given in terms of a {\bf symmetry index} $\six(\rho)$ classifying --up to unitary equivalence and orthogonal sum of balanced representations-- the representations $\rho$ of a symmetry type $\symS$ on a finite-dimensional Hilbert space $\HH$ \cite{short,long}. A group theoretical analysis yields \cite{short,long}
\begin{equation} \label{eq:SIfin}
 \six(\rho) =
 \left\{
 \begin{aligned}
 	& \dim\HH \; (\mod\,2)
	& \quad & \text{for } \symS = \{\1,\ph\},
	\; \ph^2=\1,
	& \qquad & \text{(I)}
	\\
	& \dim\HH \; (\mod\,4)
	& & \text{for } \symS = \{\1,\ph,\rv,\ch\},
	\; \ph^2=\1, \; \rv^2=-\1,
	& & \text{(II)}
	\\
	& \tr\ch
	& & \text{for } \symS = \{\1,\ch\} \text{ or } \{\1,\ph,\rv,\ch\},
	\; \ph^2=\rv^2 \; (\text{i.e.} \; \ch^2=\1),
	& & \text{(III)}
	\\
	& 0,
	& & \text{otherwise},
 \end{aligned}
 \right.
\end{equation}
showing that $\six(\rho\oplus\rho')=\six(\rho)+\six(\rho')$, $\six(V\rho V^{-1})=\six(\rho)$ and $|\six(\rho)|\le\dim\HH$.

Given an admissible unitary $W$ for a representation $\rho$ of a symmetry type,
consider the representations $\rho_\pm$ induced by $\rho$ on the $\pm1$-eigenspaces $\HH^\pm$ of $W$. Any continuous perturbation of $W$ by admissible unitaries changes $\rho_\pm$ by adding/subtracting balanced representations, hence the following indices --which make sense since $\dim\HH^\pm<\infty$ by the essential gap condition-- are homotopy invariants in the set of admissible unitaries,
$$
 \six(W) = \six_+(W) + \six_-(W),
 \qquad
 \six_\pm(W)=\six(\rho_\pm).
$$
Also, $|\six_\pm(W)|\le\dim\HH^\pm$ so that $\six(W)\ne0$ guarantees the existence of an eigenvector with eigenvalue 1 or $-1$. In the finite-dimensional situation $\six(W)=\six(\rho)$ and $\six_\pm(W)$ admit the following direct expressions \cite{long},
\begin{equation} \label{eq:SI+-fin}
 \begin{aligned}
 	& (-1)^{\six_\pm(W)} = \det(\mp W),
	& \qquad & \text{(I)}
	\\
	& \six_\pm(W) = \frac{1}{2} \tr\ch(\1\pm W).
	& & \text{(III)}
 \end{aligned}
\end{equation}

For an admissible walk $W$ on the line, another pair of indices arise, which make the cell structure to come into play. They are given by
\begin{equation} \label{eq:LR}
 \sixL(W) = \six(W_L), \qquad \sixR(W)=\six(W_R),
\end{equation}
with $W_{L/R}$ a walk on a left/right half-line $\HH_{<b}$/$\HH_{\ge b}$, such that $W_L\oplus W_R$ is admissible and ``coincides with $W$ far to the left/right'', i.e.
\begin{equation} \label{eq:CD}
 \lim_{a\to-\infty}\|P_{<a}(W-(W_L\oplus W_R))P_{<a}\| =
 \lim_{a\to\infty}\|P_{>a}(W-(W_L\oplus W_R))P_{>a}\| = 0.
\end{equation}
Due to essential locality, \eqref{eq:CD} is equivalent to the compactness of $W-(W_L \oplus W_R)$. The existence of such a compact decoupling is a non-trivial result of the theory \cite{long}. In the case of usual local walks the decoupling is typically performed by a local perturbation, i.e. acting non-trivially only on a finite number of cells.

While $\six_\pm(W)$ make sense for any admissible unitary $W$, the left and right indices, $\sixL(W)$ and $\sixR(W)$, are well defined only for admissible walks since essential locality guarantees their invariance under compact perturbations. Essential locality is also behind the identity \cite{short,long}
$$
 \six(W) = \sixL(W) + \sixR(W),
$$
which shows that $\sixL(W)$, $\sixR(W)$ and $\six_\pm(W)$ are not independent. Instead, any three of these indices are independent and can be used to label the homotopy classes of admissible walks. Omiting some of these three indices yields a less fine classification, either by weakening the homotopy equivalence relation, or by enlarging the set of unitaries which may be used along a homotopy deformation \cite{long}:
\begin{equation} \label{eq:table-equivrel}
		\begin{tabular}{|c|c|c|}
		    \hline
		    Indices 	
			& Unitaries	
			& Equivalence relation
			\\ \hline
 			& &
 			\\[-12pt]
 			\hline
			\;$\sixL(W)$, $\sixR(W)$, $\six_-(W)$\; 	
			& admissible walks
			& homotopies
			\\
			\hline
			$\sixL(W)$, $\sixR(W)$ 	
			& admissible walks
			& \;homotopies \& compact perturbations\;
			\\
			\hline
			$\six_+(W)$, $\six_-(W)$ 	
			& \;admissible unitaries\;
			& homotopies
			\\	
			\hline
		\end{tabular}
\end{equation}
The indices $\sixL$, $\sixR$ and $\six_\pm$ are also invariant under the map $W \mapsto W^*$. Besides, $\six_\pm$ are invariant under unitary equivalence of the walk $W \mapsto UWU^*$ and the symmetries $\sigma \mapsto U\sigma U^*$. However, this is not the case for $\sixL$ and $\sixR$, unless the unitary $U$ acts locally in each cell so that it preserves the cell structure.

The above classification of topological phases is not a mere intellectual challenge, but has measurable physical consequences which follow from the {\bf bulk-edge correspondence} \cite{short,long}: Let $W$ be a crossover between admissible walks $W_1$, $W_2$ on the line, i.e. an admissible walk which coincides with $W_{1/2}$ far to the left/right,
$$
 \lim_{a\to-\infty}\|P_{<a}(W-W_1)P_{<a}\| =
 \lim_{a\to\infty}\|P_{>a}(W-W_2)P_{>a}\| = 0.
$$
Assuming $\six_\pm(W_i)=0$ --a condition guaranteed by translation invariance, or more generally for strictly gapped walks, but satisfied in more general situations--, then $\sixL(W_i)=-\sixR(W_i)$, thus the single index $\sixR(W_i)$ identifies the topological phase to which $W_i$ belongs. In this case, $\sixL(W)=\sixL(W_1)$, $\sixR(W)=\sixR(W_2)$, $\six(W) = \sixR(W_2)-\sixR(W_1)$ and the $\pm1$-eigenspaces $\HH^\pm$ of $W$ satisfy
$$
 |\sixR(W_2)-\sixR(W_1)| \le \dim\HH^+ + \dim\HH^-.
$$
Hence, $W$ has an eigenvector with eigenvalue 1 or $-1$ whenever $\sixR(W_1)\ne\sixR(W_2)$. These results hold in the absence of any translation invariance assumption for $W_i$. We conclude that any crossover between gapped admissible walks from different topological phases breaks the gap by creating protected eigenvectors with eigenvalue 1 or $-1$, regardless of the way the crossover is engineered. These eigenvectors are typically localized around the crossover, a phenomenon that can be precisely proved when $W$ interpolates between translation invariant walks due to the exponential decay of its eigenvectors \cite{ti}.

\section{Schur functions}
\label{sec:FR-S}

Obtaining the symmetry indices of a walk $W$ involves two ingredients: the decoupling of $W$ into left/right walks $W_{L/R}$ and the calculation --according to \eqref{eq:SIfin}-- of the symmetry indices $\six(\rho_\pm)$ for the finite-dimensional representations $\rho_\pm$ induced on the $\pm1$-eigenspaces of $W$ and $W_{L/R}$. However, getting information about eigenspaces of operators in infinite dimension is not an easy task, especially for non-translation invariant walks, whose treatment is among the most significant virtues of the theory previously summarized.

Under translation invariance this problem may be circumvented by the Fourier transform, which represents a walk as a matrix valued function on the {\bf unit circle} $\Tw=\{z\in\Cx:|z|=1\}$. The main contribution of this paper is the discovery that certain matrix valued functions on the {\bf unit disk} $\Dk=\{z\in\Cx:|z|<1\}$, known as Schur functions \cite{Schur1,Schur2} (for a modern approach see \cite{Dubovoj}, or the summaries in \cite[Section 1.3]{Simon} and \cite[Chapters 1 and 3]{DaPuSi}), do a similar job when translation invariance is broken. We will summarize here some of the special properties of these functions which are of interest for us.

Scalar {\bf Schur functions} are the analytic maps of $\Dk$ into its closure $\overline{\Dk}=\Dk\cup\Tw$. Among their most remarkable features is their characterization by a finite or infinite sequence of complex numbers $\alpha_n$ --the {\bf Schur parameters}-- arising from the so called {\bf Schur algorithm}
\begin{equation} \label{eq:SA}
 f_0=f;
 \qquad\quad
 f_{n+1}(z) = T_{\alpha_n} f_n(z) :=
 \frac{1}{z} \frac{f_n(z)-\alpha_n}{1-\overline\alpha_n f_n(z)},
 \qquad
 \alpha_n = f_n(0),
 \qquad n\ge0,
\end{equation}
which generates iteratively new Schur functions $f_n$ --the {\bf Schur iterates}-- starting from a given one $f$. This yields an infinite sequence of iterates unless some $\alpha_n$ lies in $\Tw$, which stops the algorithm, establishing a one-to-one correspondence between Schur functions and elements of $\Dk^\infty \cup (\cup_{k\ge0}\Dk^k\times\Tw)$. If $f$ is a Schur function with Schur parameters $(\alpha_n)_{n\ge0}$, obviously its $k$-th iterate $f_k$ has Schur parameters $(\alpha_n)_{n\ge k}$.

The Schur algorithm itself gives the following relations between transformations of Schur functions and Schur parameters:
\begin{equation} \label{eq:rSA}
\begin{aligned}
 & (\alpha_n)_{n\ge0} \mapsto (\lambda\alpha_n)_{n\ge0}
 \;\Rightarrow \; f(z) \mapsto \lambda f(z),
 \qquad \lambda\in\Tw,
 \\
 & (\alpha_n)_{n\ge0} \mapsto (\alpha_0,0,\alpha_1,0,\dots)
 \;\Rightarrow \; f(z) \mapsto f(z^2),
 \\
 & (\alpha_n)_{n\ge0} \mapsto (0,\alpha_0,0,\alpha_1,\dots)
 \;\Rightarrow \; f(z) \mapsto zf(z^2).
\end{aligned}
\end{equation}
Therefore, a change in the sign in all the Schur parameters changes the sign in the Schur function, while even/odd Schur functions are characterized by sequences of Schur parameters which are null at even/odd places in the sequence.

A Schur algorithm also exists for matrix Schur functions, the contractive matrix valued analytic functions on $\Dk$. Another fruitful property of these functions is their one-to-one correspondence with matrix measures $\mu$ on $\Tw$ normalized by $\mu(\Tw)=\1$. The matrix Schur function $f$ related to any such a measure is given by
\begin{equation} \label{eq:f-F}
 F(z) = \int \frac{t+z}{t-z} \, d\mu(t),
 \qquad
 f(z) = z^{-1} (F(z)-\1)(F(z)+\1)^{-1},
\end{equation}
where $F$, known as the {\bf Carath\'eodory function} of $\mu$, is a matrix valued analytic function on $\Dk$ with positive real part $\re F(z) = (F(z)+F(z)^*)/2$. The radial limits
$$
 f(e^{i\theta}) = \lim_{r\uparrow1} f(re^{i\theta}),
 \qquad
 F(e^{i\theta}) = \lim_{r\uparrow1} F(re^{i\theta}),
$$
exist a.e. on $\Tw$, and the asymptotic behaviour of Schur and Carath\'eodory functions at the boundary $\Tw$ yields information about the measure $\mu$. For instance, the absolutely continuous part of $\mu$ is given by $\re F(e^{i\theta})\,\frac{d\theta}{2\pi}$, and its singular part is concentrated on the points $e^{i\theta}$ such that $\lim_{r\uparrow1}\tr\re F(re^{i\theta})=\infty$.

Concerning the boundary behaviour, Schur functions have better analyticity properties than Carath\'eodory functions. The later ones are analytic on the gaps of the support of the measure, but have simple poles at the isolated mass points. In contrast, Schur functions have an analytic continuation through the gaps of the limit points of the support, including the isolated mass points. For scalar Schur functions, this follows from the fact that a quotient of meromorphic functions with the same poles, and of the same order, is necessarily analytic. This pole cancellation does not generalize to matrix valued functions, thus the analyticity of Schur functions at the isolated mass points is less trivial in the matrix valued case. A proof for matrix Schur functions is given in the following theorem which, for a matrix measure on $\Tw$, provides a Schur characterization for the isolated mass points and the gaps of the limit points of the support. Some partial results in the theorem below follow from known relations between matrix measures and matrix Carath\'eodory functions which are straight forward extensions of similar results for the scalar case \cite{GeTs}, and whose succint proof is included for completeness. Nevertheless, the final characterization in terms of matrix Schur functions is new and the proof is deliberately more explicit concerning those aspects which need a much more delicate treatment than in the scalar case. The relevance of this characterization for our purposes lies in the connection between eigenvalues/essential gaps of unitary operators and mass points/gaps of the limit points of the support for the related spectral measures (see Sect.~\ref{sec:S-I}).

\begin{thm} \label{thm:m-eg}
Let $\mu$ be a matrix measure with support $\supp\mu\subset\Tw$ and $\mu(\Tw)=\1$, and denote by $(\supp\mu)'$ the set of limit points of $\supp\mu$. If $f$ is the matrix Schur function of $\mu$, then:
\begin{itemize}
 \item[(i)] $\Tw\setminus(\supp\mu)'$ is constituted by the arcs of $\Tw$
    through which $f$ has an analytic continuation which is unitary on these
    arcs.
 \item[(ii)] The isolated mass points $\lambda$ of $\mu$ are the zeros of
    $\det(\1-zf(z))$ lying on $\Tw\setminus(\supp\mu)'$, and
	\begin{equation} \label{eq:muschur}
	 \range \mu(\{\lambda\}) = \ker(\1 - \lambda f(\lambda)).
	\end{equation}
\end{itemize}
\end{thm}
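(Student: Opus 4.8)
\medskip
\noindent\emph{Proof idea.} The plan is to read the correspondence \eqref{eq:f-F} backwards: inverting it gives, on $\Dk$,
\begin{equation*}
 F(z) = (\1 + zf(z))(\1 - zf(z))^{-1},
 \qquad
 \1 - zf(z) = 2\,(F(z)+\1)^{-1},
\end{equation*}
so that continuing $f$ analytically across an arc is the same problem as continuing $F$, and the latter can be attacked directly from the integral $F(z)=\int\frac{t+z}{t-z}\,d\mu(t)$. I would first record the boundary geometry of $F$: for $e^{i\theta}$ outside $\overline{\supp\mu}$ the integrand is analytic in $z$, so $F$ continues there, and the identity $\frac{t+z}{t-z}=-i\cot\frac{\phi-\theta}{2}$ (for $t=e^{i\phi}$, $z=e^{i\theta}$) shows the boundary value is anti-Hermitian; hence $F(e^{i\theta})+\1$ is invertible and $\re F(e^{i\theta})=0$ there. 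This immediately makes $f(z)=z^{-1}(F(z)-\1)(F(z)+\1)^{-1}$ analytic on every arc of $\Tw\setminus\overline{\supp\mu}$, and a one-line computation using $F^*=-F$ gives $f(e^{i\theta})^*f(e^{i\theta})=\1$, so $f$ is unitary on such arcs.

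The delicate point, and the step I expect to be the real obstacle, is an \emph{isolated mass point} $\lambda$, where $F$ genuinely has a simple pole and the scalar ``pole cancellation'' argument quoted before the theorem is unavailable. Here I would write $\mu=\mu(\{\lambda\})\,\delta_\lambda+\nu$ with $\lambda\notin\supp\nu$, so that $F(z)=\tfrac{2\lambda}{\lambda-z}\mu(\{\lambda\})+H(z)$ with $H$ analytic at $\lambda$; the boundary geometry applied to $\nu$ makes $H(\lambda)+\1$ invertible on $\ker\mu(\{\lambda\})$. Putting $w=\lambda-z$ and decomposing the space as $\range\mu(\{\lambda\})\oplus\ker\mu(\{\lambda\})$, the matrix $w\,(F(z)+\1)=2\lambda\,\mu(\{\lambda\})+w\,(H(z)+\1)$ has an invertible $(1,1)$ block for small $w$ (since $\mu(\{\lambda\})$ is positive definite on its range) and a $(2,2)$ block equal to $w$ times a matrix with invertible limit. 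Feeding this into the Schur-complement formula for the inverse should show that $(F(z)+\1)^{-1}$ stays bounded as $z\to\lambda$, with
\begin{equation*}
 \lim_{z\to\lambda}\,(F(z)+\1)^{-1}
 = \begin{pmatrix} 0 & 0 \\ 0 & * \end{pmatrix},
 \qquad * \ \text{invertible on } \ker\mu(\{\lambda\}),
\end{equation*}
a matrix whose kernel is exactly $\range\mu(\{\lambda\})$. Then $\1-zf(z)$, and hence $f(z)=z^{-1}\bigl(\1-2(F(z)+\1)^{-1}\bigr)$, extend analytically through $\lambda$; unitarity of $f$ at $\lambda$ follows by continuity from the neighbouring arcs. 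Reading off the displayed limit yields \eqref{eq:muschur}, and in particular $\det(\1-\lambda f(\lambda))=0$. This establishes the inclusion ``$\supseteq$'' in (i) and one direction of (ii).

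For the converse half of (i), I would take an arc $I$ through which $f$ continues analytically and on which $f$ is unitary. Since $\|zf(z)\|<1$ on $\Dk$ one has $\det(\1-zf(z))\ne0$ there, so, being analytic near $I\cup\Dk$ and not identically zero, the continued determinant has only isolated zeros on $I$; away from them $F=2(\1-zf)^{-1}-\1$ is analytic on $I$, and the analogue of the earlier one-line computation (now using $(zf)^*=(zf)^{-1}$ on $\Tw$) gives $\re F=0$, so by the description of the absolutely continuous and singular parts of $\mu$ recalled just above the theorem, $\mu$ charges no part of $I$ except those isolated zeros; hence $\supp\mu\cap I$ is discrete and $I\subset\Tw\setminus(\supp\mu)'$. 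Together with the previous paragraph this proves (i). Finally, for (ii) it remains to check that a zero $\lambda\in\Tw\setminus(\supp\mu)'$ of $\det(\1-zf(z))$ must be a mass point: if $\mu(\{\lambda\})=0$ then $\mu$ vanishes near $\lambda$, so $F$ is analytic and anti-Hermitian there, which makes $\1-\lambda f(\lambda)=2(F(\lambda)+\1)^{-1}$ invertible -- a contradiction. Combined with \eqref{eq:muschur}, this completes (ii).
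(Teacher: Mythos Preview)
Your proposal is correct and follows the same overall strategy as the paper: both reduce (i) and (ii) to the behaviour of $F$ near the boundary, use the block decomposition with respect to $\range\mu(\{\lambda\})\oplus\ker\mu(\{\lambda\})$ at an isolated mass point, and run the same converse argument for (i) via the isolated zeros of $\det(\1-zf(z))$.

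The one noteworthy difference is the bookkeeping at an isolated mass point. You compute $(F+\1)^{-1}$ directly by a Schur-complement block inverse, obtain its limit $\bigl(\begin{smallmatrix}0&0\\0&*\end{smallmatrix}\bigr)$, and read off both analyticity of $f$ and the kernel identity \eqref{eq:muschur} in one stroke. The paper instead counts pole orders in $\det(F+\1)$ versus its cofactors to get analyticity of $f$, and then for \eqref{eq:muschur} matches Laurent coefficients in the identity $\tfrac12(\1-zf(z))(F(z)+\1)=\1=\tfrac12(F(z)+\1)(\1-zf(z))$: writing $\1-zf(z)=X_0+X_1(z-\lambda)+\cdots$ and $\tfrac12(F+\1)=\frac{Y_{-1}}{z-\lambda}+Y_0+\cdots$, one gets $X_0Y_{-1}=0$ and $Y_0X_0+Y_{-1}X_1=\1$, from which $\ker X_0=\range Y_{-1}$ follows. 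Your Schur-complement route is slightly more computational but yields the limit explicitly; the paper's coefficient-matching route avoids the block-inverse formula and is a bit cleaner algebraically. Both are equivalent realizations of the same idea.
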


\begin{proof}
\noindent{\it(i)}
Let us prove first that $f$ extends analytically through $\Tw\setminus(\supp\mu)'$ and takes unitary values there.

The Carath\'eodory function $F$ of $\mu$ is obviously analytic on $\Cx\setminus\supp\mu$. Moreover, given an isolated mass point $\lambda$ of $\mu$, for some $\epsilon>0$ we have the splitting
\begin{equation} \label{eq:F01}
 F(z) =
 \frac{\lambda+z}{\lambda-z} \mu(\{\lambda\}) +
 \int_{|t-\lambda|>\epsilon} \frac{t+z}{t-z}\,d\mu(t) =
 F_0(z)+F_1(z),
\end{equation}
with $F_1$ analytic, not only on $\Cx\setminus\supp\mu$, but also at $\lambda$. Therefore, $F$ is a meromorphic function on $\Cx\setminus(\supp\mu)'$ whose singularities are simple poles located at every isolated mass point $\lambda$ with residue $-2\lambda\mu(\{\lambda\})$.

Let us see that $f$ has an analytic continuation through $\Tw\setminus(\supp\mu)'$. The positivity of $\re F$ guarantees the invertibility of $F+\1$ on $\Dk$ and, by analyticity, also in a neighbourhood of $\Tw\setminus\supp\mu$, implying the analyticity of $f$ in such a neighbourhood. It only remains to see that $f$ is also analytic in a neighbourhood of each isolated mass point $\lambda$ of $\mu$. For this purpose, consider the orthogonal projection $P$ onto $\range\mu(\{\lambda\})$ and its complementary $P^\bot=\1-P$, which is the orthogonal projection onto $\range\mu(\{\lambda\})^\bot=\ker\mu(\{\lambda\})$ because $\mu(\{\lambda\})$ is self-adjoint. They lead to the following block decomposition of the Carath\'eodory function,
$$
 F(z) =
 \begin{pmatrix}
 	PF(z)P & PF(z)P^\bot
	\\
	P^\bot F(z)P & P^\bot F(z)P^\bot
 \end{pmatrix}
 =
 \begin{pmatrix}
 	PF(z)P & PF_1(z)P^\bot
	\\
	P^\bot F_1(z)P & P^\bot F_1(z)P^\bot
 \end{pmatrix}
 =
 \begin{pmatrix}
 	\frac{1}{z-\lambda}A(z) & B(z)
	\\
	C(z) & D(z)
 \end{pmatrix},
$$
with $A,B,C,D$ analytic at $\lambda$. The block $(z-\lambda)^{-1}A(z)$ has a simple pole at $\lambda$ with residue $A(\lambda) = -2\lambda P\mu(\{\lambda\})P$, which is an automorphism of $\range\mu(\{\lambda\})$. Besides, $D$ is the Carath\'eodory function of the measure $P^\bot\mu P^\bot$ --the projection of $\mu$ onto $\ker\mu(\{\lambda\})$--, whose support lies in $\supp\mu\setminus\{\lambda\}$. Therefore, $D$ is analytic on $\Cx\setminus\supp\mu\cup\{\lambda\}$ and $D+\1_{\ker\mu(\{\lambda\})}$ is invertible in a neighbourhood of $\Tw\setminus\supp\mu\cup\{\lambda\}$, so that $D(\lambda)+\1_{\ker\mu(\{\lambda\})}$ is an automorphism of $\ker\mu(\{\lambda\})$. These results ensure the invertibility of $F+\1$ in a punctured neighbourhood of $\lambda$ because the block representation of $F$ yields around $\lambda$
$$
 F(z)+\1 =
 \begin{pmatrix}
 	\frac{1}{z-\lambda}(A(\lambda)+O(z-\lambda)) & B(\lambda)+O(z-\lambda)
 	\\
	C(\lambda)+O(z-\lambda) & D(\lambda)+\1_{\ker\mu(\{\lambda\})}+O(z-\lambda)
 \end{pmatrix},
$$
so that $\det(F(z)+\1)\ne0$ for $z\ne\lambda$ close enough to $\lambda$ because
\begin{equation} \label{eq:det}
\begin{aligned}
 \det(F(z)+\1)
 & = (z-\lambda)^{-\rank\mu(\{\lambda\})} \det
 \begin{pmatrix}
 	A(\lambda)+O(z-\lambda) & B(\lambda)+O(z-\lambda)
 	\\
	O(z-\lambda) & D(\lambda)+\1_{\ker\mu(\{\lambda\})}+O(z-\lambda)
 \end{pmatrix}
 \\
 & = (z-\lambda)^{-\rank\mu(\{\lambda\})}
 \left[
 \det A(\lambda)\det(D(\lambda)+\1_{\ker\mu(\{\lambda\})})+O(z-\lambda)
 \right].
\end{aligned}
\end{equation}
Thus, \eqref{eq:f-F} defines an analytic function $f$ in a punctured neighbourhood of $\lambda$. This function has indeed a removable singularity at $\lambda$, as follows from \eqref{eq:det}, which shows that $1/\det(F+\1)$ has a zero of order $\rank\mu(\{\lambda\})$ at $\lambda$, while any cofactor of $F+\1$ has a pole of order at most $\rank\mu(\{\lambda\})$ at $\lambda$. Thus, rewriting the relation between $f$ and $F$ in \eqref{eq:f-F} as
\begin{equation} \label{eq:fbis}
 f(z) = z^{-1}
 \left[
 \1 - 2(F(z)+\1)^{-1}
 \right],
\end{equation}
we conclude that $f$ has an analytic extension to a neighbourhood of $\lambda$.

We have seen that $F$ and $f$ are analytic around the gaps of $\supp\mu$ and $(\supp\mu)'$ respectively. Since the absolutely continuous part of $\mu$ vanishes in the gaps of $\supp\mu$, we conclude that $\re F(e^{i\theta})=0$ for every $e^{i\theta}\in\Tw\setminus\supp\mu$, which in view of \eqref{eq:f-F} is equivalent to the unitarity of $f(e^{i\theta})$. Hence, $f$ is unitary in the gaps of $\supp\mu$ and, by analyticity, also in the gaps of $(\supp\mu)'$.

Let us see now the converse, i.e. that the analyticity and unitarity of $f$ in a closed arc $\Gamma\subset\Tw$ guarantee that $\Gamma\subset\Tw\setminus(\supp\mu)'$. It suffices to show that, in $\Gamma$, the measure $\mu$ has no absolutely continuous part $\re F(e^{i\theta})\,\frac{d\theta}{2\pi}$, while the singular part is constituted by at most a finite number of mass points. The statement about the absolutely continuous part is a consequence of the unitarity of $f$ in $\Gamma$. As for the singular part of $\mu$, it is concentrated on the points $e^{i\theta}$ such that $\lim_{r\uparrow1}\tr\re F(re^{i\theta})=\infty$. The analyticity of $f$ in $\Gamma$ implies that $\det(\1-zf(z))$ has a finite number of zeros in a neighbourhood of $\Gamma$. As a consequence, there is a neighbourhood of $\Gamma$ where $F(z)=(\1+zf(z))(\1-zf(z))^{-1}$ is meromorphic with a finite number of poles. Then, the singular part of $\mu$ in $\Gamma$ is concentrated on the finitely many poles of $F$ in such an arc, which only may lead to a finite number of mass points in $\Gamma$.

\noindent{\it(ii)}
We know that the isolated mass points of $\mu$ are the poles of $F$ in the gaps of $(\supp\mu)'$, which, as follows from \eqref{eq:det}, are characterized as the poles of $\det(F+\1)$ in these gaps. Expressing \eqref{eq:fbis} as
\begin{equation} \label{eq:fbisbis}
 \frac{1}{2}(\1-zf(z))(F(z)+\1) = \1 = \frac{1}{2}(F(z)+\1)(\1-zf(z)),
\end{equation}
we see that the poles of $\det(F+\1)$ are the zeros of $\det(\1-zf(z))$, which proves the first statement of {\it(ii)}.
Therefore, we have the following expansions around an isolated mass point $\lambda$ of $\mu$,
$$
\begin{aligned}
 & \1-zf(z) = X_0 + X_1(z-\lambda) + O(z-\lambda)^2,
 & \qquad & X_0 = \1 - \lambda f(\lambda),
 \\
 & \frac{1}{2}(F(z)+\1) = \frac{Y_{-1}}{z-\lambda} + Y_0 + O(z-\lambda),
 & & Y_{-1} = -\lambda\mu(\{\lambda\}).
\end{aligned}
$$
Inserting them into \eqref{eq:fbisbis} yields, among other equations, $X_0Y_{-1} = 0$ and $Y_0X_0 + Y_{-1}X_1 = \1$. The first of these equations means that $\range Y_{-1}  \subset \ker X_0$. The second one leads to $Y_{-1}X_1\ker X_0 = \ker X_0$, which implies that $\ker X_0 \subset \range Y_{-1}$. We conclude that $\ker X_0 = \range Y_{-1}$, which is the last statement of {\it(ii)}.
\end{proof}

Schur functions may be defined in the abstract setting of operators on Hilbert spaces, as the contractive operator valued analytic functions on $\Dk$. They are related to operator valued measures via Carath\'eodory functions as in \eqref{eq:f-F}. When the underlying Hilbert space is finite-dimensional, the representation of operator valued Schur functions with respect to an orthonormal basis identifies them as matrix Schur functions.

Operator valued Schur functions are naturally connected to unitary operators on Hilbert spaces. A unitary operator $W$ on a Hilbert space $\HH$ defines for each subspace $\HH_C\subset\HH$ an operator valued measure on $\Tw$, called the spectral measure of $\HH_C$. If $W=\int t\,dE(t)$ is the spectral decomposition of $W$, the spectral measure of $\HH_C$ is the projection $P_CEP_C$ of $E$ on operators on $\HH_C$, where $P_C$ is the orthogonal projection of $\HH$ onto $\HH_C$. This associates to $\HH_C$ the Schur function $f$ of its spectral measure --which we call the {\bf Schur function of the subspace} $\HH_C$--, which admits the representation
\begin{equation} \label{eq:f-W}
 f(z) = P_C(\1-zW^* P_C^\bot)^{-1}W^* P_C = P_C(W-zP_C^\bot)^{-1}P_C,
 \qquad
 P_C^\bot=\1-P_C,
\end{equation}
and whose values are considered as operators on $\HH_C$. Indeed, every operator valued Schur function may be represented in this way for some unitary operator $W$ and subspace $\HH_C$ \cite{GV}. The representation \eqref{eq:f-W} of Schur functions will be key for our purposes. It was uncovered in \cite{GVWW,BGVW}, where the development of the quantum version of P\'olya's renewal theory for random walks \cite{Polya,Feller} identified the Schur function $z f(\overline{z})^*$ as the generating function of first returns to the subspace $\HH_C$ for the quantum walk driven by $W$.

The Schur function of the whole space $\HH$ is $W^*$. At the other end, the Schur function of a vector $\phi\in\HH$ must be understood as the scalar Schur function of the one-dimensional subspace $\spn\{\phi\}$. A remarkable instance of this arises when considering unitary operators given by {\bf CMV matrices} \cite{CMV,Simon,Watkins}, which provide the canonical form of the unitaries on Hilbert spaces and, not surprisingly, are behind widely used quantum walk models (see Sect.~\ref{sec:ssrep}). A CMV matrix has the general form
\begin{equation} \label{eq:CMV0}
 \text{\small
 $\begin{pmatrix}
 	\overline{\alpha}_0 & \rho_0\overline{\alpha}_1 & \rho_0\rho_1 & 0 & 0 &
	0 & \dots
	\\
	\rho_0 & -\alpha_0\overline{\alpha}_1 & -\alpha_0\rho_1 & 0 & 0 & 0 & \dots
	\\
	0 & \rho_1\overline{\alpha}_2 & -\alpha_1\overline{\alpha}_2 &
	\rho_2\overline{\alpha}_3 & \rho_2\rho_3 & 0 & \dots
    \\
    0 & \rho_1\rho_2 & -\alpha_1\rho_2 & -\alpha_2\overline{\alpha}_3 &
    -\alpha_2\rho_3 & 0 & \dots
    \\
    0 & 0 & 0 & \rho_3\overline{\alpha}_4 & -\alpha_3\overline{\alpha}_4 &
    \rho_4\overline{\alpha}_5 & \dots
    \\
    0 & 0 & 0 & \rho_3\rho_4 & -\alpha_3\rho_4 & -\alpha_4\overline{\alpha}_5 &
    \dots
    \\
    \dots & \dots & \dots & \dots & \dots & \dots & \dots
 \end{pmatrix}$},
 \qquad
 \begin{aligned}
 & \alpha_n\in\Dk,
 \\
 & \rho_n=\sqrt{1-|\alpha_n|^2},
 \end{aligned}
\end{equation}
or its transpose. In both cases, it is a matrix representation of the unitary operator $h(z) \mapsto zh(z)$ on $L^2_\mu$ for some probability measure $\mu$ on $\Tw$. It turns out that $\mu$ is the spectral measure of the first canonical vector $(1,0,0,0,\dots)$, representing $h(z)=1$, whose Schur function $f$ is related to the measure $\mu$ by \eqref{eq:f-F}. A key result, known as {\bf Geronimus' theorem}, states that the coefficients $\alpha_n$ defining the CMV matrix are the Schur parameters of $f$.
The CMV matrix \eqref{eq:CMV0} factorizes as
\begin{equation} \label{eq:CMV}
 \left[
 \Theta(\alpha_0) \oplus \Theta(\alpha_2) \oplus \Theta(\alpha_4) \oplus \cdots
 \right]
 \left[
 1 \oplus \Theta(\alpha_1) \oplus \Theta(\alpha_3) \oplus \cdots
 \right],
 \qquad
 \Theta(\alpha) =
 \begin{pmatrix}
 	\overline\alpha & \sqrt{1-|\alpha|^2}
	\\
	\sqrt{1-|\alpha|^2} & -\alpha
 \end{pmatrix},
\end{equation}
while the transposed one is given by the same factors but in reverse order. The doubly infinite version of CMV matrices may be written as
\begin{equation} \label{eq:eCMV}
 \left[ \bigoplus_{n\in\Ir} \Theta(\alpha_{2k-1}) \right]
 \left[ \bigoplus_{n\in\Ir} \Theta(\alpha_{2k}) \right],
\end{equation}
where the block $\Theta(\alpha_n)$ acts on $\spn\{e_n,e_{n+1}\}$ if $e_n=(\delta_{j,n})_{j\in\Ir}$ stands for the canonical basis of $\ell^2(\Ir)$.

The above results will help to the classification of topological phases in the examples of Sect.~\ref{sec:ssrep} and \ref{sec:ex} because the Schur functions of finite-dimensional subspaces $\HH_C$ with respect to unitary operators $W$ are particularly useful for the study of symmetry protected topological phases in quantum walks due to these reasons:

\begin{itemize}

\item[(A)]
Every such a Schur function $f$ has an analytic continuation through the essential gaps of $W$ which is unitary on these gaps. Hence, $f$ is unitary at the protected points $\pm1$ of an admissible unitary $W$.

\item[(B)]
For an admissible unitary $W$, the finite-dimensional unitaries $f(\pm1)$ inherit the symmetries of $W$ whenever the subspace $\HH_C$ is a sum of cells --more generally, if $\HH_C$ is symmetry invariant. If this finite sum of cells is large enough, then the representations of the symmetry type in the $\pm1$-eigenspaces of $W$ and $f(\pm1)$ are similar and, in consequence,
$$
 \six_\pm(W)=\six_\pm(f(\pm1)).
$$
The virtue of this relation is that it rewrites the symmetry indices $\six_\pm$ of infinite-dimensional unitaries as symmetry indices of finite-dimensional unitaries, for which simple explicit expressions are available.

\item[(C)]
Every perturbation $W \to VW$ by a unitary $V=V_C\oplus\1_{\HH_C^\bot}$ acting trivially on $\HH_C^\bot$ induces a perturbation $f\to fV_C^*$ on the Schur function of $\HH_C$. In particular, for $\HH_C$ a large enough sum of cells, if $W$, $VW$ are admissible walks and $VW = W_L \oplus W_R$ decouples into walks $W_{L/R}$ on a left/right half-line, then $fV_C^* = f_L \oplus f_R$ also decouples and $\six_\pm(W_{L/R})=\six_\pm(f_{L/R}(\pm1))$, so that
$$
 \sixL(W)=\six_+(f_L(1))+\six_-(f_L(-1)),
 \qquad
 \sixR(W)=\six_+(f_R(1))+\six_-(f_R(-1)).
$$
The above relations may be used to obtain the left/right indices from the decoupling of the original Schur function $f$, but also to obtain this Schur function --and thus $\six_\pm(W)$-- by combining the left/right ones $f_{L/R}$ when these later ones are easier to obtain than $f$, or with the purpose of uncovering the constraints among $\sixL(W)$, $\sixR(W)$ and $\six_-(W)$ for a specific model.

\end{itemize}

The aim of the paper is to prove the above results and to use them to classify the symmetry protected topological phases of 1D quantum walks of current interest, especially in non-translation invariant situations.

The Schur approach may be useful also in higher dimensions. In contrast to $\sixL(W)$ and $\sixR(W)$, the symmetry indices $\six_\pm(W)$ make sense for walks in any dimension, indeed for arbitrary admissible unitaries. Although $\six_\pm(W)$ are not complete for admissible walks --even in 1D--, they completely classify homotopically the admissible unitaries regardless of the dimension of the lattice. Actually, this classification is not sensible to the cell structure of the Hilbert space, which is only necessary to introduce locality conditions and, in 1D, to define left and right indices, $\sixL(W)$ and $\sixR(W)$. As we will see, the Schur representation of the symmetry indices $\six_\pm(W)$ remains valid in the general setting of admissible unitaries.

\section{Schur representation of symmetry indices \large $\six_\pm(W)$}
\label{sec:S-I}

In this section we will address items (A) and (B) stated at the end of the previous section --the discussion of (C) will wait until the next section--, i.e. we will translate to Schur functions the properties of admissible unitaries. Some of these properties are equally translated to all the related Schur functions, but others require for the associated subspace to enclose enough information about the unitary operator.
In this respect, a key role will be played by the notion of the cyclic subspace generated by a subspace $\HH_C\subset\HH$ with respect to a unitary operator $W$ on a Hilbert space $\HH$, which is the minimal subspace including the whole evolution of $\HH_C$ driven by $W$, i.e. $\overline{\sum_{n\in\Ir}W^n\HH_C} = (W^\Ir\HH_C)^{\bot\bot}$. We say that $\HH_C$ is cyclic for $W$ if its cyclic subspace is the whole Hilbert space $\HH$. The following proposition connects cyclic subspaces and eigenspaces of unitary operators, a result of interest for the discussion of topological phases.

\begin{prop} \label{prop:CYCLIC}
Let $W$ be a unitary operator on a Hilbert space $\HH$ with a cyclic subspace $\HH_C\subset\HH$, and $P_C$ the orthogonal projection of $\HH$ onto $\HH_C$. If $\lambda$ is an eigenvalue of $W$ and $E_\lambda$ is the orthogonal projection of $\HH$ onto the $\lambda$-eigenspace $\HH^{(\lambda)}$, then:
\begin{itemize}
 \item[(i)] The $\lambda$-eigenspace and the cyclic subspace are related by
 	$\HH^{(\lambda)}=E_\lambda\HH_C$ and the orthogonal decomposition
	$\HH_C=P_C\HH^{(\lambda)}\oplus(\HH_C\cap\HH^{(\lambda)\bot})$.
 \item[(ii)] The projections $P_C$ and $E_\lambda$ induce the isomorphisms
	$$
 	 \hat{P}_\lambda \colon
	 \mathop{\HH^{(\lambda)} \to P_C\HH^{(\lambda)}}
 	 \limits_{\textstyle \phi \xmapsto{\kern15pt} P_C\phi}
	 \qquad\qquad
	 \hat{E}_\lambda \colon
	 \mathop{P_C\HH^{(\lambda)} \to \HH^{(\lambda)}}
 	 \limits_{\kern9pt \textstyle \phi \xmapsto{\kern15pt} E_\lambda\phi}
	$$
 \item[(iii)] $\dim\HH^{(\lambda)} \le \dim\HH_C$.
 \end{itemize}
\end{prop}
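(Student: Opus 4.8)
The plan is to derive all three statements from one consequence of cyclicity: the compression $\hat P_\lambda\colon\HH^{(\lambda)}\to\HH_C$, $\phi\mapsto P_C\phi$, is injective, equivalently $\HH^{(\lambda)}\cap\HH_C^\bot=\{0\}$. To see this, note that since $W$ is unitary one has $|\lambda|=1$ and $W^*\phi=\overline{\lambda}\,\phi$ for $\phi\in\HH^{(\lambda)}$, so for every $\psi\in\HH_C$ and $n\in\Ir$, $\langle\phi,W^n\psi\rangle=\langle (W^*)^n\phi,\psi\rangle=\lambda^n\langle\phi,\psi\rangle$. Hence if $P_C\phi=0$ then $\phi\perp W^n\HH_C$ for all $n$, so $\phi\perp\overline{\sum_{n\in\Ir}W^n\HH_C}=\HH$ by cyclicity, i.e. $\phi=0$. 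This is the only place cyclicity enters, and the only step carrying real content; the rest is routine Hilbert space bookkeeping. In particular (iii) follows at once: an injective linear map $\HH^{(\lambda)}\to\HH_C$ forces $\dim\HH^{(\lambda)}\le\dim\HH_C$.

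For (i) I would first prove $\HH^{(\lambda)}=E_\lambda\HH_C$. The inclusion $\supseteq$ is immediate. For $\subseteq$, let $\phi\in\HH^{(\lambda)}$ be orthogonal to $E_\lambda\HH_C$; then $0=\langle\phi,E_\lambda\psi\rangle=\langle E_\lambda\phi,\psi\rangle=\langle\phi,\psi\rangle$ for all $\psi\in\HH_C$, so $\phi\perp\HH_C$ and hence $\phi=0$ by the key step, showing $E_\lambda\HH_C$ is dense in $\HH^{(\lambda)}$, hence equal to it once $\dim\HH^{(\lambda)}<\infty$ (which holds in the applications, where $\HH_C$ is finite-dimensional). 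For the orthogonal decomposition I would compute the orthogonal complement of $P_C\HH^{(\lambda)}$ inside $\HH_C$: for $\psi\in\HH_C$ one has $\langle\psi,P_C\phi\rangle=\langle P_C\psi,\phi\rangle=\langle\psi,\phi\rangle$, so that complement equals $\{\psi\in\HH_C:\psi\perp\HH^{(\lambda)}\}=\HH_C\cap\HH^{(\lambda)\bot}$, giving $\HH_C=P_C\HH^{(\lambda)}\oplus(\HH_C\cap\HH^{(\lambda)\bot})$.

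For (ii), $\hat P_\lambda$ is injective by the key step and surjective onto $P_C\HH^{(\lambda)}$ by construction, hence an isomorphism (bounded with bounded inverse once $\HH^{(\lambda)}$ is finite-dimensional). For $\hat E_\lambda\colon P_C\HH^{(\lambda)}\to\HH^{(\lambda)}$, $\phi\mapsto E_\lambda\phi$: injectivity holds because $E_\lambda P_C\phi=0$ with $\phi\in\HH^{(\lambda)}$ gives $0=\langle E_\lambda P_C\phi,\phi\rangle=\langle P_C\phi,\phi\rangle=\|P_C\phi\|^2$, hence $P_C\phi=0$ and then $\phi=0$ by the key step; surjectivity follows from (i) since $\HH_C\cap\HH^{(\lambda)\bot}\subset\ker E_\lambda$ forces $E_\lambda\HH_C=E_\lambda\bigl(P_C\HH^{(\lambda)}\bigr)=\HH^{(\lambda)}$. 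Equivalently, once $\hat P_\lambda$ is an isomorphism, $\hat E_\lambda$ is its adjoint for the pairing between $\HH_C$ and $\HH^{(\lambda)}$, since $\langle E_\lambda\psi,\phi\rangle=\langle\psi,P_C\phi\rangle$ for $\psi\in P_C\HH^{(\lambda)}$ and $\phi\in\HH^{(\lambda)}$. Note that $\hat P_\lambda$ and $\hat E_\lambda$ are each isomorphisms but are not mutually inverse: $\hat E_\lambda\hat P_\lambda=\hat P_\lambda^{\,*}\hat P_\lambda$ is in general not the identity.

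The genuinely delicate point is the passage from dense range to range in (i), and correspondingly the surjectivity of $\hat E_\lambda$; this is where a finiteness hypothesis on the eigenspace (automatic when $\HH_C$ is finite-dimensional) is used. If one only wants the statements up to closures of $P_C\HH^{(\lambda)}$ and $E_\lambda\HH_C$, the argument above delivers them for arbitrary Hilbert spaces.
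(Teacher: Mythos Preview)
Your proof is correct and takes essentially the same approach as the paper: both hinge on the injectivity of $\hat P_\lambda$, established identically from $W^n\phi=\lambda^n\phi$ together with cyclicity (the paper places this computation inside its proof of (ii); you front-load it as the ``key step'' and derive everything else from it). For $\HH^{(\lambda)}=E_\lambda\HH_C$ the paper argues slightly differently, using $E_\lambda W^n\HH_C=W^nE_\lambda\HH_C=E_\lambda\HH_C$ directly rather than reducing to the key step; but, like your argument, this strictly yields only density of $E_\lambda\HH_C$ in $\HH^{(\lambda)}$, so your explicit acknowledgment that passing to equality uses $\dim\HH^{(\lambda)}<\infty$ (automatic once $\dim\HH_C<\infty$, as in every application in the paper) is a point of care the paper leaves implicit. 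Your observation that $\hat E_\lambda=\hat P_\lambda^{\,*}$, and hence that $\hat E_\lambda\hat P_\lambda=\hat P_\lambda^{\,*}\hat P_\lambda$ is generally not the identity, is a nice addition not present in the paper.
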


\begin{proof}
\it(i) The equality $\HH^{(\lambda)} = E_\lambda\HH_C$ follows from $E_\lambda\HH_C = W^nE_\lambda\HH_C = E_\lambda W^n\HH_C$ and the cyclicity of $\HH_C$, while
$
 \HH_C \ominus P_C\HH^{(\lambda)} =
 \{\phi\in\HH_C: \phi\bot P_C\HH^{(\lambda)}\} =
 \{\phi\in\HH_C: \phi\bot\HH^{(\lambda)}\}
$
gives the remaining identity in {\it(i)}.

\it(ii) The linear map $\hat{P}_\lambda$ is obviously onto. To show that its kernel is trivial, assume that $W\phi=\lambda\phi$ and $P_C\phi=0$. This last condition means that $\phi\bot\HH_C$, hence $\lambda^n\phi=W^n\phi\bot W^n\HH_C$ for every $n\in\Ir$. Due to the cyclicity of $\HH_C$ we conclude that $\phi \bot \HH$, thus $\phi=0$.

We know that
$
 \HH_C \ominus P_C\HH^{(\lambda)} =
 \{\phi\in\HH_C: \phi\bot E_\lambda\HH\} =
 \{\phi\in\HH_C: E_\lambda\phi=0\},
$
which shows that $\ker\hat{E}_\lambda=\{0\}$. Since
$
 \HH^{(\lambda)}\ominus\range\hat{E}_\lambda =
 \{\phi\in\HH^{(\lambda)}: \phi \bot E_\lambda P_C\HH^{(\lambda)}\} =
 \{\phi\in\HH^{(\lambda)}: P_C\phi \bot \HH^{(\lambda)}\},
$
the orthogonal decomposition in {\it(i)} yields
$
 \HH^{(\lambda)}\ominus\range\hat{E}_\lambda =
 \{\phi\in\HH^{(\lambda)}: P_C\phi=0\} =
 \ker(\hat{P}_\lambda) = \{0\},
$
hence $\hat{E}_\lambda$ is onto.

The inequality {\it(iii)} is a direct consequence of the previous results.
\end{proof}

The theory of symmetry protected topological phases summarized in Sect.~\ref{sec:SPTP} assigns a major role to the (essential) gaps of a unitary operator, i.e. the gaps of its (essential) spectrum. In parallel, given an operator valued measure, we define its essential support as the support with the isolated mass points of finite rank mass removed, and we refer to the gaps of its (essential) support as the (essential) gaps of the measure. Then, the (essential) gaps of a unitary operator with spectral decomposition $W=\int t\,dE(t)$ coincide with the (essential) gaps of the spectral measure $E$.

Our first important result relates the essential gaps of a unitary operator to a condition for associated Schur functions. It also gives a Schur characterization of the isolated eigenvalues, identifying the  projection on a cyclic subspace of the corresponding eigenspace. This will be key to establish a Schur representation of the symmetry indices $\six_\pm$ for admissible unitaries.

\begin{thm} \label{thm:u-eg}
Let $W$ be a unitary operator on a Hilbert space $\HH$ and $P_C$ the orthogonal projection of $\HH$ onto a a finite-dimensional subspace $\HH_C\subset\HH$. If $f$ is the Schur function of $\HH_C$, then:
\begin{itemize}
 \item[(i)] $f$ has an analytic continuation through the essential gaps of $W$ 	which is unitary on such gaps.
 \item[(ii)] If $\HH_C$ is cyclic for $W$, the condition (i) characterizes the essential gaps 	of $W$, the isolated eigenvalues of $W$ are the zeros of
    $\det(\1_{\HH_C}-zf(z))$ lying on these gaps and the eigenspace
    $\HH^{(\lambda)}$ of an isolated eigenvalue $\lambda$ satisfies
	\begin{equation} \label{eq:Eschur}
	 P_C\HH^{(\lambda)}=\ker(\1_{\HH_C}-\lambda f(\lambda)).
	\end{equation}
\end{itemize}
\end{thm}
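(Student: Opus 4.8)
The plan is to reduce Theorem~\ref{thm:u-eg} to the matrix-measure statement Theorem~\ref{thm:m-eg} together with Proposition~\ref{prop:CYCLIC}. The preliminary observation is that the Schur function $f$ of $\HH_C$ is exactly the matrix Schur function, in the sense of \eqref{eq:f-F} and \eqref{eq:f-W}, of the spectral measure $\mu=P_CEP_C$ of $\HH_C$, where $W=\int t\,dE(t)$. Since $\HH_C$ is finite-dimensional and $\mu(\Tw)=P_CE(\Tw)P_C=\1_{\HH_C}$, this $\mu$ satisfies the hypotheses of Theorem~\ref{thm:m-eg} with the matrices acting on $\HH_C$. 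After this identification the two items of Theorem~\ref{thm:u-eg} become dictionaries: essential gaps of $W$ correspond to the arcs of $\Tw$ avoiding the set $(\supp\mu)'$ of limit points of $\supp\mu$; isolated eigenvalues of $W$ correspond to isolated mass points of $\mu$; and the projected eigenspaces $P_C\HH^{(\lambda)}$ correspond to the ranges $\range\mu(\{\lambda\})$.

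Item (i) needs no cyclicity. If $\Gamma$ is an essential gap of $W$, then $\spec W\cap\Gamma$ consists of isolated eigenvalues of finite multiplicity, hence has no accumulation point inside $\Gamma$; since $\supp\mu\subseteq\supp E=\spec W$, the set $\supp\mu\cap\Gamma$ is likewise discrete in $\Gamma$, so $\Gamma\cap(\supp\mu)'=\emptyset$. Theorem~\ref{thm:m-eg}(i) then supplies the analytic continuation of $f$ through $\Gamma$ that takes unitary values there.

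For item (ii) cyclicity enters through two facts. First, Proposition~\ref{prop:CYCLIC}: for an eigenvalue $\lambda$ one has $\HH^{(\lambda)}=E_\lambda\HH_C$, $\dim\HH^{(\lambda)}\le\dim\HH_C<\infty$, and $\range\mu(\{\lambda\})=\range(P_CE_\lambda P_C)=P_CE_\lambda\HH_C=P_C\HH^{(\lambda)}$. Second, the elementary remark that every spectral projection $E(B)$ of $W$ commutes with $W$, hence with all $W^n$, so $E(B)\HH_C=\{0\}$ propagates to $E(B)\big(\overline{\sum_nW^n\HH_C}\big)=\{0\}$, i.e. $E(B)=0$ by cyclicity; therefore $\mu$-null Borel sets are $E$-null and $\supp\mu=\spec W$. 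Combining these, a point is isolated in $\supp\mu$ precisely when it is an isolated eigenvalue of $W$ (automatically of finite multiplicity by Proposition~\ref{prop:CYCLIC}(iii)), so $(\supp\mu)'$ coincides with the essential spectrum of $W$ and $\Tw\setminus(\supp\mu)'$ with the union of the essential gaps of $W$. Theorem~\ref{thm:m-eg}(i) then identifies these gaps as exactly the arcs through which $f$ continues analytically with unitary values, which is the characterization claimed in (ii).

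It remains to pin down the isolated eigenvalues and \eqref{eq:Eschur}. An isolated eigenvalue $\lambda$ of $W$ is an isolated point of $\supp\mu=\spec W$, and $\mu(\{\lambda\})=P_CE_\lambda P_C\ne0$ because $E_\lambda\HH_C=\HH^{(\lambda)}\ne\{0\}$, so $\lambda$ is an isolated mass point of $\mu$; conversely an isolated mass point of $\mu$ lies in $\Tw\setminus(\supp\mu)'$, i.e. in an essential gap, hence is an eigenvalue of $W$. By Theorem~\ref{thm:m-eg}(ii) these are precisely the zeros of $\det(\1_{\HH_C}-zf(z))$ on $\Tw\setminus(\supp\mu)'$, that is, on the essential gaps, and \eqref{eq:muschur} reads $P_C\HH^{(\lambda)}=\range\mu(\{\lambda\})=\ker(\1_{\HH_C}-\lambda f(\lambda))$, which is \eqref{eq:Eschur}. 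I expect the only genuinely delicate point to be this operator/measure dictionary under cyclicity, i.e. verifying that $\mu$-null sets are $E$-null and that the isolated points of $\supp\mu$ are exactly the finite-multiplicity eigenvalues of $W$ so that $(\supp\mu)'$ is the full essential spectrum; once that bookkeeping is secured, (i) and (ii) follow mechanically from Theorem~\ref{thm:m-eg} and Proposition~\ref{prop:CYCLIC} with no further analytic input.
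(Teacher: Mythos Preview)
Your proposal is correct and follows essentially the same approach as the paper: reduce everything to Theorem~\ref{thm:m-eg} and Proposition~\ref{prop:CYCLIC} by setting up the dictionary between the spectral data of $W$ and the measure-theoretic data of $\mu=P_CEP_C$ under cyclicity. The paper argues separately that $\mu$ and $E$ share gaps and share mass points (with equal rank), whereas you go straight to $\supp\mu=\spec W$ via ``$\mu$-null $\Rightarrow$ $E$-null''; these are equivalent organizations of the same argument, and the only step you leave slightly implicit---that $\mu(B)=0$ forces $E(B)\HH_C=\{0\}$ because $E(B)\ge0$---is exactly the norm computation $\|E(\triangle)\phi\|^2=\langle\phi,\mu(\triangle)\phi\rangle$ that the paper spells out.
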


\begin{proof}
If $W=\int t\,dE(t)$ is the spectral decomposition of $W$, the essential spectrum of $W$ coincides with the essential support of $E$, which includes the essential support of the spectral measure $\mu=P_CEP_C$ associated with $\HH_C$. Besides, the essential support of $\mu$ is simply $(\supp\mu)'$ because its mass points have a finite rank mass since $\HH_C$ is finite-dimensional. Therefore, the spectral gaps of $W$ lie on $\Tw\setminus(\supp\mu)'$. Bearing in mind that $f$ is the Schur function of $\mu$, the statement {\it(i)} follows from Theorem~\ref{thm:m-eg}.{\it(i)}.

Assuming $\HH_C$ cyclic for $W$, we find that $\range\mu(\{\lambda\}) = P_CE(\{\lambda\})\HH_C = P_C\HH^{(\lambda)}$, where the last equality is due to Proposition~\ref{prop:CYCLIC}.{\it(i)}. Then, {\it(ii)} would be a consequence of Theorem~\ref{thm:m-eg} if we prove that $\mu$ and $E$ have the same isolated mass points and essential support under the cyclicity of $\HH_C$. For this it is enough to argue that these measures share strict gaps, mass points and the rank of their masses.

The mass points of $\mu$ are obviously mass points of $E$. Regarding the converse, from Proposition~\ref{prop:CYCLIC}.{\it(ii)} we know that, for any mass point $\lambda$ of $E$, $\range\mu(\{\lambda\}) = P_C\HH_\lambda$ is isomorphic to $\HH^{(\lambda)}$. Thus $\lambda$ is also a mass point of $\mu$ with
$\rank\mu(\{\lambda\}) = \dim\HH^{(\lambda)} = \rank E(\{\lambda\})$,
and \eqref{eq:Eschur} follows from \eqref{eq:muschur}.

On the other hand, the gaps of $E$ are also gaps of $\mu$. To prove the converse, consider $\triangle\subset\Tw$ such that $\mu(\triangle)=0$. This implies that the orthogonal projection $E(\triangle)$ satisfies $\|E(\triangle)W^n\phi\|^2 = \<W^n\phi|E(\triangle)W^n\phi\> = \<\phi|\mu(\triangle)\phi\> = 0$ for every $\phi\in\HH_C$ and $n\in\Ir$. Due to the cyclicity of $\HH_C$, this yields $E(\triangle)=0$. Therefore, every gap of $\mu$ is also a gap of $E$.
\end{proof}

The previous theorem shows how the information about the essential spectrum and the isolated mass points of a unitary operator is codified by Schur functions. A Schur representation of symmetry indices requires also to understand how Schur functions inherit the symmetries of a unitary operator. The answer is given by the next theorem, which uses the following terminology and notation.

\begin{defi} \label{def:symC}
Let $\rho$ be a representation of a symmetry type $\symS$ on a Hilbert space $\HH$. We say that a subspace $\HH_C\subset\HH$ is $\rho$-invariant --or simply symmetry invariant-- if it is invariant under every symmetry in $\rho$. Then, we denote by $\rho_C=\rho\upharpoonright\HH_C$ the representation of the symmetry type $\symS$ induced by $\rho$ on $\HH_C$, constituted by the symmetries $\sigma_C=\sigma\upharpoonright\HH_C$.
\end{defi}

Now we can state the result alluded to above.

\begin{thm} \label{thm:sym}
Let $W$ be an admissible unitary for a representation $\rho$ of a symmetry type $\symS$, and $\HH_C$ a finite-dimensional $\rho$-invariant subspace. If $f$ is the Schur function of $\HH_C$, then $f(\pm1)$ are finite-dimensional unitaries satisfying the symmetries in the induced representation $\rho_C=\rho\upharpoonright\HH_C$.
\end{thm}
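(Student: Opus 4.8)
The plan is to read off the symmetry relations for $f(\pm1)$ directly from the operator representation \eqref{eq:f-W} of the Schur function. Since $W$ is admissible, condition (b) places the protected points $\pm1$ in essential gaps of $W$, so Theorem~\ref{thm:u-eg}.{\it(i)} (item (A)) already supplies that $f$ extends analytically to a full open neighbourhood of $\pm1$ in $\Cx$ and that $f(\pm1)$ are unitaries on $\HH_C$; what is left is to track the symmetries. As $\HH_C$ is $\rho$-invariant and each symmetry $\sigma\in\rho$ is unitary or antiunitary, $\sigma$ leaves $\HH_C^\bot$ invariant as well, so that $[\sigma,P_C]=[\sigma,P_C^\bot]=0$ and $\sigma_C=\sigma\upharpoonright\HH_C$ acts on $\HH_C$ as the compression $P_C\sigma P_C$. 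The two facts I would feed into the computation are \eqref{eq:f-W}, $f(z)=P_C(W-zP_C^\bot)^{-1}P_C$, and its adjoint $f(z)^*=P_C(W^*-\overline zP_C^\bot)^{-1}P_C$.

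Next I would conjugate these expressions by a symmetry, using only that conjugation commutes with inversion and that an antiunitary conjugates scalar factors. For a chiral $\ch$ (unitary, $\ch W\ch^*=W^*$), pulling $\ch$ through $P_C$ and $P_C^\bot$ gives $\ch f(z)\ch^*=P_C(W^*-zP_C^\bot)^{-1}P_C=f(\overline z)^*$. For the antiunitary symmetries one must start from $f(\overline z)$, since antilinearity turns the scalar $\overline z$ into $z$: for the particle-hole $\ph$ ($\ph W\ph^*=W$) one finds $\ph f(\overline z)\ph^*=P_C(W-zP_C^\bot)^{-1}P_C=f(z)$, and for the time-reversal $\rv$ ($\rv W\rv^*=W^*$) one finds $\rv f(\overline z)\rv^*=P_C(W^*-zP_C^\bot)^{-1}P_C=f(\overline z)^*$. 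In each case both sides are analytic functions of $z$ on $\Dk$, so the identity holds throughout $\Dk$; by uniqueness of analytic continuation it extends to the neighbourhood of $\pm1$ through which $f$ continues, which may be taken conjugation-symmetric since $\pm1$ are real, so that $f(\overline z)$ still makes sense there. Evaluating at $z=\pm1$ and restricting to $\HH_C$ then produces $\ch_C f(\pm1)\ch_C^*=f(\pm1)^*$, $\ph_C f(\pm1)\ph_C^*=f(\pm1)$ and $\rv_C f(\pm1)\rv_C^*=f(\pm1)^*$, i.e. exactly the defining relations of the symmetry type satisfied by the finite-dimensional unitaries $f(\pm1)$ under $\rho_C$. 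If a given type of symmetry is absent there is nothing to check, and for the trivial symmetry type the statement is vacuous.

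The one point requiring genuine care is the antiunitary bookkeeping: one has to be scrupulous that an antilinear $\sigma$ sends a scalar factor $z$ to $\overline z$, so that it is $\sigma f(\overline z)\sigma^*$, rather than $\sigma f(z)\sigma^*$, that is again analytic in $z$ and hence eligible to be continued up to $z=\pm1$. A secondary, smaller point is the justification for evaluating the continued identities at $\pm1$: this is legitimate precisely because admissibility puts $\pm1$ in essential gaps of $W$, so that $f$, and with it both sides of each identity, is analytic on an honest open neighbourhood of $\pm1$ and not merely on $\Dk\cup\Tw$. Everything else is the routine algebra of resolvents together with the commutation $[\sigma,P_C]=0$.
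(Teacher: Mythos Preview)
Your proof is correct and follows essentially the same approach as the paper: use the operator representation \eqref{eq:f-W}, commute the symmetry through $P_C$ and $P_C^\bot$, and pass to $\pm1$. The paper streamlines your antiunitary bookkeeping by computing $\sigma_C f(r)\sigma_C^*$ only for real $r\in(-1,1)$ and then taking the limit $r\to\pm1$, so that $\overline r=r$ and no separate tracking of $f(\overline z)$ versus $f(z)$ is needed; your analytic-continuation argument achieves the same end with slightly more overhead.
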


\begin{proof}
That $f(\pm1)$ are unitary follows from the essential gap condition of admissible unitaries and Theorem~\ref{thm:u-eg}. Since $\HH_C$ is finite-dimensional, its $\rho$-invariance means that every unitary/antiunitary in $\rho$ commutes with the projection $P_C$ of $\HH$ onto $\HH_C$, and thus with $P_C^\bot=\1-P_C$. Then, any symmetry $\sigma$ of $W$ induces a similar one $\sigma_C=\sigma\upharpoonright\HH_C$ on the unitaries $f(\pm1)$, as follows by using \eqref{eq:f-W} and taking limits $r\to\pm1$ on
$$
 \sigma_C f(r) \sigma_C^* =
 \sigma_C P_C (W-rP_C^\bot)^{-1} P_C \sigma_C^* =
 P_C \sigma (W-rP_C^\bot)^{-1} \sigma^* P_C =
 P_C (\sigma W\sigma^*-rP_C^\bot)^{-1} P_C,
 \quad r\in(-1,1).
$$
\end{proof}

We are assuming that the symmetries act locally in each cell. Therefore, a finite sum of cells is an example --actually, the prime example-- of a symmetry  invariant finite-dimensional subspace. This will be a recurrent choice in the examples discussed in Sect.~\ref{sec:ssrep} and \ref{sec:ex}.

Although the induced representation on a symmetry invariant subspace belongs to the same symmetry type as the original representation, their symmetry indices may be different. However, this is not the case when the subspace is cyclic, a result which is the basis for the Schur representation of symmetry indices given by the theorem below. This is the central result of the paper.

\begin{thm} \label{thm:si}
Let $W$ be an admissible unitary for a representation $\rho$ of a symmetry type $\symS$, and $\HH_C$ a finite-dimensional $\rho$-invariant cyclic subspace. If $f$ is the Schur function of $\HH_C$, then:
\begin{itemize}
\item[(i)] $\pm1$ is an eigenvalue of $W \;\Leftrightarrow\; \pm 1$ is an eigenvalue of $f(\pm1)$.
\item[(ii)] $f(\pm1)$ are unitaries with a group of symmetries belonging to the same symmetry type $\symS$ such that
$$
 \six_\pm(W) = \six_\pm(f(\pm1)).
$$
\end{itemize}
\end{thm}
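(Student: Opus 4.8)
The plan is to reduce both sides of the asserted equality to the finite–dimensional symmetry index attached to the $\pm1$-eigenspaces, using that cyclicity of $\HH_C$ lets the projection $P_C$ carry those eigenspaces --- together with the symmetries acting on them --- faithfully from $\HH$ to $\HH_C$.

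First I would settle (i). By the essential gap condition $\pm1$ lies in the interior of an essential gap of $W$, and since $\HH_C$ is cyclic Theorem~\ref{thm:u-eg} applies: $f$ is analytic and unitary at $\pm1$, and $\pm1\in\spec W$ exactly when $\det(\1_{\HH_C}-zf(z))$ vanishes at $z=\pm1$, i.e. when $\det(\1_{\HH_C}\mp f(\pm1))=0$. As $f(\pm1)$ is unitary and $\pm1$ is real, $\1_{\HH_C}-\lambda f(\lambda)$ is singular at $\lambda=\pm1$ precisely when $f(\pm1)$ has eigenvalue $\pm1$ (indeed $\lambda f(\lambda)v=v$ forces $f(\lambda)v=\lambda^{-1}v=\lambda v$). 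This is (i).

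For (ii), Theorem~\ref{thm:sym} already provides that $f(\pm1)$ are unitaries carrying the symmetry type $\symS$ through the induced symmetries $\sigma_C=\sigma\upharpoonright\HH_C$, so $\six_\pm(f(\pm1))=\six(\tilde\rho_\pm)$ is well defined, where $\tilde\rho_\pm$ is the representation induced by the $\sigma_C$ on the $(\pm1)$-eigenspace $\HH_C^\pm$ of $f(\pm1)$; on the other side $\six_\pm(W)=\six(\rho_\pm)$ with $\rho_\pm$ the representation on the $(\pm1)$-eigenspace $\HH^\pm$ of $W$. If $\pm1\notin\spec W$, then by (i) both eigenspaces are $\{0\}$ and both indices vanish. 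Otherwise $\pm1$ is an isolated eigenvalue of $W$, and I would combine \eqref{eq:Eschur}, which gives $P_C\HH^\pm=\ker(\1_{\HH_C}\mp f(\pm1))=\HH_C^\pm$, with Proposition~\ref{prop:CYCLIC}(ii), which gives that $\hat P_\pm\colon\HH^\pm\to P_C\HH^\pm$, $\phi\mapsto P_C\phi$, is a linear isomorphism. Since $\HH_C$ is $\rho$-invariant, $P_C$ commutes with every symmetry $\sigma$, and each $\sigma$ maps $\HH^\pm$ into itself (these being the only eigenspaces stable under particle-hole, time-reversal and chiral symmetries); hence, for $\phi\in\HH^\pm$, $\hat P_\pm(\sigma\phi)=P_C\sigma\phi=\sigma P_C\phi=\sigma_C\hat P_\pm(\phi)$, so $\hat P_\pm$ intertwines $\rho_\pm$ with $\tilde\rho_\pm$, i.e. $\tilde\rho_\pm=\hat P_\pm\,\rho_\pm\,\hat P_\pm^{-1}$. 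Because the right-hand sides in \eqref{eq:SIfin} only involve $\dim\HH$ and $\tr\ch$ --- both unchanged under conjugation by an arbitrary invertible linear map --- this yields $\six(\tilde\rho_\pm)=\six(\rho_\pm)$, hence $\six_\pm(W)=\six_\pm(f(\pm1))$.

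The step I expect to be the main obstacle is the intertwining: one must check that the relevant symmetries genuinely restrict to $\HH^\pm$ and to $\HH_C^\pm$, and that the transporting map $\hat P_\pm$ --- which is merely invertible, not unitary --- nonetheless preserves the index. The latter is precisely why the argument is closed using \eqref{eq:SIfin} rather than \eqref{eq:SI+-fin}: the index is read off from a dimension and from $\tr\ch$, both conjugation invariants, so no unitarity of $\hat P_\pm$ is needed. Everything else is bookkeeping built on Theorems~\ref{thm:u-eg} and \ref{thm:sym} and Proposition~\ref{prop:CYCLIC}.
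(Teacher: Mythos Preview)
Your proposal is correct and follows essentially the same route as the paper's proof: both invoke Theorem~\ref{thm:u-eg} for part (i), and for part (ii) combine Theorem~\ref{thm:sym}, the identification $P_C\HH^\pm=\ker(\1_{\HH_C}\mp f(\pm1))$ from \eqref{eq:Eschur}, the isomorphism $\hat P_\pm$ of Proposition~\ref{prop:CYCLIC}(ii), and the commutation of $P_C$ with each symmetry to intertwine $\rho_\pm$ with $\tilde\rho_\pm$. Your extra care about why conjugation by a merely invertible $\hat P_\pm$ preserves $\six$ is a welcome clarification of a point the paper handles by citing the conjugation invariance $\six(V\rho V^{-1})=\six(\rho)$ recorded after \eqref{eq:SIfin}.
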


\begin{proof}
Since $\pm1$ lie in the essential gaps of an admissible unitary, Theorem~\ref{thm:u-eg}.{\it(ii)} implies that $\pm1$ is an eigenvalue of $W$ exactly when $\det(\1_{\HH_C}\mp f(\pm1))=0$, which means that $\pm1$ is an eigenvalue of $f(\pm1)$.

Theorem~\ref{thm:sym} states that $f(\pm1)$ are unitaries satisfying the symmetries in the representation $\rho_C=\rho\upharpoonright\HH_C$, which belongs to the same symmetry type $\symS$ as $\rho$. Besides, using the notation
$$
 \HH^\pm = \pm1\text{-eigenspace of } W,
 \quad
 E_\pm = \text{ orthogonal projection onto } \HH^\pm,
 \quad
 \hat{\HH}^\pm = \pm1\text{-eigenspace of } f(\pm1),
$$
we find from Theorem~\ref{thm:u-eg}.{\it(ii)} that
$\hat{\HH}^\pm=\ker(\1_{\HH_C} \mp f(\pm1))=P_C\HH^\pm$.
Therefore, Proposition~\ref{prop:CYCLIC} yields the isomorphisms
$$
 \hat{P}_\pm \colon
 \mathop{\HH^\pm \to \hat{\HH}^\pm}
 \limits_{\textstyle \phi \xmapsto{\kern10pt} P_C\phi}
 \qquad\qquad
 \hat{E}_\pm \colon
 \mathop{\hat{\HH}^\pm \to \HH^\pm}
 \limits_{\kern3pt \textstyle \phi \xmapsto{\kern8pt} E_\pm\phi}
$$
where $P_C$ is the orthogonal projection onto $\HH_C$. Let $\rho_\pm$ and $\hat\rho_\pm$ be the representations of the symmetries induced on the eigenspaces $\HH^\pm$ and $\hat\HH^\pm$ respectively. Given any symmetry $\sigma$ in $\rho$, since it commutes with $P_C$, we conclude that $\sigma_C\hat{P}_\pm\phi=\hat{P}_\pm\sigma\phi$ for $\sigma_C=\sigma\upharpoonright\HH_C$ and every $\phi\in\HH^\pm$. Thus, the isomorphism $\hat{P}_\pm$ connects these representations through
$
 \hat\rho_\pm = \hat{P}_\pm \rho_\pm \hat{P}_\pm^{-1},
$
which leads to
$
 \six_\pm(W) = \six(\rho_\pm) = \six(\hat\rho_\pm) = \six_\pm(f(\pm1)).
$
\end{proof}

The cyclicity hypothesis is a limitation of the above result because not every admissible unitary has a finite-dimensional cyclic subspace. Fortunately, this condition on the subspace can be relaxed to another one which, as we will see, imposes no constraint on admissible unitaries.

\begin{thm} \label{thm:eigC}
The results of Theorems~\ref{thm:si} hold if the cyclicity of $\HH_C$ is substituted by any of the following equivalent conditions:
\begin{itemize}
\item[(i)] The cyclic subspace generated by $\HH_C$ includes the $\pm1$-eigenspaces of $W$.
\item[(ii)] $\HH_C^\bot$ contains no $\pm1$-eigenvector of $W$.
\end{itemize}
\end{thm}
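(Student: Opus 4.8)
The plan is to reduce to the already-proved Theorem~\ref{thm:si} by working with the cyclic subspace $\KK:=\overline{\sum_{n\in\Ir}W^n\HH_C}$ generated by $\HH_C$, which (being the closure of the linear span of the orbit $\{W^n\HH_C\}_{n\in\Ir}$) reduces $W$ and for which $\HH_C$ is cyclic. First I would dispose of the equivalence of (i) and (ii). Since $\KK$, hence $\KK^\bot$, reduces $W$, the $\pm1$-eigenspace $\HH^\pm$ of $W$ splits as $(\HH^\pm\cap\KK)\oplus(\HH^\pm\cap\KK^\bot)$, so (i), i.e. $\HH^\pm\subseteq\KK$, is equivalent to $\HH^\pm\cap\KK^\bot=\{0\}$, i.e. to $\KK^\bot$ containing no $\pm1$-eigenvector of $W$. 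One then checks that a $\pm1$-eigenvector $\phi$ of $W$ lies in $\KK^\bot$ exactly when it lies in $\HH_C^\bot$: the inclusion $\KK^\bot\subseteq\HH_C^\bot$ is trivial, and conversely $W\phi=\pm\phi$ gives $W^{-n}\phi=(\pm1)^n\phi$, whence $\langle\phi,W^n\psi\rangle=(\pm1)^n\langle\phi,\psi\rangle=0$ for all $\psi\in\HH_C$ and $n\in\Ir$, i.e. $\phi\perp\sum_nW^n\HH_C$. This gives (i)$\Leftrightarrow$(ii).

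It remains to prove the conclusions of Theorem~\ref{thm:si} assuming (i). Since $W$ is admissible, Theorem~\ref{thm:sym} already gives, with no cyclicity-type hypothesis, that $f(\pm1)$ are finite-dimensional unitaries carrying the symmetries of the induced representation $\rho_C$. For the eigenvalue statement and the index identity I would apply Proposition~\ref{prop:CYCLIC} and Theorem~\ref{thm:u-eg}.{\it(ii)} to the unitary $W':=W\upharpoonright\KK$ on $\KK$; this is legitimate because those results need no admissibility, $\HH_C$ is cyclic for $W'$, and $\pm1$ lies in an essential gap of $W'$ (the essential spectrum of $W'$ is contained in that of $W$, by $W=W'\oplus(W\upharpoonright\KK^\bot)$). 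Using $\HH^\pm\subseteq\KK$ from hypothesis (i) one has: the $\pm1$-eigenspace of $W'$ is $\HH^\pm$; the Schur function of $\HH_C$ relative to $W'$ is again $f$, because both are the Schur function of the spectral measure $P_CEP_C$ of $\HH_C$, which is unchanged on restricting $W$ to the reducing subspace $\KK\supseteq\HH_C$; and the $\pm1$-eigenprojection of $W'$ acts on $\HH_C$ as the orthogonal projection $E_\pm$ of $\HH$ onto $\HH^\pm$. Feeding these identifications into Proposition~\ref{prop:CYCLIC} and Theorem~\ref{thm:u-eg}.{\it(ii)} for $W'$ yields precisely the objects used in the proof of Theorem~\ref{thm:si}: $\pm1$ is an eigenvalue of $W$ iff $\det(\1_{\HH_C}\mp f(\pm1))=0$; $P_C\HH^\pm=\ker(\1_{\HH_C}\mp f(\pm1))$; and $\hat P_\pm\colon\HH^\pm\to P_C\HH^\pm$, $\phi\mapsto P_C\phi$, is an isomorphism (with inverse-direction companion $\hat E_\pm\colon P_C\HH^\pm\to\HH^\pm$, $\phi\mapsto E_\pm\phi$).

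From here the remaining part of the proof of Theorem~\ref{thm:si} carries over verbatim: every symmetry $\sigma$ of $W$ commutes with $P_C$ because $\HH_C$ is finite-dimensional and $\rho$-invariant, so $\hat P_\pm$ intertwines the representation $\rho_\pm$ induced on $\HH^\pm$ with the representation $\hat\rho_\pm$ induced on $P_C\HH^\pm=\ker(\1_{\HH_C}\mp f(\pm1))$, and hence $\six_\pm(W)=\six(\rho_\pm)=\six(\hat\rho_\pm)=\six_\pm(f(\pm1))$; moreover, $\hat P_\pm$ being an isomorphism, $\HH^\pm$ is nonzero exactly when $\ker(\1_{\HH_C}\mp f(\pm1))$ is, which is the eigenvalue equivalence of Theorem~\ref{thm:si}.{\it(i)}. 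Combined with the equivalence (i)$\Leftrightarrow$(ii) established at the outset, this proves the theorem. I expect the main obstacle to be the identification in the middle paragraph --- matching the Schur function and the $\pm1$-eigenspace data of $W\upharpoonright\KK$ with those of $W$, and confirming that the concluding intertwining argument of Theorem~\ref{thm:si} is insensitive to the passage to $\KK$ --- though this is routine once hypothesis (i) is available.
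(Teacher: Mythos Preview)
Your proposal is correct and follows essentially the same strategy as the paper: restrict $W$ to the cyclic subspace $\KK$ generated by $\HH_C$, observe that the Schur function of $\HH_C$ is unchanged and that (under hypothesis (i)) the $\pm1$-eigenspaces of $W$ coincide with those of the restriction, and then invoke the cyclic case. The equivalence (i)$\Leftrightarrow$(ii) is also handled the same way in both.

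The only presentational difference is that the paper applies Theorem~\ref{thm:si} directly to $W_0=W\upharpoonright\KK$, after noting that $\KK$ (and hence $\KK^\bot$) is $\rho$-invariant so that $W_0$ is itself admissible for the induced representation $\rho_0$ and $\six_\pm(W)=\six_\pm(W_0)$; you instead replay the ingredients of that theorem (Proposition~\ref{prop:CYCLIC} and Theorem~\ref{thm:u-eg}.{\it(ii)}) for $W'$ and carry out the intertwining argument at the level of $\HH$ rather than $\KK$. Your route has the small advantage that it does not require checking $\rho$-invariance of $\KK$, while the paper's is slightly shorter by citing Theorem~\ref{thm:si} wholesale. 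Either way the content is the same.
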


\begin{proof}
The cyclic subspace $\TT_0$ generated by $\HH_C$, as well as
$\TT_1=\TT_0^\bot$, are always $W$-invariant and inherit the $\rho$-invariance of $\HH_C$. This leads to the orthogonal decompositions $W=W_0\oplus W_1$ and $\rho=\rho_0\oplus\rho_1$, with $W_i$ a unitary on $\TT_i$ which is admissible for the representation $\rho_i$ of the symmetry type $\symS$. Condition {\it(i)} is equivalent to the statement that $W_1$ is gapped around $\pm1$, so that $\six_\pm(W)=\six_\pm(W_0)$. Besides, $W$ and $W_0$ yield the same Schur function for $\HH_C$ because this subspace lies in $\TT_0$. Then, the application of Theorem~\ref{thm:si} to $W_0$ proves that condition {\it(i)} has the same consequences as the cyclicity of $\HH_C$ concerning such a theorem.

To prove the equivalence between {\it(i)} and {\it(ii)}, we rewrite {\it(i)} by stating that $\TT_1$ has no $\pm1$-eigenvector of $W$. Then, the alluded equivalence follows from the fact that, for every eigenvector $\phi$ of $W$,
$$
 \phi\bot W^\Ir\HH_C
 \;\Leftrightarrow\;
 W^\Ir\phi\bot\HH_C
 \;\Leftrightarrow\;
 \phi\bot\HH_C.
$$
\end{proof}

Rather than the knowledge of the $\pm1$-eigenspaces of $W$, the condition given by the above proposition only needs to guarantee that some subspace --namely, $\HH_C^\bot$-- is free of $\pm1$-eigenvectors. This condition is not only less restrictive than the cyclicity of $\HH_C$ imposed in Theorem~\ref{thm:si}, but every admissible unitary has a subspace $\HH_C$ satisfying it which is constituted by a finite sum of cells, as the next proposition shows.

\begin{prop} \label{prop:exC}
Given any admissible unitary $W$, there exists a finite sum of cells $\HH_C$ which generates a cyclic subspace including the $\pm1$-eigenspaces of $W$.
\end{prop}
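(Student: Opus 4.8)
The plan is to build $\HH_C$ as a large enough finite sum of cells, using the essential gap condition together with essential locality to control the $\pm1$-eigenvectors. First I would recall that by condition (b) the $\pm1$-eigenspaces $\HH^+$ and $\HH^-$ of an admissible unitary $W$ are finite-dimensional; let $d=\dim\HH^++\dim\HH^-$ and pick an orthonormal basis $\psi_1,\dots,\psi_d$ of $\HH^+\oplus\HH^-$. Each $\psi_j$ lives in $\HH=\bigoplus_x\HH_x$, so for any $\epsilon>0$ there is a finite interval of cells $\HH_{[-N,N]}=\bigoplus_{|x|\le N}\HH_x$ such that the truncation error $\|(\1-P_{[-N,N]})\psi_j\|<\epsilon$ for all $j$. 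The natural candidate is $\HH_C=\HH_{[-N,N]}$ for $N$ large, and by Theorem~\ref{thm:eigC} it suffices to prove that, for suitable $N$, $\HH_C^\bot$ contains no $\pm1$-eigenvector of $W$, equivalently that the restriction $E_\pm\!\upharpoonright\HH_C$ of the eigenprojections has trivial kernel on $\HH_C$ — actually we need the stronger statement that $P_C$ is injective on $\HH^+\oplus\HH^-$, i.e.\ $P_C\HH^\pm$ has the same dimension as $\HH^\pm$.

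The key step is the dimension-counting argument: if $\HH_{[-N,N]}^\bot$ contained some unit eigenvector $\phi$ with $W\phi=\pm\phi$, then $\phi\in\HH^+\oplus\HH^-$ and $\phi\perp\psi_j$ would be impossible to reconcile once $N$ is large, because the Gram matrix of $P_{[-N,N]}\psi_1,\dots,P_{[-N,N]}\psi_d$ converges to the identity as $N\to\infty$ and hence is invertible for $N$ large, so $P_{[-N,N]}$ maps $\HH^+\oplus\HH^-$ isomorphically onto its image; an eigenvector in the kernel of $P_{[-N,N]}$ would force that Gram matrix to be singular. Concretely: choose $N$ so large that $\sum_j\|(\1-P_{[-N,N]})\psi_j\|^2<1$; then for any unit vector $\phi=\sum_j c_j\psi_j\in\HH^+\oplus\HH^-$ one has $\|P_{[-N,N]}\phi\|^2=\|\phi\|^2-\|(\1-P_{[-N,N]})\phi\|^2\ge 1-\sum_j|c_j|\,\|(\1-P_{[-N,N]})\psi_j\|$, and by Cauchy--Schwarz this is bounded below by $1-\big(\sum_j\|(\1-P_{[-N,N]})\psi_j\|^2\big)^{1/2}>0$, so $P_{[-N,N]}\phi\ne0$. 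Thus no nonzero $\pm1$-eigenvector of $W$ lies in $\HH_C^\bot=\HH_{[-N,N]}^\bot$, which by Theorem~\ref{thm:eigC}.{\it(ii)} is exactly condition {\it(i)}: the cyclic subspace generated by $\HH_C$ includes the $\pm1$-eigenspaces of $W$. Finally, enlarging $N$ if necessary, $\HH_C$ can be taken $\rho$-invariant as well, since each cell $\HH_x$ is symmetry invariant by assumption (a), so a sum of cells is automatically $\rho$-invariant at no extra cost.

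The main obstacle I anticipate is not the dimension count, which is elementary, but making sure the finiteness input is legitimately available: this rests squarely on the essential gap condition (b), which guarantees $\dim(\HH^+\oplus\HH^-)<\infty$ and hence that only finitely many truncation errors need to be made small simultaneously. (Essential locality (c) is not actually needed for this particular statement — it enters elsewhere, in the existence of the decoupling and the invariance of $\sixL,\sixR$ — so the proof goes through for any admissible unitary.) One should also note that cyclicity of the subspace generated by $\HH_C$, not just containment of the eigenspaces, is what Theorem~\ref{thm:eigC} asks for in its condition {\it(i)}; but that condition is literally ``the cyclic subspace generated by $\HH_C$ includes the $\pm1$-eigenspaces'', which is exactly what we have proved, so no further work is required.
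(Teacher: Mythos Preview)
Your argument is correct (modulo a harmless imprecision in the inequality chain: from $\|(\1-P_{[-N,N]})\phi\|\le\sum_j|c_j|\,\|(\1-P_{[-N,N]})\psi_j\|$ one should square before comparing, but since the right-hand side is $<1$ the conclusion $\|P_{[-N,N]}\phi\|^2>0$ follows anyway). It is, however, a genuinely different route from the paper's. The paper does not pass through Theorem~\ref{thm:eigC} at all: it works directly with the cyclic subspaces $\TT_n$ generated by $\SS_n=\bigoplus_{|x|\le n}\HH_x$, splits each eigenspace as $\HH^{(\lambda)}=\HH^{(\lambda)}_n\oplus\KK^{(\lambda)}_n$ along $\TT_n\oplus\TT_n^\bot$, observes that $\KK^{(\lambda)}_n\supset\KK^{(\lambda)}_{n+1}$, and argues that this nested sequence cannot stabilize at a nonzero subspace (that would produce a nonzero eigenvector orthogonal to every $\SS_n$) nor strictly decrease infinitely often when $\dim\HH^{(\lambda)}<\infty$; hence $\KK^{(\lambda)}_n=\{0\}$ eventually, i.e.\ $\HH^{(\lambda)}\subset\TT_n$. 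Your approach instead proves condition~{\it(ii)} of Theorem~\ref{thm:eigC} --- injectivity of $P_{[-N,N]}$ on the finite-dimensional space $\HH^+\oplus\HH^-$ via a Gram-matrix/truncation estimate --- and then quotes the equivalence {\it(i)}$\Leftrightarrow${\it(ii)} to reach the conclusion. Both arguments use exactly the same input (finite-dimensionality of $\HH^\pm$ from the essential gap condition; essential locality is indeed irrelevant here, as you note). Your version is slightly more elementary and quantitative; the paper's version is cleaner in that it shows $\HH^\pm\subset\TT_n$ directly without the detour through~{\it(ii)}.
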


\begin{proof}
Let $\HH=\bigoplus_x\HH_x$ be the cell structure of the Hilbert space where $W$ acts. Consider the cyclic subspace $\TT_n$ generated by the sum of cells $\SS_n=\bigoplus_{|x|\le n}\HH_x$. Since $\TT_n$ and $\TT_n^\bot$ are both $W$-invariant, every eigenspace $\HH^{(\lambda)}$ of $W$ splits as
$\HH^{(\lambda)}=\HH^{(\lambda)}_n\oplus\KK^{(\lambda)}_n$ into the eigenspaces $\HH^{(\lambda)}_n$ and $\KK^{(\lambda)}_n$ of $W\upharpoonright\TT_n$ and $W\upharpoonright\TT_n^\bot$ respectively. Besides,
$\KK^{(\lambda)}_n\supset\KK^{(\lambda)}_{n+1}$ because $\TT_n\subset\TT_{n+1}$. If $\{0\}\ne\KK^{(\lambda)}_n=\KK^{(\lambda)}_{n+1}$ for $n\ge n_0$, then there exists a non-null vector $\phi\in\HH^{(\lambda)}$ which is orthogonal to $\SS_n$ for $n\ge n_0$, in contradiction with the fact that $\{\SS_n\}_{n\ge0}$ spans $\HH$. Therefore, either $\KK^{(\lambda)}_n=\{0\}$ for big enough $n$, or $\KK^{(\lambda)}_n\varsupsetneq\KK^{(\lambda)}_{n+1}$ for infinitely many values of $n$. The second option requires $\dim\HH^{(\lambda)}=\infty$, thus in the case $\dim\HH^{(\lambda)}<\infty$ only the first option is available, which means that $\HH^{(\lambda)}=\HH^{(\lambda)}_n\subset\TT_n$ for big enough $n$.

Since $W$ is admissible, its $\pm1$-eigenspaces $\HH^\pm$ are finite-dimensional, thus $\HH^\pm\subset\TT_n$ for big enough $n$. For any such a value of $n$, the finite sum of cells $\HH_C=\SS_n$ satisfies the requirement of the proposition.
\end{proof}

This last result guarantees that, at least theoretically, it is always possible to calculate the symmetry indices $\six_\pm(W)$ of a 1D admissible walk $W$ by applying Theorem~\ref{thm:si} to the Schur function of a finite sum of cells. The proof of Proposition~\ref{prop:exC} generalizes trivially to quantum walks in arbitrary dimension $k$, just by considering $\SS_n=\bigoplus_{\|x\|\le n}\HH_x$ for a Hilbert space with cell structure labelled by vector indices $x\in\Ir^k$. Moreover, the hypothesis of Theorem~\ref{thm:si} make no assumption on the cell structure of the Hilbert space. Therefore, the Schur approach to the symmetry indices $\six_\pm$ based on the application of Theorem~\ref{thm:si} to Schur functions of finite sums of cells also works for higher dimensional quantum walks.

For walks on the line, the results of this section can be used to obtain the indices $\sixL$ and $\sixR$ because they are given in terms of $\six_\pm$ for decoupled walks on a half-line. Nevertheless, some decoupling properties of Schur functions are useful for the calculation of left and right indices, and they are discussed in the next section.

\section{Decoupling and Schur representation of left/right indices}
\label{sec:dec}

Left and right indices are related to compact decouplings, i.e. compact perturbations $W \mapsto \widetilde{W}=W_L\oplus W_R$ which split a walk $W$ on the line into walks $W_{L/R}$ on a left/right half-line. To deal with such perturbations, we introduce the following terminology.

\begin{defi} \label{def:per}
We say that $V=\widetilde{W}W^*$ is the perturbation that connects two unitaries $W$ and $\widetilde{W}=VW$ on the same Hilbert space $\HH$. We refer to $\HH_V=(V-\1)\HH=(V^*-\1)\HH$ as the perturbation subspace. Since $V-\1$ is null on $\HH_V^\bot$ and leaves $\HH_V$ invariant, we can write $V=V_0\oplus\1_{\HH_V^\bot}$ with $V_0=V\upharpoonright\HH_V$.

A perturbation is called:
\begin{itemize}
\item Local if $\HH_V$ is included in a finite sum of cells.
\item Symmetry preserving if $VW$ shares the symmetries of $W$.
\item Decoupling if $VW=W_L\oplus W_R$ splits into unitaries $W_{L/R}$ on orthogonal subspaces $\HH_{L/R}$ (then, we refer to $\HH_V$ as the decoupling subspace). We will refer to a symmetry preserving decoupling when $\HH_L=\HH_{<a}$, $\HH_R=\HH_{\ge a}$ for some $a\in\Ir$, and the perturbation is symmetry preserving.
\end{itemize}
\end{defi}

Local perturbations are particular cases of compact ones $W \mapsto \widetilde{W}=VW$, i.e. those such that $\widetilde{W}-W$, or equivalently $V-\1$, is compact. Compact perturbations preserve both, essential gaps and essential locality. Therefore, symmetry preserving compact perturbations transform admissible walks (unitaries) into admissible walks (unitaries). In this case we know that $\sixL(VW)=\sixL(W)$ and $\sixR(VW)=\sixR(W)$. Symmetry preserving compact decouplings $VW=W_L\oplus W_R$ of an admissible walk $W$ are used for the calculation of its left and right indices since $W_{L/R}$ is then an admissible walk for the restriction of the symmetries of $W$ to $\HH_{L/R}$, and $\sixL(W)=\six(W_L)$, $\sixR(W)=\six(W_R)$.

The following result paves the way towards a Schur translation of decouplings since it translates the effect of a perturbation into the Schur function of a subspace including the perturbation subspace.

\begin{prop} \label{prop:V}
Given a unitary perturbation $\widetilde{W}=VW$ of a unitary $W$, if a subspace $\HH_C$ includes the perturbation subspace $\HH_V$, the Schur functions $f$ and $\widetilde{f}$ of $\HH_C$ with respect to $W$ and $\widetilde{W}$ are related by
$$
 \widetilde{f}=fV_C^*, \qquad V_C=V\upharpoonright\HH_C.
$$
\end{prop}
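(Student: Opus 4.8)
The plan is to use the operator representation \eqref{eq:f-W} of Schur functions directly. Recall that for a unitary $U$ on $\HH$ and the projection $P_C$ onto $\HH_C$, the Schur function of $\HH_C$ is
\[
 f_U(z) = P_C(U - zP_C^\bot)^{-1}P_C,
\]
viewed as an operator on $\HH_C$. So I would write out both $f(z) = P_C(W - zP_C^\bot)^{-1}P_C$ and $\widetilde f(z) = P_C(\widetilde W - zP_C^\bot)^{-1}P_C = P_C(VW - zP_C^\bot)^{-1}P_C$, and try to relate the two resolvent-type expressions.

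The key observation is that, since $\HH_V \subseteq \HH_C$, the operator $V - \1$ vanishes on $\HH_C^\bot$; equivalently $V$ commutes with $P_C$ (and with $P_C^\bot$), and $V P_C^\bot = P_C^\bot$, i.e. $V$ acts as the identity on $\HH_C^\bot$. The crucial algebraic step is then to rewrite $\widetilde W - zP_C^\bot = VW - zP_C^\bot$. First I would like to factor out $V$: since $V$ is unitary and $VP_C^\bot = P_C^\bot = P_C^\bot V^*$, one has $zP_C^\bot = z V P_C^\bot = V(zP_C^\bot)$, hence
\[
 VW - zP_C^\bot = V(W - zP_C^\bot).
\]
Therefore $(\widetilde W - zP_C^\bot)^{-1} = (W - zP_C^\bot)^{-1}V^*$, and sandwiching with $P_C$ gives
\[
 \widetilde f(z) = P_C(W - zP_C^\bot)^{-1}V^*P_C = P_C(W - zP_C^\bot)^{-1}P_C V^* P_C,
\]
where in the last step I used that $V^*$ commutes with $P_C$ so that $V^*P_C = P_C V^* = P_C V^* P_C$. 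Since $V^*P_C = V_C^* = (V_C)^*$ as an operator on $\HH_C$ (again using that $V$ preserves $\HH_C$), this is exactly $\widetilde f(z) = f(z)V_C^*$, as claimed. One should also note these resolvents exist for $|z|<1$ because $P_C^\bot$ is a contraction and $W$ is unitary, so $zP_C^\bot$ has norm $<1$ and $W - zP_C^\bot$ is invertible; the identity of analytic functions on $\Dk$ then follows.

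The main obstacle, such as it is, is purely bookkeeping: being careful that $V$ genuinely commutes with $P_C$ (this is where the hypothesis $\HH_V \subseteq \HH_C$ enters, via $V - \1$ being supported on $\HH_V$ and hence on $\HH_C$, so that $V$ fixes $\HH_C^\bot$ pointwise), and that the factorization $zP_C^\bot = V(zP_C^\bot)$ is valid — this needs $V$ to act as the identity on the range of $P_C^\bot$, which is precisely $\HH_C^\bot \subseteq \HH_V^\bot$. A secondary point is to make sure the manipulation $\widetilde f = P_C(W-zP_C^\bot)^{-1}V^*P_C = P_C(W-zP_C^\bot)^{-1}P_C V_C^*$ really only moves $V^*$ past $P_C$ legitimately; this is immediate once $[V,P_C]=0$ is established. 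Alternatively, if one prefers the other form of \eqref{eq:f-W}, namely $f(z) = P_C(\1 - zW^*P_C^\bot)^{-1}W^*P_C$, the computation is essentially the same, using $\widetilde W^* = W^*V^*$ and $P_C^\bot V^* = P_C^\bot$. I expect no genuine difficulty beyond this.
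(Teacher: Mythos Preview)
Your proof is correct and follows essentially the same approach as the paper: both use the operator representation \eqref{eq:f-W}, observe that $\HH_V\subset\HH_C$ forces $V=V_C\oplus\1_{\HH_C^\bot}$ (hence $[V,P_C]=0$ and $VP_C^\bot=P_C^\bot$), and then factor $VW-zP_C^\bot=V(W-zP_C^\bot)$ to obtain $\widetilde f(z)=P_C(W-zP_C^\bot)^{-1}V^*P_C=f(z)V_C^*$. The paper's proof is just a more compressed version of what you wrote.
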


\begin{proof}
The restriction $V_C=V\upharpoonright\HH_C$ makes sense because $\HH_C$ is $V$-invariant whenever $\HH_V\subset\HH_C$. Actually, we can write $V=V_C\oplus\1_{\HH_C^\bot}$, thus $V$ commutes with the projection $P_C$ onto $\HH_C$ and $VP_C^\bot=P_C^\bot$ for $P_C^\bot=\1-P_C$. Using \eqref{eq:f-W}, this leads to
$$
 \widetilde{f}(z) = P_C(VW-zP_C^\bot)^{-1}P_C = P_C(W-zP_C^\bot)^{-1}V^*P_C
 = P_C(W-zP_C^\bot)^{-1}P_CV_C^* = f(z)V_C^*.
$$
\end{proof}

Decouplings also have Schur translations which help to study left and right indices. This calls for dealing with subspaces $\HH_C$ which respect the decomposition $\HH=\HH_L\oplus\HH_R$ behind a decoupling, i.e. such that
\begin{equation} \label{eq:HLCR}
 \HH_C = \HH_{LC} \oplus \HH_{CR}, \qquad \HH_{LC/CR}\subset\HH_{L/R}.
\end{equation}
This requirement is equivalent to any of the equivalent conditions $P_L\HH_C\subset\HH_C$ or $P_R\HH_C\subset\HH_C$, where $P_{L/R}$ is the orthogonal projection of $\HH$ onto $\HH_{L/R}$. For symmetry preserving decouplings, this is the case when $\HH_C$ is a sum of cells.

Bearing in mind the role of cyclicity in the Schur representation of symmetry indices, a couple of questions about a subspace $\HH_C$ should be answered before stating any result about decouplings in terms of Schur functions: Which perturbations leave invariant the cyclic subspace generated by $\HH_C$? Also, in view of Theorem~\ref{thm:eigC}, which perturbations leave invariant the eigenvectors orthogonal to $\HH_C$ and their eigenvalues? We will see that a sufficient condition for both is that $\HH_C$ contains the perturbation subspace.

\begin{prop} \label{prop:C-pert}
Given a unitary perturbation $VW$ of a unitary $W$, if a subspace $\HH_C$ includes the perturbation subspace $\HH_V$, then:
\begin{itemize}
\item[(i)] $W$ and $VW$ have the same eigenvectors in $\HH_C^\bot$ and they have the same eigenvalues.
\item[(ii)] $\HH_C$ generates the same cyclic subspace for $W$ and $VW$.
\end{itemize}
\end{prop}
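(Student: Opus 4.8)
The plan is to use that the hypothesis $\HH_V\subseteq\HH_C$ forces the perturbation to act only inside $\HH_C$. By Definition~\ref{def:per}, $\HH_V=(V-\1)\HH=(V^*-\1)\HH$, so both $V$ and $V^*$ fix every vector of $\HH_C^\bot$, and one has $\range(V-\1)=\range(V^*-\1)=\HH_V\subseteq\HH_C$. These two elementary observations drive everything.

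For (i), I would take an arbitrary $\phi\in\HH_C^\bot$. If $W\phi=\lambda\phi$, then $\lambda\phi$ is still in $\HH_C^\bot$, hence $VW\phi=V(\lambda\phi)=\lambda\phi$, so $\phi$ is a $\lambda$-eigenvector of $VW$. Conversely, if $VW\phi=\lambda\phi$ with $\phi\in\HH_C^\bot$, then applying $V^*$ gives $W\phi=\lambda V^*\phi=\lambda\phi$. Thus the eigenvectors of $W$ lying in $\HH_C^\bot$ are exactly those of $VW$, with identical eigenvalues.

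For (ii), I would let $\TT=(W^\Ir\HH_C)^{\bot\bot}$ be the cyclic subspace generated by $\HH_C$ for $W$; it is closed, contains $\HH_C$, and is invariant under both $W$ and $W^*$. Since $\range(V-\1)=\HH_V\subseteq\HH_C\subseteq\TT$, each $\psi\in\TT$ satisfies $V\psi=\psi+(V-\1)\psi\in\TT$, and likewise $V^*\psi\in\TT$, so $\TT$ is invariant under $V$ and $V^*$, hence under $VW$ and $(VW)^*=W^*V^*$. Therefore $\TT$ contains the cyclic subspace generated by $\HH_C$ for $VW$. For the reverse inclusion I would run the symmetric argument: $W=V^*(VW)$, and the perturbation connecting $VW$ to $W$ is $W(VW)^*=V^*$, again with perturbation subspace $\HH_V\subseteq\HH_C$; so the $VW$-cyclic subspace of $\HH_C$ is invariant under $V^*$, $V$, $VW$ and $(VW)^*$, hence under $W$ and $W^*$, and thus contains $\TT$. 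This gives equality of the two cyclic subspaces.

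The whole argument is essentially mechanical; the only conceptual point — and the one I would state with care — is that any cyclic subspace (for $W$ or for $VW$) containing $\HH_C\supseteq\HH_V$ is automatically $V$- and $V^*$-invariant, which is precisely what forces the two cyclic subspaces to coincide. I do not anticipate a genuine obstacle here.
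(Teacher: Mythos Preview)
Your proof is correct. Part (i) is essentially identical to the paper's argument: both use that $V$ (and $V^*$) act as the identity on $\HH_C^\bot\subset\HH_V^\bot$, so $W\phi=\lambda\phi\Leftrightarrow VW\phi=\lambda\phi$ for $\phi\in\HH_C^\bot$.

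For part (ii) your route differs from the paper's. The paper works on the orthogonal complement side: it shows $(W^\Ir\HH_C)^\bot\subset((VW)^\Ir\HH_C)^\bot$ by proving inductively that $W^\Ir\phi\bot\HH_C$ forces $(VW)^n\phi=W^n\phi$ for all $n\in\Ir$, using that $V$ and $V^{-1}$ fix any vector orthogonal to $\HH_C$. You instead work directly with the cyclic subspaces, observing that any closed $W$-invariant subspace containing $\HH_C\supseteq\HH_V=\range(V-\1)$ is automatically $V$- and $V^*$-invariant, hence $VW$- and $(VW)^*$-invariant, and therefore contains the $VW$-cyclic subspace; symmetry (via $W=V^*(VW)$ with the same perturbation subspace) gives the reverse inclusion. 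Your argument is slightly more conceptual and avoids the induction; the paper's version has the minor bonus of exhibiting explicitly that the individual orbits $(VW)^\Ir\phi$ and $W^\Ir\phi$ actually coincide for $\phi$ in the common orthogonal complement, which is a bit more than the statement requires. Both are short and complete.
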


\begin{proof}
If $\phi\in\HH_C^\bot$ then $V\phi=\phi$ because $V$ is the identity on $\HH_V^\bot\supset\HH_C^\bot$. Hence, $W\phi=\lambda\phi$ is equivalent to $VW\phi=\lambda\phi$, which proves {\it(i)}.

To prove {\it(ii)} we will show instead that the orthogonal complement to the cyclic subspace generated by $\HH_C$ is the same for $W$ and $VW$, i.e. $(W^\Ir\HH_C)^\bot=((VW)^\Ir\HH_C)^\bot$. Indeed, it is enough to see that $(W^\Ir\HH_C)^\bot\subset((VW)^\Ir\HH_C)^\bot$ because $W$ is also a perturbation of $VW$ with the same perturbation subspace. Due to the equivalence
$$
 \phi \bot W^\Ir\HH_C \;\Leftrightarrow\; W^\Ir\phi \bot \HH_C,
$$
the inclusion to prove becomes the implication
$$
 W^\Ir\phi \bot \HH_C \;\Rightarrow\; (VW)^\Ir\phi \bot \HH_C.
$$
We will prove this by showing that
$$
 W^\Ir\phi \bot \HH_C \;\Rightarrow\; (VW)^\Ir\phi = W^\Ir\phi.
$$
Suppose that $W^\Ir\phi\bot\HH_C$. Bearing in mind that $V$ and $V^{-1}$ are the identity on $\HH_C^\bot$, we find that
$$
 W^n\phi \bot \HH_C \;\Rightarrow\; VW^n\phi = W^n\phi = V^{-1}W^n\phi,
 \qquad n\in\Ir.
$$
Therefore, assuming $(VW)^n\phi=W^n\phi$ yields $(VW)^{n+1}\phi=VW^{n+1}\phi=W^{n+1}\phi$ and $(VW)^{n-1}\phi=W^{-1}V^{-1}W^n\phi=W^{n-1}\phi$. Since $V\phi=\phi$, this proves by induction that $(VW)^n\phi=W^n\phi$ for $n\in\Ir$.
\end{proof}

The previous results combine to give the following theorem, which states the main result of this section. It relates the decouplings of a walk with decouplings of Schur functions, and takes advantage of this to get a Schur representation of left and right indices.

\begin{thm} \label{thm:dec}
Let $VW=W_L\oplus W_R$ be a decoupling of a unitary $W$ and $\HH_C\supset\HH_V$ a subspace such that $\HH_{LC}:=P_L\HH_C\subset\HH_C$ (equivalently, $\HH_{CR}:=P_R\HH_C\subset\HH_C$). Then the Schur function $f$ of $\HH_C$ with respect to $W$ is related to the Schur functions $f_{L/R}$ of $\HH_{LC/CR}$ with respect to $W_{L/R}$ by
$$
 f = (f_L \oplus f_R) V_C,
 \qquad
 V_C=V\upharpoonright\HH_C.
$$
If, besides, $W$ is an admissible walk, the decoupling is symmetry preserving and $\HH_C$ is a finite-dimensional symmetry invariant subspace such that $\HH_C^\bot$ contains no $\pm1$-eigenvectors of $W$, then
$$
 \sixL(W)=\six_+(f_L(1))+\six_-(f_L(-1)),
 \qquad
 \sixR(W)=\six_+(f_R(1))+\six_-(f_R(-1)).
$$
\end{thm}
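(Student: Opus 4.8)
The plan is to prove the two assertions separately: the decoupling identity $f=(f_L\oplus f_R)V_C$ by pure Hilbert-space algebra (no admissibility, no finite-dimensionality needed), and then the index formulas by reducing, block by block, to Theorem~\ref{thm:si} in its relaxed form Theorem~\ref{thm:eigC}. For the first part I would start from Proposition~\ref{prop:V}: since $\HH_C\supset\HH_V$, the Schur functions $f$ and $\widetilde f$ of $\HH_C$ with respect to $W$ and $\widetilde W=VW$ satisfy $\widetilde f=fV_C^*$, hence $f=\widetilde fV_C$ ($V_C$ being unitary). It then remains to identify $\widetilde f$ with $f_L\oplus f_R$. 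Here I use that $\widetilde W=W_L\oplus W_R$ is block diagonal for $\HH=\HH_L\oplus\HH_R$ and that, by the hypothesis $P_L\HH_C\subset\HH_C$ (equivalently \eqref{eq:HLCR}), $\HH_C=\HH_{LC}\oplus\HH_{CR}$ respects this splitting; therefore so does $P_C^\bot=\1-P_C$, and $\widetilde W-zP_C^\bot=(W_L-z(\1_{\HH_L}-P_{LC}))\oplus(W_R-z(\1_{\HH_R}-P_{CR}))$ is block diagonal, where $P_{LC},P_{CR}$ are the projections onto $\HH_{LC},\HH_{CR}$. Inverting blockwise in the second form of \eqref{eq:f-W} and compressing to $\HH_C$, the $\HH_{LC}$-block of $\widetilde f(z)$ is $P_{LC}(W_L-z(\1_{\HH_L}-P_{LC}))^{-1}P_{LC}$, which is exactly the defining expression \eqref{eq:f-W} for the Schur function $f_L$ of $\HH_{LC}$ with respect to $W_L$ (no cross terms survive, by block-diagonality), and similarly on the right. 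Thus $\widetilde f=f_L\oplus f_R$ and $f=(f_L\oplus f_R)V_C$.

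For the index formulas, the first observation is that the decoupling is automatically well-behaved: since $\HH_V\subset\HH_C$ and $\HH_C$ is finite-dimensional, $V-\1$ has finite rank, so $\widetilde W-W=(\1-V)W$ is compact. As the decoupling is symmetry preserving and compact, $W_L\oplus W_R$ is again an admissible walk (compact perturbations preserve essential gaps and essential locality), $W_{L/R}$ is an admissible unitary for the restriction of the symmetries to $\HH_{L/R}$, and by the compactness criterion following \eqref{eq:CD} together with \eqref{eq:LR} one has $\sixL(W)=\six(W_L)=\six_+(W_L)+\six_-(W_L)$ and $\sixR(W)=\six(W_R)=\six_+(W_R)+\six_-(W_R)$. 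Hence it suffices to prove $\six_\pm(W_{L/R})=\six_\pm(f_{L/R}(\pm1))$, and I would get this by applying Theorem~\ref{thm:si}/\ref{thm:eigC} to the pairs $(W_L,\HH_{LC})$ and $(W_R,\HH_{CR})$.

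Two hypotheses must be checked for these pairs. First, $\HH_{LC}=P_L\HH_C$ and $\HH_{CR}=P_R\HH_C$ are finite-dimensional and symmetry invariant: the symmetries act cellwise and $\HH_L=\HH_{<a}$, $\HH_R=\HH_{\ge a}$ are sums of cells, so the symmetries commute with $P_L$ and $P_R$, and $\HH_C$ is symmetry invariant. Second, the condition of Theorem~\ref{thm:eigC}(ii), namely that $\HH_{LC}^\bot$ contains no $\pm1$-eigenvector of $W_L$ (and likewise on the right): by Proposition~\ref{prop:C-pert}(i), $W$ and $\widetilde W$ have the same eigenvectors with the same eigenvalues inside $\HH_C^\bot$, so from the hypothesis on $\HH_C^\bot$ we get that $\HH_C^\bot$ contains no $\pm1$-eigenvector of $\widetilde W=W_L\oplus W_R$; writing $\HH_C^\bot=(\HH_L\ominus\HH_{LC})\oplus(\HH_R\ominus\HH_{CR})$ and noting that a $\pm1$-eigenvector of $W_L$ in $\HH_L\ominus\HH_{LC}$, padded with $0$, would be a $\pm1$-eigenvector of $\widetilde W$ in $\HH_C^\bot$, the claim follows (symmetrically for $W_R$). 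Theorem~\ref{thm:si} (via Theorem~\ref{thm:eigC}) then gives $\six_\pm(W_L)=\six_\pm(f_L(\pm1))$ and $\six_\pm(W_R)=\six_\pm(f_R(\pm1))$, and substituting into the decompositions of $\sixL(W)$ and $\sixR(W)$ above completes the proof.

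The main obstacle I anticipate is the bookkeeping around the two operators $W$ and $\widetilde W=VW$: all the spectral data needed on the right-hand side live with the block-diagonal $\widetilde W$, whereas $\sixL(W),\sixR(W)$ and the hypothesis on $\HH_C^\bot$ are phrased for $W$. Proposition~\ref{prop:C-pert} is precisely the bridge, but one has to be careful that it is the containment $\HH_V\subset\HH_C$ itself --not merely $\HH_V$ lying in some sum of cells-- that makes it applicable, and that it is the cell structure of $\HH_{<a}$ and $\HH_{\ge a}$ that keeps $\HH_{LC},\HH_{CR}$ symmetry invariant. Everything else reduces to routine manipulations of \eqref{eq:f-W} once the purely algebraic identity of the first paragraph is in hand.
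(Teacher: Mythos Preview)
Your proposal is correct and follows the same route as the paper: Proposition~\ref{prop:V} for $\widetilde f=fV_C^*$, the block-diagonal structure of $\widetilde W=W_L\oplus W_R$ to identify $\widetilde f=f_L\oplus f_R$, the finite-rank observation to make the decoupling compact, and Proposition~\ref{prop:C-pert}(i) together with Theorem~\ref{thm:eigC} applied to $(W_{L/R},\HH_{LC/CR})$ for the index formulas. You are in fact more explicit than the paper on two points it leaves tacit --- the verification that $\HH_{LC}$ and $\HH_{CR}$ are symmetry invariant (via cellwise action and $\HH_{L/R}$ being sums of cells) and the detailed block computation showing $\widetilde f=f_L\oplus f_R$ --- but the underlying argument is identical.
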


\begin{proof}
Under the initial hypothesis of the theorem, we find from Proposition~\ref{prop:V} that $fV_C^* = f_L \oplus f_R$.

If, besides, $\dim\HH_C<\infty$, then $V-\1$ is finite rank because $\HH_V\subset\HH_C$. Hence, $VW=W_L \oplus W_R$ is a compact decoupling of $W$. Since this decoupling is symmetry preserving, $W_L \oplus W_R$ is an admissible walk such that $\sixL(W)=\six(W_L)=\six_+(W_L)+\six_-(W_L)$ and $\sixR(W)=\six(W_R)=\six_+(W_R)+\six_-(W_R)$. Then, according to Theorem~\ref{thm:si}.{\it(ii)}, $\six_\pm(W_{L/R})=\six_\pm(f_{L/R}(\pm1))$ if $\HH_{L/R}\ominus\HH_{LC/CR}$ contains no $\pm1$-eigenvectors of $W_{L/R}$, which, in view of Proposition~\ref{prop:C-pert}.{\it(i)}, means that $\HH_C^\bot$ has no $\pm1$-eigenvectors of $VW$.
\end{proof}

The hypothesis in the above theorem can be considerably simplified if $\HH_C$ is a finite sum of cells, which requires a local decoupling. Then, the condition $P_L\HH_C\subset\HH_C$, as well as the symmetry invariance and finite-dimensionality of $\HH_C$, are automatic. Furthermore, Proposition~\ref{prop:exC} shows that a finite sum of cells $\HH_C$ can be always enlarged enough to guarantee that $\HH_C^\bot$ has no $\pm1$-eigenvectors of $W$. For convenience, we state separately the practical consequence of Theorem~\ref{thm:dec} derived from these remarks.

\begin{cor} \label{cor:dec}
Let $VW=W_L\oplus W_R$ be a decoupling of a unitary $W$ and $\HH_C\supset\HH_V$ a sum of cells. Then the Schur function $f$ of $\HH_C$ with respect to $W$ is related to the Schur functions $f_{L/R}$ of $\HH_{LC/CR}$ with respect to $W_{L/R}$ by
$$
 f = (f_L \oplus f_R) V_C,
 \qquad
 V_C=V\upharpoonright\HH_C.
$$
If, besides, $W$ is an admissible walk and the decoupling is local and symmetry preserving, then $\HH_C$ can be chosen as a finite sum of cells such that $\HH_C^\bot$ contains no $\pm1$-eigenvectors of $W$ and, hence,
$$
 \sixL(W)=\six_+(f_L(1))+\six_-(f_L(-1)),
 \qquad
 \sixR(W)=\six_+(f_R(1))+\six_-(f_R(-1)).
$$
\end{cor}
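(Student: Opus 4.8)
The corollary repackages Theorem~\ref{thm:dec} under the stronger hypotheses made here, so the plan is simply to check that those extra hypotheses make the conditions of Theorem~\ref{thm:dec} automatic, and to produce the required $\HH_C$ using Proposition~\ref{prop:exC} and Theorem~\ref{thm:eigC}.

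First I would dispatch the Schur decoupling identity $f=(f_L\oplus f_R)V_C$. This is the first conclusion of Theorem~\ref{thm:dec}, whose only hypothesis beyond $\HH_C\supset\HH_V$ is $P_L\HH_C\subset\HH_C$. For the decouplings relevant here, namely those splitting a walk on the line as $\HH_{<a}\oplus\HH_{\ge a}$, $P_L$ projects onto the span of the cells with index $<a$; hence when $\HH_C$ is a sum of cells, $P_L\HH_C$ is the direct sum of the cells of $\HH_C$ with index $<a$ and is therefore contained in $\HH_C$ (this is the remark made just before Theorem~\ref{thm:dec}). In particular $\HH_{LC}$ and $\HH_{CR}$ are themselves sums of cells, and the identity follows verbatim from Theorem~\ref{thm:dec}.

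Next I would fix the choice of $\HH_C$ for the index formulas. Since the decoupling is local, $\HH_V$ is contained in $\bigoplus_{|x|\le N_1}\HH_x$ for some $N_1$. By Proposition~\ref{prop:exC} there is also an $N_2$ such that the sum of cells $\bigoplus_{|x|\le N_2}\HH_x$ generates a cyclic subspace containing the $\pm1$-eigenspaces of $W$. Putting $N=\max(N_1,N_2)$ and $\HH_C=\bigoplus_{|x|\le N}\HH_x$, the subspace $\HH_C$ is a finite sum of cells, contains $\HH_V$, is symmetry invariant because each symmetry acts locally in each cell, and --- since enlarging a subspace only enlarges the cyclic subspace it generates --- still generates a cyclic subspace containing the $\pm1$-eigenspaces of $W$. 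By the equivalence of conditions (i) and (ii) in Theorem~\ref{thm:eigC}, this last property is precisely the statement that $\HH_C^\bot$ contains no $\pm1$-eigenvector of $W$.

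Finally I would feed this $\HH_C$ into the second half of Theorem~\ref{thm:dec}: $W$ is an admissible walk, the decoupling is symmetry preserving, $\HH_C$ is a finite-dimensional symmetry invariant subspace with $\HH_C^\bot$ free of $\pm1$-eigenvectors of $W$, and $P_L\HH_C\subset\HH_C$ --- so all hypotheses hold and we obtain $\sixL(W)=\six_+(f_L(1))+\six_-(f_L(-1))$ and $\sixR(W)=\six_+(f_R(1))+\six_-(f_R(-1))$. There is essentially no obstacle in this argument; the only point that deserves a moment's attention is that a single finite sum of cells can simultaneously contain the decoupling subspace and cyclically reach the $\pm1$-eigenspaces of $W$, which is clear because each of the two requirements involves only finitely many cells and each is preserved when $\HH_C$ is enlarged.
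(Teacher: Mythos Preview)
Your proof is correct and follows essentially the same route as the paper: the remarks preceding the corollary already explain that for a sum of cells the condition $P_L\HH_C\subset\HH_C$, symmetry invariance, and finite-dimensionality are automatic, and Proposition~\ref{prop:exC} supplies a large enough finite sum of cells so that $\HH_C^\bot$ has no $\pm1$-eigenvectors; you simply make explicit the step of taking a common enlargement containing $\HH_V$ and invoke Theorem~\ref{thm:eigC} to pass between the two equivalent formulations of that eigenvector condition.
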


Eventually, an admissible walk may have a cyclic subspace constituted by a finite sum of cells. This cyclic sum of cells --enlarged to include the decoupling subspace $\HH_V$ if necessary-- is an ideal candidate to play the role of $\HH_C$ in the above corollary. This will be our choice in the examples of Sect.~\ref{sec:ssrep} and \ref{sec:ex}.

All the previous comments refer to what we could call left perturbations $W \mapsto VW$. One can also consider right perturbations $W \mapsto WV'$, and the previous results have obvious extensions to this case. The easiest way to see this is by noting that such a right perturbation is equivalent to a left perturbation $W^* \mapsto V'^*W^*$, while the transformation $W\mapsto W^*$ induces the mapping $f(z)\mapsto f(\overline{z})^*$ on Schur functions. Nevertheless, left and right perturbations have the same effect if $V'=W^*VW$, hence it seems that there is no need to consider right perturbations. However, the corresponding perturbation subspaces $\HH_V=(V-\1)\HH$ and $\HH_{V'}=(V'-\1)\HH=W^*\HH_V$ could be quite different, so that, in practice, to perform a decoupling may be simpler sometimes to use right perturbations than left ones. Actually, unless the walk is strictly local, the locality of a right perturbation is not inherited by the corresponding left one and viceversa. In the following examples we will use the freedom in the choice of left/right perturbations to make the calculations as simple as possible, resorting to the version of the above results for right perturbations when necessary.

\section{Split-step representatives of topological phases}
\label{sec:ssrep}
The experimental realization of quantum walks typically resorts to combinations of shift type operators and unitaries acting independently on each cell, known as coin operators. These constructions also provide examples of walks with non-trivial symmetry protected topological phases and, in addition, give rise to fruitful interplays with mathematical tools such as CMV matrices or Schur functions. In this section, we are going to characterize the topological phases that are realizable inside a particular model of symmetric quantum walks, the so called split-step walks, which have been studied extensively in the literature \cite{Kita}.

They describe the time-discrete dynamics of a single particle with a two-dimensional internal degree of freedom in one spatial dimension. Accordingly, the Hilbert space is $\HH = \ell_2(\Ir)\otimes\Cx^2$ and the class of evolution operators in split-step form,
\begin{equation} \label{eq:SS}
W = S_\downarrow C_2 S_\uparrow C_1,
\end{equation}
is given by the consecutive application of two partial shift operators,
\begin{equation} \label{eq:pshifts}
 S_\uparrow = \sum_{x\in\Ir}
 |x+1\uparrow\>\<x\uparrow\!| + |x\downarrow\>\<x\downarrow\!|,
 \qquad
 S_\downarrow = \sum_{x\in\Ir}
 |x\uparrow\>\<x\uparrow\!| + |x-1\downarrow\>\<x\downarrow\!|,
\end{equation}
interspersed with two coin operators $C_i = \bigoplus_{x\in\Ir} C_{i,x}$ acting locally in each cell $\HH_x$ as an $x$-dependent rotation $C_{i,x}$, which we identify with its matrix representation with respect to $\{|x\uparrow\rangle,|x\downarrow\rangle\}$,
\begin{equation} \label{eq:coins}
 C_{i,x} = R(\theta_{i,x}),
 \qquad
 R(\theta) =
 \begin{pmatrix}
	\cos\theta & -\sin\theta \\ \sin\theta & \cos\theta
 \end{pmatrix},
 \qquad
 \theta_{i,x}\in\textstyle(-\frac{\pi}{2},\frac{\pi}{2}).
\end{equation}
This model contains coined walks, which arise when $\theta_{2,x}=0$ for every $x\in\Ir$, so that $W=SC_1$ with $S=S_\downarrow S_\uparrow$ the standard conditional shift which moves the up/down states to the right/left respectively.

Split-step walks are the most widely used models to illustrate the topological phases in quantum walks. Their phase diagram has been identified in the translation invariant case corresponding to constant angles $\theta_{1,x}$, $\theta_{2,x}$ \cite{short,long} (an interactive demonstration is provided at \cite{sse}), but few results are known in the general situation \cite{long}. The Schur approach will go beyond this, giving a complete classification of topological phases for non-translation invariant split-step walks. The following theorem summarizes our results, which will be proven in the subsequent subsections.

\begin{thm} \label{thm:SS-class}
The split-step walks \eqref{eq:SS} with essential gaps around $\pm1$ exhibit 15 symmetry protected topological phases, i.e. homotopy classes of admissible walks belonging to the symmetry type $\symS = \{\1,\ph,\rv,\ch\}$, $\ph^2=\rv^2=\1$, which are summarized in the following table.
\begin{equation} \label{eq:table-16-ti-thm}
\renewcommand{\arraystretch}{1.5}
\begin{tabular}{|c|c|c|c|c|c|}
 \hline
 \;$(\sixL(W),\sixR(W),\six_-(W))$\;	& \kern-2pt
 & \;$f_R(\pm1)=\pm1$\; 				& \;$f_R(\pm1)=\mp1$\;
 & \;$f_R(\pm1)=1$\; 					& \;$f_R(\pm1)=-1$\;
 \\ \hline
 & \kern-2pt & & & &
 \\[-15.7pt]
 \hline
 $f_L(\pm1)=\pm1$		& \kern-2pt
 						& $\begin{gathered}
 							\\[-12pt]
 							(1,1,1)
 							\\[-4pt]
							\begin{aligned}
								& \scriptstyle\theta_{1,L}<0
								& & \scriptstyle\theta_{1,R}>0
								\\[-5pt]
								& \scriptstyle\theta_{2,L}=0
								& & \scriptstyle\theta_{2,R}=0
							\end{aligned}
   						   \end{gathered}$
						& $\begin{gathered}
 							\\[-12pt]
 							(1,-1,0)
 							\\[-4pt]
							\begin{aligned}
								& \scriptstyle\theta_{1,L}<0
								& & \scriptstyle\theta_{1,R}<0
								\\[-5pt]
								& \scriptstyle\theta_{2,L}=0
								& & \scriptstyle\theta_{2,R}=0
							\end{aligned}
   						   \end{gathered}$
						& $\begin{gathered}
 							\\[-12pt]
 							(1,0,0)
 							\\[-4pt]
							\begin{aligned}
								& \scriptstyle\theta_{1,L}<0
								& & \scriptstyle\theta_{1,R}=0
								\\[-5pt]
								& \scriptstyle\theta_{2,L}=0
								& & \scriptstyle\theta_{2,R}>0
							\end{aligned}
   						   \end{gathered}$ 	
						& $\begin{gathered}
 							\\[-12pt]
 							(1,0,1)
 							\\[-4pt]
							\begin{aligned}
								& \scriptstyle\theta_{1,L}<0
								& & \scriptstyle\theta_{1,R}=0
								\\[-5pt]
								& \scriptstyle\theta_{2,L}=0
								& & \scriptstyle\theta_{2,R}<0
							\end{aligned}
   						   \end{gathered}$
 \\ \hline
 $f_L(\pm1)=\mp1$ 		& \kern-2pt	
 						& $\begin{gathered}
 							\\[-12pt]
 							(-1,1,0)
 							\\[-4pt]
							\begin{aligned}
								& \scriptstyle\theta_{1,L}>0
								& & \scriptstyle\theta_{1,R}>0
								\\[-5pt]
								& \scriptstyle\theta_{2,L}=0
								& & \scriptstyle\theta_{2,R}=0
							\end{aligned}
   						   \end{gathered}$
						& $\begin{gathered}
 							\\[-12pt]
 							(-1,-1,-1)
 							\\[-4pt]
							\begin{aligned}
								& \scriptstyle\theta_{1,L}>0
								& & \scriptstyle\theta_{1,R}<0
								\\[-5pt]
								& \scriptstyle\theta_{2,L}=0
								& & \scriptstyle\theta_{2,R}=0
							\end{aligned}
   						   \end{gathered}$
						& $\begin{gathered}
 							\\[-12pt]
 							(-1,0,-1)
 							\\[-4pt]
							\begin{aligned}
								& \scriptstyle\theta_{1,L}>0
								& & \scriptstyle\theta_{1,R}=0
								\\[-5pt]
								& \scriptstyle\theta_{2,L}=0
								& & \scriptstyle\theta_{2,R}>0
							\end{aligned}
   						   \end{gathered}$
						& $\begin{gathered}
 							\\[-12pt]
 							(-1,0,0)
 							\\[-4pt]
							\begin{aligned}
								& \scriptstyle\theta_{1,L}>0
								& & \scriptstyle\theta_{1,R}=0
								\\[-5pt]
								& \scriptstyle\theta_{2,L}=0
								& & \scriptstyle\theta_{2,R}<0
							\end{aligned}
   						   \end{gathered}$   	 	
 \\ \hline
 $f_L(\pm1)=1$ 			& \kern-2pt
 						& $\begin{gathered}
 							\\[-12pt]
 							(0,1,0)
 							\\[-4pt]
							\begin{aligned}
								& \scriptstyle\theta_{1,L}=0
								& & \scriptstyle\theta_{1,R}>0
								\\[-5pt]
								& \scriptstyle\theta_{2,L}<0
								& & \scriptstyle\theta_{2,R}=0
							\end{aligned}
   						   \end{gathered}$
						& $\begin{gathered}
 							\\[-12pt]
 							(0,-1,-1)
 							\\[-4pt]
							\begin{aligned}
								& \scriptstyle\theta_{1,L}=0
								& & \scriptstyle\theta_{1,R}<0
								\\[-5pt]
								& \scriptstyle\theta_{2,L}<0
								& & \scriptstyle\theta_{2,R}=0
							\end{aligned}
   						   \end{gathered}$
						& $\begin{gathered}
 							\\[-12pt]
 							(0,0,-1)
 							\\[-4pt]
							\begin{aligned}
								& \scriptstyle\theta_{1,L}=0
								& & \scriptstyle\theta_{1,R}=0
								\\[-5pt]
								& \scriptstyle\theta_{2,L}<0
								& & \scriptstyle\theta_{2,R}>0
							\end{aligned}
   						   \end{gathered}$
						& $\begin{gathered}
 							\\[-12pt]
 							(0,0,0)
 							\\[-4pt]
							\begin{aligned}
								& \scriptstyle\theta_{1,L}=0
								& & \scriptstyle\theta_{1,R}=0
								\\[-5pt]
								& \scriptstyle\theta_{2,L}<0
								& & \scriptstyle\theta_{2,R}<0
							\end{aligned}
   						   \end{gathered}$
 \\ \hline
 $f_L(\pm1)=-1$ 		& \kern-2pt
 						& $\begin{gathered}
 							\\[-12pt]
 							(0,1,1)
 							\\[-4pt]
							\begin{aligned}
								& \scriptstyle\theta_{1,L}=0
								& & \scriptstyle\theta_{1,R}>0
								\\[-5pt]
								& \scriptstyle\theta_{2,L}>0
								& & \scriptstyle\theta_{2,R}=0
							\end{aligned}
   						   \end{gathered}$
						& $\begin{gathered}
 							\\[-12pt]
 							(0,-1,0)
 							\\[-4pt]
							\begin{aligned}
								& \scriptstyle\theta_{1,L}=0
								& & \scriptstyle\theta_{1,R}<0
								\\[-5pt]
								& \scriptstyle\theta_{2,L}>0
								& & \scriptstyle\theta_{2,R}=0
							\end{aligned}
   						   \end{gathered}$
						& $\begin{gathered}
 							\\[-12pt]
 							(0,0,0)
 							\\[-4pt]
							\begin{aligned}
								& \scriptstyle\theta_{1,L}=0
								& & \scriptstyle\theta_{1,R}=0
								\\[-5pt]
								& \scriptstyle\theta_{2,L}>0
								& & \scriptstyle\theta_{2,R}>0
							\end{aligned}
   						   \end{gathered}$
						& $\begin{gathered}
 							\\[-12pt]
 							(0,0,1)
 							\\[-4pt]
							\begin{aligned}
								& \scriptstyle\theta_{1,L}=0
								& & \scriptstyle\theta_{1,R}=0
								\\[-5pt]
								& \scriptstyle\theta_{2,L}>0
								& & \scriptstyle\theta_{2,R}<0
							\end{aligned}
   						   \end{gathered}$
 \\ \hline
\end{tabular}
\end{equation}
Here, $f_L=f_L^{x\uparrow}$ and $f_R=f_R^{x\downarrow}$ are Schur functions corresponding to the left or right part of a decoupling of the walk respectively, which are defined in \eqref{eq:fLRx}. The $x$-independent values $f_L(\pm1),f_R(\pm1)\in\{1,-1\}$ determine the topological invariants characterizing the different phases via \eqref{eq:SS-SI} and \eqref{eq:SS-SILR}. As indicated in the above table, any of these phases has a split-step representative given by a crossover of two translation invariant split-step walks with constant angles $\theta_{i,x}=\theta_{i,L/R}$, which is shown to be admissible and, in particular, gapped in Lemma~\ref{lem:SS-cross}.
\end{thm}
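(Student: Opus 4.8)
The strategy is to reduce the whole classification, via the Schur machinery of Sections~\ref{sec:S-I}--\ref{sec:dec}, to a finite computation with the two scalar functions $f_L,f_R$. First I would identify the symmetries and the symmetry type: from \eqref{eq:pshifts}--\eqref{eq:coins} the partial shifts and the real rotation coins $R(\theta_{i,x})$ are real in the standard basis $\{|x\uparrow\rangle,|x\downarrow\rangle\}$, and one verifies (with the explicit choice of symmetries given in the subsequent subsections) that $W$ carries commuting particle-hole, time-reversal and chiral involutions with $\ph^2=\rv^2=\1$; the relevant type is therefore $\symS=\{\1,\ph,\rv,\ch\}$ with $\ch^2=\1$, which is case (III) of \eqref{eq:SIfin} and \eqref{eq:SI+-fin}. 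Once the symmetries are in place, an essentially gapped split-step walk is automatically an admissible walk (locality (c) is trivial since the shifts have range one), and it remains only to compute $\sixL,\sixR,\six_-$.

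Next I would produce an explicit local, symmetry-preserving decoupling $VW=W_L\oplus W_R$ by severing the two bonds crossing a cut between cells $b-1$ and $b$, so that the perturbation subspace $\HH_V$ sits inside a finite sum of cells. Enlarging to a cyclic finite sum of cells $\HH_C\supset\HH_V$ through Proposition~\ref{prop:exC} places us in the hypotheses of Corollary~\ref{cor:dec}, giving
$$
 \sixL(W)=\six_+(f_L(1))+\six_-(f_L(-1)),
 \qquad
 \sixR(W)=\six_+(f_R(1))+\six_-(f_R(-1)),
$$
while $f=(f_L\oplus f_R)V_C$ together with Theorem~\ref{thm:si}.{\it(ii)} applied to the undecoupled walk gives $\six_\pm(W)=\six_\pm(f(\pm1))$. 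The functions $f_L=f_L^{x\uparrow}$, $f_R=f_R^{x\downarrow}$ of \eqref{eq:fLRx} are the scalar Schur functions of one-dimensional cell components with respect to the half-line walks $W_{L/R}$; using the identification of the (doubly infinite, and after decoupling one-sided) split-step walk with a CMV-type matrix \eqref{eq:CMV0}--\eqref{eq:eCMV} whose bulk Schur parameters alternate between two values fixed by the coin angles $\theta_1,\theta_2$, the Schur algorithm \eqref{eq:SA} computes $f_{L/R}$ explicitly. The essential point is that, the bulk Schur parameters being periodic of period two, the bulk Schur function is a fixed point of the composite Schur-algorithm step, whose fixed-point equation at $z=\pm1$ forces $f_{L/R}(\pm1)^2=1$; combined with the unimodularity of $f_{L/R}(\pm1)$ provided by the essential gap and Theorem~\ref{thm:u-eg}.{\it(i)}, this yields $f_{L/R}(\pm1)\in\{1,-1\}$, with the sign read off from the sign of the asymptotic coin angle, hence $x$-independent throughout the admissible regime.

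With this in place the classification becomes a table. The four possible pairs $(f_{L/R}(1),f_{L/R}(-1))$ label the rows and columns of \eqref{eq:table-16-ti-thm}, and feeding them through \eqref{eq:SS-SILR} (together with \eqref{eq:SS-SI} for $\six_\pm$) fills in the triple $(\sixL(W),\sixR(W),\six_-(W))$ in each of the sixteen cells; one then checks that exactly two of them produce $(0,0,0)$, leaving fifteen distinct triples. To see that all fifteen occur as genuine admissible split-step walks I would exhibit, in each cell, the crossover of the two translation invariant split-step walks with the constant angles displayed there and prove Lemma~\ref{lem:SS-cross}: that this crossover is essentially gapped around $\pm1$, hence admissible. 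Completeness in the other direction is then immediate, since the theory summarized in Sect.~\ref{sec:SPTP} makes $(\sixL,\sixR,\six_-)$ a complete set of homotopy invariants of admissible walks, and the computation above confines these invariants, for split-step walks, to exactly those fifteen values.

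I expect the real work to sit in two places. Evaluating $f_{L/R}$ — carrying the CMV/Geronimus identification through the decoupling and matching the signs of $f_{L/R}(\pm1)$ to the angle regions — is routine but requires care, because the boundary cell of a decoupled half-line walk generically carries non-bulk Schur parameters which must be handled before the periodic bulk argument applies. The genuine obstacle, however, is Lemma~\ref{lem:SS-cross}: the interpolating crossover is not translation invariant, so the Fourier dispersion relation is unavailable for showing the gap stays open; one instead needs a transfer-matrix estimate excluding $\pm1$ from the essential spectrum together with the absence of $\pm1$-eigenvectors, or equivalently a direct argument (through Theorem~\ref{thm:u-eg}) that the associated Schur function stays unimodular and $\det(\1-zf(z))\ne0$ at $z=\pm1$ along the entire interpolation.
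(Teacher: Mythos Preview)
Your overall strategy matches the paper's: reduce to the scalar Schur data $f_{L/R}(\pm1)$ via the decoupling machinery, fill in the $4\times4$ table from \eqref{eq:SS-SI} and \eqref{eq:SS-SILR}, and then realize all fifteen triples by the crossovers of Lemma~\ref{lem:SS-cross}. The gap is in how you argue that $f_{L/R}(\pm1)\in\{1,-1\}$ for a \emph{general} essentially gapped split-step walk.

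You claim this via periodicity of the Schur parameters: ``the bulk Schur parameters being periodic of period two, the bulk Schur function is a fixed point of the composite Schur-algorithm step, whose fixed-point equation at $z=\pm1$ forces $f_{L/R}(\pm1)^2=1$.'' But the split-step model \eqref{eq:coins} allows arbitrary $x$-dependent angles $\theta_{1,x},\theta_{2,x}$, so the CMV Schur parameters $\alpha_{2x}=\sin\theta_{2,x}$, $\alpha_{2x+1}=\sin\theta_{1,x}$ are in general \emph{not} periodic, not even asymptotically. Your fixed-point argument therefore covers only the translation invariant (or eventually periodic) case, which is exactly the situation the theorem is meant to go beyond. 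The same objection applies to your claim that the sign of $f_{L/R}(\pm1)$ is ``read off from the sign of the asymptotic coin angle'': there need be no asymptotic angle.

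The paper's fix is short and symmetry-based. Since $\HH_0$ is cyclic (Proposition~\ref{prop:dSS-CYCLIC}) and $\ch$-invariant, Theorem~\ref{thm:sym} makes $\ch_0$ a chiral symmetry of the $2\times2$ unitary $\bs f(\pm1)$, so $\ch_0\bs f(\pm1)$ is a unitary involution. But the right decoupling at $\HH_0$ gives $\bs f=\ch_0^*(f_L\oplus f_R)$, hence $\ch_0\bs f(\pm1)=f_L(\pm1)\oplus f_R(\pm1)$ is diagonal with scalar entries; an involution forces those entries into $\{1,-1\}$. No periodicity is used. The $x$-independence of $f_{L/R}(\pm1)$ is then obtained, not from asymptotics, but by comparing the formulas \eqref{eq:SS-SI} and \eqref{eq:SS-SI2} arising from decouplings at different cells, which forces $f_L^{x\uparrow}(\pm1)$ and $f_R^{x\downarrow}(\pm1)$ to be constant in $x$. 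Your periodic Schur-algorithm computation is still needed, but only inside Lemma~\ref{lem:SS-cross}, to evaluate $f_{L/R}(\pm1)$ for the specific crossover representatives; there the Schur parameters genuinely are $2$-periodic and the fixed-point equation \eqref{eq:f1} is available.
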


Before starting with the proof of the theorem, let us first give some perspective on the result. First, according to table \eqref{eq:table-equivrel}, if we identify, not only homotopic admissible walks, but also admissible walks related by compact perturbations, the above table also gives information about the corresponding equivalence classes, which are characterized by the indices $\sixL(W)$, $\sixR(W)$. Omitting the index $\six_-(W)$ in the above table we find that, under this weaker relation, the set of essentially gapped split-step walks splits into 9 classes.

On the other hand, keeping the homotopy equivalence relation but forgetting the essential locality condition enlarges the set of unitaries to consider along the homotopy deformations, leading again to a possible reduction of the number of classes. As table \eqref{eq:table-equivrel} shows, the homotopy classes of the set of admissible unitaries are labelled by the indices $\six_\pm(W)$, which, for essentially gapped split-step walks, are separately given in the table below. We find again that only 9 of such homotopy classes are present for split-step walks, although not all of them coincide with those induced by the identification of homotopic admissible walks and compact perturbations. The reduction of classes makes evident that some split-step walks sharing one of these 9 classes cannot be connected by admissible walks, thus the corresponding homotopy necessarily violates essential locality and, hence, escapes from the split-step model.
\renewcommand{\arraystretch}{1.5}
\begin{center}
\begin{tabular}{|c|c|c|c|c|c|}
 \hline
 \;$(\six_+(W),\six_-(W))$\; 	& \kern-2pt
 & \;$f_R(\pm1)=\pm1$\; 		& \;$f_R(\pm1)=\mp1$\;
 & \;$f_R(\pm1)=1$\; 			& \;$f_R(\pm1)=-1$\;
 \\ \hline
 & \kern-2pt & & & &
 \\[-15.7pt]
 \hline
 $f_L(\pm1)=\pm1$ 	& \kern-2pt	
 					& $(1,1)$ & $(0,0)$ & $(1,0)$ & $(0,1)$
 \\ \hline
 $f_L(\pm1)=\mp1$ 	& \kern-2pt
 					& $(0,0)$ & $(-1,-1)$ & $(0,-1)$ & $(-1,0)$
 \\ \hline
 $f_L(\pm1)=1$ 		& \kern-2pt
 					& $(1,0)$ & $(0,-1)$ & $(1,-1)$ & $(0,0)$
 \\ \hline
 $f_L(\pm1)=-1$ 	& \kern-2pt
 					& $(0,1)$ & $(-1,0)$ & $(0,0)$ & $(-1,1)$
 \\ \hline
\end{tabular}
\end{center}

Besides, if we enforce translation invariance, we know that $\six_\pm(W)=0$ because the essential gaps become strict gaps, hence $\sixL(W)+\sixR(W)=\six(W)=0$. The only cases in table \eqref{eq:table-16-ti-thm} which are compatible with these conditions are those corresponding to the indices $(-1,1,0)$, $(1,-1,0)$ and $(0,0,0)$. Furthermore, all these phases have translation invariant split-step representatives given in table \eqref{eq:table-16-ti-thm}, namely
$$
 (1,-1,0) \to
 \begin{cases}
 	\theta_{1,L}=\theta_{1,R}<0,
	\\
	\theta_{2,L}=\theta_{2,R}=0,
 \end{cases}
 \qquad
 (-1,1,0) \to
 \begin{cases}
 	\theta_{1,L}=\theta_{1,R}>0,
	\\
	\theta_{2,L}=\theta_{2,R}=0,
 \end{cases}
 \qquad
 (0,0,0) \to
 \begin{cases}
 	\theta_{1,L}=\theta_{1,R}=0,
	\\
	\theta_{2,L}=\theta_{2,R}\ne0.
 \end{cases}
$$
Hence, translation invariant split-step walks exhibit only 3 phases.  Since in the translation invariant case the single index $\sixR(W)$ labels the phases, we can refer to these 3 phases in short as $(1,-1,0)\equiv-1$, $(-1,1,0)\equiv1$ and $(0,0,0)\equiv0$. This result was already known for the case of constant angles $\theta_{i,x}=\theta_i$ \cite{short,long}. Note however that the previous arguments go further: They prove that the same classification holds for general translation invariant split-step walks, i.e. for any periodic sequences $\theta_{1,x}$, $\theta_{2,x}$ with an arbitrary common period.

Regarding the phase $(0,0,0)$, there should be a homotopy of admissible walks connecting the related split-step representatives with different signs for $\theta_{2,L}=\theta_{2,R}$, although this homotopy cannot be the trivial deformation of this angle because the gaps close when $\theta_{2,L}=\theta_{2,R}=0$. An example of such an homotopy --which escapes from the split-step model-- is explicitly given in \cite{ti}.

That every phase has a representative which is a crossover of translation invariant ones is a general result in the theory of symmetry protected topological phases for 1D walks \cite{long}. Nevertheless, the fact that for the split-step phases these representatives may be chosen as split-step walks, although maybe not surprising, is non-trivial. In other words, prior to our previous analysis, there was no indication that the crossovers of translation invariant split-step walks should exhaust all the split-step phases.

Furthermore, the fact that every phase may be realized by a crossover of translation invariant ones does not mean that the translation invariant situation is the end of the story. First, a rigorous mathematical treatment of the bulk-edge correspondence needs to include such crossovers in the theory,  breaking translation invariance.

Second, one cannot naively guess the phases for a non-translation invariant model by just combining those of the translation invariant case. For instance, the existence of 3 phases for translation invariant split-step walks would suggest that the 9 crossovers among them yield the same amount of phases for non-translation invariant split-step walks, a faulty line of reasoning which does not predict the 15 phases that actually exist. The extra 6 phases arise because in a crossover between two translation invariant split-step walks, changing the left or right one by another in the same phase may change the phase of the crossover. A direct inspection of table~\eqref{eq:table-16-ti-thm} reveals that this is the case when comparing the three phases $(0,0,0) \leftrightarrow (0,0,\pm1)$, but also the pairs of phases $(1,0,0) \leftrightarrow (1,0,1)$, $(-1,0,0) \leftrightarrow (-1,0,-1)$, $(0,1,0) \leftrightarrow (0,1,1)$ and $(0,-1,0) \leftrightarrow (0,-1,-1)$. These comparisons account for the extra 6 phases.

Third, the non-translation invariant general theory has physical implications concerning the bulk-edge principle which cannot be predicted in the standard translation invariant framework. To illustrate this, let us take a result from \cite{long} which provides highly non-translation invariant instances of strictly gapped split-step walks belonging to the phases $(1,-1,0)$ and $(-1,1,0)$: the correspondence
$$
 -1 \equiv (1,-1,0) \to
 \begin{cases}
 	\theta_{1,x}\in[-\frac{\pi}{2}-\epsilon,-\frac{\pi}{2}+\epsilon],
	\\
	\theta_{2,x}\in[-\epsilon',\epsilon'],
 \end{cases}
 \qquad
 1 \equiv (-1,1,0) \to
 \begin{cases}
 	\theta_{1,x}\in[\frac{\pi}{2}-\epsilon,\frac{\pi}{2}+\epsilon],
	\\
	\theta_{2,x}\in[-\epsilon',\epsilon'],
 \end{cases}
$$
holds only subject to the restriction
$$
 \sin\frac{\epsilon}{2} + \sin\frac{\epsilon'}{2}
 < \frac{1}{\sqrt{2}}.
$$
The bulk-edge correspondence described at the end of Sect.~\ref{sec:SPTP} implies that any crossover $W$ between a walk $W_1$ from the phase $(1,-1,0)$ and a walk $W_2$ from the phase $(-1,1,0)$ has indices $\sixL(W)=\sixL(W_1)=1$,   $\sixR(W)=\sixR(W_2)=1$ and $\six(W)=\sixR(W_2)-\sixR(W_1)=2$, so that the $\pm1$-eigenspaces $\HH^\pm$ of $W$ satisfy $\dim\HH^+ +\dim\HH^-\ge2$. An instance of this kind of crossover is any split-step walk $W$ such that, for large $|x|$,
\begin{equation} \label{eq:SS-cross2}
 \begin{cases}
 	\theta_{1,x}\in[-\frac{\pi}{2}-\epsilon_1,-\frac{\pi}{2}+\epsilon_1],
	& \theta_{2,x}\in[-\epsilon'_1,\epsilon'_1],
	\quad x<0,
	\\
	\theta_{1,x}\in[\frac{\pi}{2}-\epsilon_2,\frac{\pi}{2}+\epsilon_2],
	& \theta_{2,x}\in[-\epsilon'_2,\epsilon'_2],
	\quad x>0,	
 \end{cases}
 \qquad\quad
 \sin\frac{\epsilon_i}{2} + \sin\frac{\epsilon'_i}{2} < \frac{1}{\sqrt{2}}.
\end{equation}
The fact that, for such a crossover, the dimension of the combined $\pm1$-eigenspaces is bounded below by 2, is already a result beyond the scope of the translation invariant setting. In addition, the classification \eqref{eq:table-16-ti-thm} of topological phases for non-translation invariant split-step walks permits to go even further. Any split-step crossover $W$ satisfying \eqref{eq:SS-cross2} for large $|x|$ must belong to the phase $(1,1,1)$, the only split-step phase consistent with $\sixL(W)=\sixR(W)=1$. Therefore, $\six_\pm(W)=1$, which implies that actually $\dim\HH^\pm\ge1$. Since the 2-dimensional cells are cyclic, using Proposition~\ref{prop:CYCLIC}.{\it(iii)} we conclude that $1 \le \dim\HH^\pm \le 2$. This is illustrated in Fig.~\ref{approxev}.
\begin{figure}[!t]
	\includegraphics{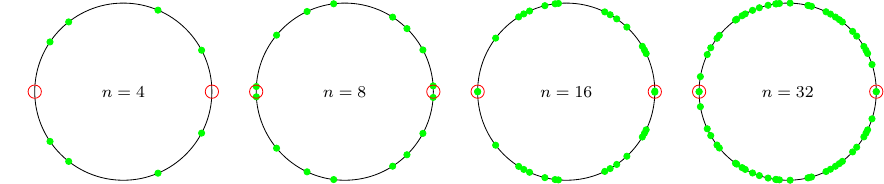}
	\includegraphics{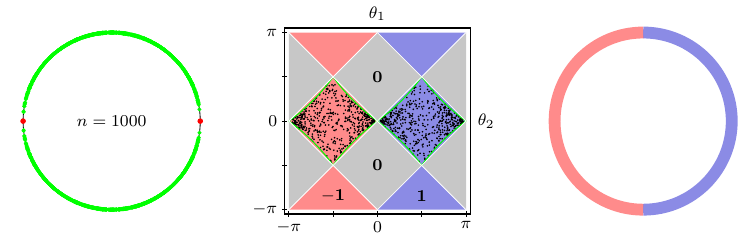}
	\caption{\label{approxev} Finite dimensional approximations of a highly non-translation invariant split-step walk \eqref{eq:SS}. The walk is defined on a circle of $n$ cells. On the right side of the circle the coin angles are picked randomly from the $+1$ phase (blue), whereas on the left side the coin angles are picked from the $-1$ phase (red) according to \eqref{eq:SS-cross2}. The bottom mid and bottom right graph visualize the randomly chosen coin angles for $n=1000$ and the geometry of the system. The green dots represent the spectrum for the given values of $n$. These approximations provide a sequence of unitaries converging in the strong sense to a split-step walk in the phase $(1,1,1)$, hence they exhibit eigenvalues approaching to $\pm1$ (in general, in complex conjugated pairs due to the symmetries).}
\end{figure}

The rest of this section is devoted to the proof of Theorem~\ref{thm:SS-class}.

\subsection{A class of symmetric quantum walks in 1D}
\label{subse:classofsymQW}
The first important question to consider is to understand which split-step walks actually correspond to admissible unitaries for some symmetry class. Instead of looking exclusively at the split-step model, we will consider a slightly generalized model, which in particular allows for higher dimensional coins, a question previously not much discussed in the literature. This will give us the chance to test the Schur machinery in highly non-trivial examples whose phase diagram was unknown so far even in the translation invariant case (see Sect.~\ref{sec:ex}). More precisely, the examples which we are going to consider will still be generated by partial shifts and coin operators, but we allow for cells of arbitrary even dimension $2d$,
\begin{equation} \label{eq:dcells}
 \HH_x=\spn\{|x\uparrow r\>,|x\downarrow r\>:r=1,2,\dots,d\}.
\end{equation}
The quantum walks are then assumed to take the form \eqref{eq:SS}, but now the partial shifts $S_{\uparrow/\downarrow}$,
\begin{equation} \label{eq:pShifts}
 S_\uparrow =
	\kern-5pt \sum_{\substack{n\in\Ir \\ r\in\{1,\dots,d\}}} \kern-5pt
	|x+1\uparrow r\>\<x\uparrow r| + |x\downarrow r\>\<x\downarrow r|, 	
 \qquad
 S_\downarrow =
	\kern-5pt \sum_{\substack{n\in\Ir \\ r\in\{1,\dots,d\}}} \kern-5pt
	|x\uparrow r\>\<x\uparrow r| + |x-1\downarrow r\>\<x\downarrow r|,
\end{equation}
move forward or backward the ``half-cells'' $\HH_x^{\uparrow/\downarrow}$ given by
\begin{equation} \label{eq:hc}
 \HH_x^\uparrow = \spn\{|x\uparrow r\>:r=1,2,\dots,d\},
 \qquad
 \HH_x^\downarrow=\spn\{|x\downarrow r\>:r=1,2,\dots,d\}.
\end{equation}
Also, the coin operators $C_i = \bigoplus_{x\in\Ir} C_{i,x}$ are given by $2d\times2d$ matrices $C_{i,x}$ representing the action of $C_i$ on $\HH_x$ with respect to the orthonormal basis $\{|x\uparrow1\rangle,\dots,|x\uparrow d\rangle,|x\downarrow1\rangle,\dots,|x\downarrow d\rangle\}$. Apart from their unitarity, we will only assume a few conditions on $C_{i,x}$ guaranteeing a chiral symmetry for $W$.

\begin{prop} \label{prop:dSS}
The operator $W = S_\downarrow C_2 S_\uparrow C_1$ given by the partial shifts \eqref{eq:pShifts} is a unitary with a chiral symmetry $\ch$ such that $\ch^2=\1$ whenever the $2d\times 2d$ matrices $C_{i,x}$ have a $d\times d$-block structure satisfying
\begin{equation} \label{eq:hC}
 C_{i,x} =
 \begin{pmatrix}
 	 A_{i,x} & \widehat{B}_{i,x} \\ B_{i,x} & A_{i,x}^*
 \end{pmatrix},
 \qquad
 \begin{aligned}
 	& \det A_{i,x}\ne0,
	& \quad & \widehat{B}_{i,x}=-A_{i,x}B_{i,x}A_{i,x}^{-1},
 	\\
	& B_{i,x}^*=B_{i,x},
	& & A_{i,x}^*A_{i,x}+B_{i,x}^2=\1_d,
 \end{aligned}
\end{equation}
where $\1_d$ stands for the $d\times d$ identity matrix. Then, $W=\tw\ch\ch$ with
$$
 \ch = \bigoplus_{x\in\Ir} \ch_x =
 \bigoplus_{x\in\Ir}
 \begin{pmatrix}
 	B_{1,x} & A_{1,x}^* \\ A_{1,x} & \widehat{B}_{1,x}
 \end{pmatrix},
 \qquad
 \tw\ch = \bigoplus_{x\in\Ir} \tw\ch_x =
 \bigoplus_{x\in\Ir}
 \begin{pmatrix}
 	B_{2,x} & A_{2,x}^* \\ A_{2,x} & \widehat{B}_{2,x}
 \end{pmatrix},
$$
where $\ch_x$ acts on $\HH_x$ while $\tw\ch_x$ acts on $\tw\HH_x=\HH_{x-1}^\downarrow\oplus\HH_x^\uparrow$.
\end{prop}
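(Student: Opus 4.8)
The plan is to realize both $\ch$ and $\tw\ch$ as self-adjoint unitary involutions whose product is $W$; the chiral identity $\ch W\ch^*=W^*$ then drops out in one line. The workhorse is the internal cell swap $\Sigma=\bigoplus_x\Sigma_x$, where $\Sigma_x$ interchanges $|x\uparrow r\rangle\leftrightarrow|x\downarrow r\rangle$ inside $\HH_x$; it is a self-adjoint unitary with $\Sigma^2=\1$. A single block multiplication gives $\Sigma_x C_{1,x}=\ch_x$, so $\ch=\Sigma C_1$, and I will likewise check that $\tw\ch=S_\downarrow C_2 S_\uparrow\Sigma$. Granting those two identities, $\tw\ch\,\ch=S_\downarrow C_2 S_\uparrow\Sigma\cdot\Sigma C_1=S_\downarrow C_2 S_\uparrow C_1=W$ since $\Sigma^2=\1$. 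Unitarity of $W$ is immediate: $S_\uparrow$ and $S_\downarrow$ permute the orthonormal basis of $\HH$, and the $C_{i,x}$ are unitary --- indeed \eqref{eq:hC} already forces $C_{i,x}^*C_{i,x}=\1_{2d}$, the only point in that short computation being that $A_{i,x}^*A_{i,x}=\1_d-B_{i,x}^2$ commutes with $B_{i,x}$.

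Next I would verify that $\ch_x=\begin{pmatrix}B_{1,x}&A_{1,x}^*\\A_{1,x}&\widehat B_{1,x}\end{pmatrix}$, and identically $\tw\ch_x$ (same matrix with the index $2$), is a self-adjoint unitary. Unitarity is free from $\ch_x=\Sigma_x C_{1,x}$. For self-adjointness the only block needing attention is the lower-right one: from $\widehat B_{1,x}=-A_{1,x}B_{1,x}A_{1,x}^{-1}$ one gets $\widehat B_{1,x}^{\,*}=-(A_{1,x}^*)^{-1}B_{1,x}A_{1,x}^*$, and the two agree exactly because $B_{1,x}$ commutes with $A_{1,x}^*A_{1,x}=\1_d-B_{1,x}^2$. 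A self-adjoint unitary squares to the identity, so $\ch^2=\tw\ch^2=\1$ and, since the $\ch_x$ act within the cells, $\ch$ is a genuine local symmetry.

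The identity $\tw\ch=S_\downarrow C_2 S_\uparrow\Sigma$ is then pure bookkeeping on basis vectors. Applying $S_\downarrow C_2 S_\uparrow\Sigma$ to $|x-1\downarrow r\rangle$ produces successively $|x-1\uparrow r\rangle$, $|x\uparrow r\rangle$, $C_{2,x}|x\uparrow r\rangle$, and finally a vector of $\HH_{x-1}^\downarrow\oplus\HH_x^\uparrow$ whose two blocks are the $r$-th columns of $B_{2,x}$ and $A_{2,x}$; applying it to $|x\uparrow r\rangle$ produces $|x\downarrow r\rangle$, $|x\downarrow r\rangle$, $C_{2,x}|x\downarrow r\rangle$, and a vector of the same subspace with blocks the $r$-th columns of $A_{2,x}^*$ and $\widehat B_{2,x}$. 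Hence $S_\downarrow C_2 S_\uparrow\Sigma$ leaves each $\tw\HH_x=\HH_{x-1}^\downarrow\oplus\HH_x^\uparrow$ invariant, and in the block-ordered basis $|x-1\downarrow r\rangle,|x\uparrow r\rangle$ ($r=1,\dots,d$) its restriction is $\begin{pmatrix}B_{2,x}&A_{2,x}^*\\A_{2,x}&\widehat B_{2,x}\end{pmatrix}=\tw\ch_x$, so $\tw\ch=\bigoplus_x\tw\ch_x$ relative to $\HH=\bigoplus_x\tw\HH_x$.

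Finally, with $W=\tw\ch\,\ch$, $\ch^*=\ch$, $\tw\ch^*=\tw\ch$ and $\ch^2=\1$ one computes $\ch W\ch^*=\ch\,\tw\ch\,\ch\,\ch=\ch\,\tw\ch=\ch^*\tw\ch^*=(\tw\ch\,\ch)^*=W^*$, which together with $\ch^2=\1$ and the unitarity of $\ch$ is precisely the claimed chiral symmetry. I expect the only real obstacle to be the third step: keeping straight which cell each coin block $C_{i,x}$ sits on and how $S_\uparrow$, $S_\downarrow$ and $\Sigma$ relabel the half-cells $\HH_x^{\uparrow}$, $\HH_x^{\downarrow}$. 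Conceptually, though, the whole proof rests on two tiny facts --- $\Sigma^2=\1$ and $[B_{i,x},A_{i,x}^*A_{i,x}]=0$ --- the second of which simultaneously yields the unitarity of the coins and the self-adjointness of $\ch_x$ and $\tw\ch_x$.
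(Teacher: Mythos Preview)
Your proof is correct and follows essentially the same route as the paper's: both introduce the cell-wise spin-flip $\Sigma$ (the paper calls it $T$), set $\ch=\Sigma C_1$ and $\tw\ch=S_\downarrow C_2 S_\uparrow\Sigma$, verify that each is a self-adjoint unitary via the commutation $[B_{i,x},A_{i,x}^*A_{i,x}]=0$, and conclude $W=\tw\ch\,\ch$ with $\ch W\ch^*=W^*$. The only stylistic difference is that the paper identifies $\tw\ch$ conceptually---writing $S_\downarrow\Sigma$ and $S_\uparrow\Sigma$ as backward and forward half-cell shifts $S_b,S_f$, so that $\tw\ch=S_b(\Sigma C_2)S_b^{-1}$ is visibly the $i=2$ block-diagonal operator translated half a cell---while you carry out the same identification by direct basis-vector bookkeeping.

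One small omission: in the paper's framework a chiral symmetry must be \emph{balanced} in each cell, which for the symmetry type $\{\1,\ch\}$ with $\ch^2=\1$ means $\tr\ch_x=0$. The paper checks this explicitly via $\tr\ch_x=\tr B_{1,x}+\tr\widehat B_{1,x}=\tr B_{1,x}-\tr(A_{1,x}B_{1,x}A_{1,x}^{-1})=0$. You should add this one-line verification; without it, $\ch$ is not yet shown to be a chiral symmetry in the sense required by the paper's classification.
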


\begin{proof}
If $T$ is the spin-flip involution
\begin{equation} \label{eq:sfd}
 T = \sum_{\substack{x\in\Ir \\ r=1,\dots,d}}
 |x\downarrow r\>\<x\uparrow r|+|x\uparrow r\>\<x\downarrow r|,
\end{equation}
the operators $\ch^{(i)} = TC_i$ yield the factorization
\begin{equation} \label{eq:chs-Cd}
 W = S_b \ch^{(2)} S_f \ch^{(1)},
 \qquad
 S_b = S_\downarrow T,
 \qquad
 S_f = S_\uparrow T.
\end{equation}
From the block structure of $C_{i,x}$ we find that $\ch^{(1)}=\ch$, while $S_b\ch^{(2)}S_f=\tw\ch$. To see the later identity, let us denote $\phi_{2dx+r}=|x\uparrow r\>$ and $\phi_{2dx+d+r}=|x\downarrow r\>$. Then, $S_b$ and $S_f$ are the backward and forward shifts
$$
 S_b\phi_n=\phi_{n-d}, \qquad S_f\phi_n=\phi_{n+d},
$$
thus they are inverses of each other. We conclude that the matrix representation of $S_b\ch^{(2)}S_b^{-1}$ with respect to $\{\phi_n\}_{n\in\Ir}$ is the result of translating that of $\ch^{(2)}$ half a cell backward.

It only remains to prove that $\ch$ and $\tw\ch$ are unitary and self-adjoint. Then, $W=\tw\ch\ch$ is unitary, $\ch^2=\1$ and $\ch W \ch^* = \ch \tw\ch = W^*$, thus $\ch$ is a chiral symmetry for $W$ since $\ch$ acts locally in each cell $\HH_x$ via $\ch_x$, which is balanced because $\tr\ch_x=\tr B_{1,x}-\tr(A_{1,x}B_{1,x}A_{1,x}^{-1})=0$. Assuming $A_{i,x}$ non-singular, the rest of the conditions on the blocks of $\ch_x$ and $\tw\ch_x$ are necessary for their unitarity and self-adjointness since $\widehat{B}_{i,x}=-A_{i,x}B_{i,x}A_{i,x}^{-1}$ is equivalent to $A_{i,x}B_{i,x}+\widehat{B}_{i,x}A_{i,x}=0$. Such conditions also imply that
$$
 \widehat{B}_{i,x}^*=\widehat{B}_{i,x}
 \;\Leftrightarrow\;
 B_{i,x}A_{i,x}^*A_{i,x}=A_{i,x}^*A_{i,x}B_{i,x}
 \;\Leftrightarrow\;
 B_{i,x}(\1_d-B_{i,x}^2)=(\1_d-B_{i,x}^2)B_{i,x},
$$
proving the self-adjointness of $\ch$ and $\tw\ch$, while the remaining unitarity conditions follow by taking adjoints in $A_{i,x}B_{i,x}+\widehat{B}_{i,x}A_{i,x}=0$ and from
$$
 A_{i,x}A_{i,x}^*+\widehat{B}_{i,x}^2=\1_d
 \;\Leftrightarrow\;
 A_{i,x}A_{i,x}^*+A_{i,x}B_{i,x}^2A_{i,x}^{-1}=\1_d
 \;\Leftrightarrow\; A_{i,x}^*A_{i,x}+B_{i,x}^2=\1_d.
$$
\end{proof}

We will classify the topological phases of walks with the form given by the previous proposition, which belong to the symmetry type $\symS = \{\1,\ch\}$, $\ch^2=\1$. A particular case of these walks arises when taking arbitrary non-singular $d\times d$ matrices $A_{i,x}$ satisfying $\|A_{i,x}\|\le1$, together with
\begin{equation} \label{eq:hCs}
 B_{i,x}=\pm(\1_d-A_{i,x}^*A_{i,x})^{1/2},
 \qquad
 \widehat{B}_{i,x}=\mp(\1_d-A_{i,x}A_{i,x}^*)^{1/2}.
\end{equation}
We should point out that not all the walks given by Proposition~\ref{prop:dSS} will enter into our consideration, but only those with essential gaps around $\pm1$. Although it is not easy to translate this condition into a simple one for the blocks $A_{i,_x}$, $B_{i,x}$, this assumption will have strong consequences later on which will be central to tackle the Schur approach to the symmetry indices, leading to a complete classification of topological phases for the families of walks that we will analyze.

There is a further issue to clarify regarding this classification. The alluded walks all belong to the same symmetry type but, strictly speaking, they do not share the same representation of this symmetry type because $\ch$ depends on the doubly infinite sequences of blocks $A_{1,x}$, $B_{1,x}$. However, this is not a serious drawback since, as balanced unitaries, all the $\ch_x$ are unitarily equivalent to a diagonal one with $d$ 1s and $-1$s in the diagonal. Therefore, up to a change of basis in each cell, all the walks that we are considering correspond to the same representation of the symmetry type. Symmetry protected topological phases for these walks make about as much sense as for these unitarily equivalent pictures. Once we know this, we can work with the original matrix form of the walks which is simpler. This also holds for the Schur representation of the indices which cannot depend on the basis chosen for the matrix representation of the operator valued Schur function of a subspace.

Finally, the application of the Schur representation of the indices requires the identification of a subspace which generates a cyclic subspace containing the $\pm1$-eigenspaces. Actually, the following proposition proves that any cell is cyclic for the walks under study.

\begin{prop} \label{prop:dSS-CYCLIC}
For any $x\in\Ir$, the subspaces $\HH_x$ and $\tw\HH_x$ are cyclic for every walk $W$ with the form given in Proposition~\ref{prop:dSS}.
\end{prop}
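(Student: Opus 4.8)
The plan is to exploit the factorisation $W=\tw\ch\,\ch$ of Proposition~\ref{prop:dSS}, in which $\ch=\bigoplus_x\ch_x$ is block diagonal with respect to the cells $\HH_x$ while $\tw\ch=\bigoplus_x\tw\ch_x$ is block diagonal with respect to the shifted cells $\tw\HH_x=\HH_{x-1}^\downarrow\oplus\HH_x^\uparrow$. The only feature of these blocks that will be used is that each $\ch_x$ and each $\tw\ch_x$ is a unitary whose two off-diagonal $d\times d$ corners are the \emph{invertible} matrices $A_{i,x}^*$ and $A_{i,x}$; the diagonal corners $B_{i,x},\widehat{B}_{i,x}$ (possibly singular) and any gap or locality hypothesis play no role. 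Fix a cell index $x$ and let $\TT=\overline{\sum_{n\in\Ir}W^n\HH_x}$ be the cyclic subspace it generates; since $n$ runs over all of $\Ir$, $\TT$ is reducing, i.e. invariant under both $W$ and $W^*$.

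The heart of the argument is a spreading step. Since $\ch$ preserves each cell, $W\HH_x=\tw\ch\,\HH_x=\tw\ch_x\HH_x^\uparrow\oplus\tw\ch_{x+1}\HH_x^\downarrow$, which is an \emph{orthogonal} direct sum because the two summands lie in the orthogonal shifted cells $\tw\HH_x$ and $\tw\HH_{x+1}$; consequently $\TT$ contains each summand. Now the key trick: in the splitting $\tw\HH_x=\HH_{x-1}^\downarrow\oplus\HH_x^\uparrow$ a generic vector of $\tw\ch_x\HH_x^\uparrow$ has the form $(A_{2,x}^*v,\widehat{B}_{2,x}v)$ with $v\in\HH_x^\uparrow$, and since $\HH_x^\uparrow\subset\HH_x\subset\TT$ we may subtract its $\HH_x^\uparrow$-part, leaving $(A_{2,x}^*v,0)\in\TT$; as $v$ varies this sweeps out all of $\HH_{x-1}^\downarrow$ because $A_{2,x}^*$ is invertible. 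Together with $\HH_x^\uparrow$ this yields $\tw\HH_x\subset\TT$, and the mirror computation (invertibility of $A_{2,x+1}$) gives $\tw\HH_{x+1}\subset\TT$.

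Next apply the same trick to $W^*=\ch\,\tw\ch$. Since $\tw\ch$ preserves each shifted cell, $W^*\tw\HH_x=\ch\,\tw\HH_x=\ch_{x-1}\HH_{x-1}^\downarrow\oplus\ch_x\HH_x^\uparrow\subset\TT$, an orthogonal sum inside $\HH_{x-1}\oplus\HH_x$, so both summands lie in $\TT$; using $\HH_{x-1}^\downarrow\subset\tw\HH_x\subset\TT$ and the invertibility of the off-diagonal corner of $\ch_{x-1}$, the same subtraction produces $\HH_{x-1}^\uparrow\subset\TT$, hence $\HH_{x-1}\subset\TT$. The analogous step starting from $\tw\HH_{x+1}$ gives $\HH_{x+1}\subset\TT$. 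The same reasoning applied to any cell already contained in $\TT$ propagates outward, so by induction $\HH_y\subset\TT$ for every $y$; since the cells span $\HH$ densely, $\TT=\HH$, and every cell is cyclic. For a shifted cell $\tw\HH_x$ with cyclic subspace $\TT'$ (again reducing), the computation $W^*\tw\HH_x=\ch_{x-1}\HH_{x-1}^\downarrow\oplus\ch_x\HH_x^\uparrow\subset\TT'$ together with the invertibility of the off-diagonal corner of $\ch_x$ and the same subtraction puts $\HH_x^\downarrow$ — hence the whole cell $\HH_x$ — inside $\TT'$; since $\HH_x$ is already cyclic for $W$ and $\TT'$ is reducing, $\TT'=\HH$, so $\tw\HH_x$ is cyclic as well.

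The only point requiring thought is the subtraction trick at the core of the spreading step: it works precisely because the off-diagonal corners of $\ch_x$ and $\tw\ch_x$ are invertible, so that the image of one half-cell under a coin block projects \emph{onto} the complementary half-cell, which is what lets one bootstrap from ``half-cell $\subset\TT$'' to ``full (shifted) cell $\subset\TT$''. Everything else is bookkeeping — keeping track of which half-cells make up a cell $\HH_x$ versus a shifted cell $\tw\HH_x$, and using the orthogonality of distinct (shifted) cells to split an orthogonal direct sum contained in $\TT$ into its summands.
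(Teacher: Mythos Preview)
Your proof is correct and follows essentially the same approach as the paper's: both arguments propagate outward from the starting cell using $W$ and $W^*$, relying on the invertibility of the off-diagonal blocks $A_{i,x}$ to show that the cyclic subspace swallows each successive half-cell. The paper phrases this as the quantitative induction $\bigoplus_{|y|\le n}\HH_y\subset\sum_{|k|\le n}W^k\HH_0$ and uses the projection $P_n^\bot$ to isolate the ``new'' half-cells, whereas you work abstractly with the reducing subspace $\TT$ and use the factorisation $W=\tw\ch\ch$ together with the subtraction trick; these are two presentations of the same spreading mechanism. Your explicit reduction of the $\tw\HH_x$ case to the already-proven cyclicity of $\HH_x$ is a nice touch (the paper simply says ``similar arguments hold'').
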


\begin{proof}
Taking without loss the cell $\HH_0$, to prove its cyclicity it suffices to see that
$$
 \SS_n := \bigoplus_{|x|\le n}\HH_x\subset\sum_{|k|\le n} W^k\HH_0,
$$
an inclusion which is obvious for $n=0$. Assuming it for an index $n$, we will show that $\HH_{\pm n\pm1}\subset\sum_{|k|\le n+1} W^k\HH_0$, which proves the result. The induction hypothesis implies that
$$
 W\HH_{\pm n}, \; W^*\HH_{\pm n}, \; \SS_n \subset
 \sum_{|k|\le n+1} W^k\HH_0.
$$
Hence, if $P_n^\bot$ is the orthogonal projection of $\HH$ onto $\SS_n^\bot$, then
$$
 P_n^\bot W\HH_{\pm n}, P_n^\bot W^*\HH_{\pm n} \subset
 \sum_{|k|\le n+1} W^k\HH_0.
$$
From the definition of the walk and the fact that $A_{i,x}$ are non-singular, we find that
$$
 P_n^\bot W (\HH_n^\uparrow + \HH_{-n}^\downarrow) =
 \HH_{n+1}^\uparrow + \HH_{-n-1}^\downarrow,
 \qquad
 P_n^\bot W^* (\HH_n^\downarrow + \HH_{-n}^\uparrow) +
 \HH_{n+1}^\uparrow + \HH_{-n-1}^\downarrow = \HH_{n+1} + \HH_{-n-1},
$$
so that
$
 \HH_{n+1}+\HH_{-n-1} \subset
 (P_n^\bot W + P_n^\bot W^*)(\HH_n + \HH_{-n})
 \subset \sum_{|k|\le n+1} W^k\HH_0.
$

This proves that any cell $\HH_x$ is cyclic for $W$. Similar arguments hold for any subspace $\tw\HH_x$.
\end{proof}

The previous result allows us to represent the symmetry indices of any walk given in Proposition~\ref{prop:dSS} in terms of matrix Schur functions of cells. Along the previous discussion we have not distinguished between Schur functions for states --i.e. for one-dimensional subspaces-- or higher-dimensional subspaces. However, in practice, the reduction of higher-dimensional matrix Schur functions to lower-dimensional or even scalar ones will be crucial. For convenience, in the following examples we make explicit the distinction between similar matrix Schur functions of different size by using boldface notation for the Schur functions of one or more full cell Hilbert spaces.

\subsection{Split-step indices $\six_\pm(W)$}
\label{sss:SS+-}
Let us first connect the general class of quantum walks introduced in the preceding subsection back to the original split-step model defined in \eqref{eq:SS}: Setting $d=1$, Proposition~\ref{prop:dSS} yields a family of walks in a Hilbert space with 2-dimensional cells $\HH_x=\spn\{|x\uparrow\>,|x\downarrow\>\}$, given in terms of coin operators with the form
$$
 C_{i,x} = R(\theta_{i,x},\xi_{i,x}),
 \qquad
 R(\theta,\xi) =
 \begin{pmatrix}
	e^{i\xi} \cos\theta & -\sin\theta \\ \sin\theta & e^{-i\xi} \cos\theta
 \end{pmatrix},
 \qquad
 \xi_{i,x}\in[0,2\pi),
 \qquad
 \theta_{i,x}\in\textstyle(-\frac{\pi}{2},\frac{\pi}{2}).
$$
The factors $e^{\pm i\xi_{i,x}}$ are the phases of the diagonal elements of $C_{i,x}$, hence we consider only angles $\theta_{i,x}$ giving a non-negative cosine. The constraint $|\theta_{i,x}|\ne\frac{\pi}{2}$ avoiding a null diagonal is the translation of the non-singularity of the blocks $A_{i,x}$ in Proposition~\ref{prop:dSS}, which ensures the cyclicity of every cell. A change of phases $|x\uparrow\> \mapsto e^{i\nu_x} |x\uparrow\>$, $|x\downarrow\> \mapsto e^{i\omega_x} |x\downarrow\>$ transforms $R(\theta,\xi)$ into the real rotation $R(\theta):=R(\theta,0)$ if $\omega_x-\nu_x=\xi_{1,x}$ and $\nu_x-\omega_{x-1}=\xi_{2,x}$, thus we can assume without loss that $\xi_{i,x}=0$ for all $x\in\Ir$, which brings us back exactly to the split-step model from \eqref{eq:SS} as previously claimed.

As a consequence of the general discussion in the previous subsection and Proposition~\ref{prop:dSS}, split-step walks may be alternatively expressed as
\begin{equation} \label{eq:SS-CMV}
 W=\tw\ch\ch,
 \qquad
 \begin{aligned}
 	& \ch=\bigoplus_{x\in\Ir}\ch_x,
	& \quad & \ch_x = \Theta(\sin\theta_{1,x}),
 	\\
 	& \tw\ch=\bigoplus_{x\in\Ir}\tw\ch_x,
  	& & \tw\ch_x = \Theta(\sin\theta_{2,x}),
 \end{aligned}
 \qquad
 \Theta(s) =
 \begin{pmatrix}
	s & \sqrt{1-s^2}
	\\
	\sqrt{1-s^2} & -s
 \end{pmatrix},
\end{equation}
with $\ch_x$/$\tw\ch_x$ acting on $\HH_x$/$\tw\HH_x$, so that $\ch$ is a chiral symmetry of $W$ for the chosen cell structure.

When $\theta_{1,x}$ ($\theta_{2,x}$) is $\pm\frac{\pi}{2}$, the walk \eqref{eq:SS-CMV} decouples trivially because $\ch_x$ ($\tw\ch_x$) becomes a diagonal matrix, so that both involutions, $\tw\ch$ and $\ch$, and thus $W$, split at a common place. Although we will not consider such values of $\theta_{i,x}$ in our model, the related decouplings will be used later on to obtain left and right indices, as well as to express some matrix Schur functions in terms of scalar ones.

From the factorization \eqref{eq:SS-CMV} of $W$ into a couple of $2\times2$-block diagonal orthogonal matrices whose block structures do not match, $W$ is recognized as a doubly infinite CMV matrix \eqref{eq:eCMV} with real Schur parameters
\begin{equation} \label{eq:SP}
  \alpha_{2x} = \sin\theta_{2,x},
  \qquad\qquad
  \alpha_{2x+1} = \sin\theta_{1,x}.
\end{equation}
In other words, split-step walks may be identified with doubly infinite real CMV matrices, a fact that will have useful consequences.

The matrix representation of $W$ in the basis $\{|x\uparrow\>,|x\downarrow\>\}_{x\in\Ir}$ is real, thus the complex conjugation $\ph$ with respect to this basis plays the role of a particle-hole symmetry for $W$, while $\rv=\ph\ch$ is the corresponding time-reversal symmetry. Hence, split-step walks belong to the symmetry type $\symS = \{\1,\ph,\rv,\ch\}$, $\ph^2=\rv^2=\1$. Changing the cells to $\tw\HH_x=\{|x-1\downarrow\>,|x\uparrow\>\}$, then $\tw\ch$ would be a chiral symmetry for $W$, which together with $\tw\ph=\ph$ and $\tw\rv=\ph\tw\ch$ constitute another representation of the same symmetry type. We will pay attention only to the symmetries $\ph,\rv,\ch$ which act locally in each cell $\HH_x$.

We will analyze the indices $\six_\pm(W)$ of an arbitrary split-step walk $W$ by using the Schur function $\bs{f}$ of the cell $\HH_0$, which according to Proposition~\ref{prop:dSS-CYCLIC} is cyclic for $W$. To uncover the structure of the matrix Schur function $\bs{f}$ we will perform a local right decoupling $W\mapsto WV$ with decoupling subspace $\HH_V=\HH_0$, given by
$$
 V = \ch_0^* \oplus \1_{\HH_0^\bot}.
$$
$WV$ is the result of substituting $\ch_0$ by $\1_2$ in the factor $\ch$ of $W=\tw\ch\ch$. Thus, $WV=W_L\oplus W_R$ with $W_L=\tw\ch_L\ch_L$ and $W_R=\tw\ch_R\ch_R$ walks on $\HH_L=\HH_{<0}\oplus\spn\{|0\uparrow\>\}$ and $\HH_R=\spn\{|0\downarrow\>\}\oplus \HH_{>0}$ respectively, where
$$
 \tw\ch_L = \cdots \oplus \tw\ch_{-2} \oplus \tw\ch_{-1} \oplus \tw\ch_0,
 \quad
 \ch_L = \cdots \oplus \ch_{-2} \oplus \ch_{-1} \oplus 1,
 \quad
 \tw\ch_R = \tw\ch_1 \oplus \tw\ch_2 \oplus \tw\ch_3 \oplus \cdots,
 \quad
 \ch_R = 1 \oplus \ch_1 \oplus \ch_2 \oplus \cdots.
$$
The walk $W_R$ is given by a CMV matrix \eqref{eq:CMV} with Schur parameters $(\alpha_n)_{n\ge2}$, where $\alpha_n$ is as in \eqref{eq:SP}. Also, ordering the basis of $\HH_L$ as $\{|0\uparrow\>,|-1\downarrow\>,|-1\uparrow\>,\dots\}$, we may write
$$
 \tw\ch_L = \Theta(-\sin\theta_{2,0}) \oplus \Theta(-\sin\theta_{2,-1})
 \oplus \Theta(-\sin\theta_{2,-2}) \oplus \cdots,
 \quad
 \ch_L = 1 \oplus \Theta(-\sin\theta_{1,-1}) \oplus
 \Theta(-\sin\theta_{1,-2}) \oplus \cdots,
$$
identifying $W_L$ as a CMV matrix with Schur parameters $(-\alpha_{-n})_{n\ge0}$, with $\alpha_n$ given by \eqref{eq:SP}. These identifications help in the classification of topological phases of split-step walks.

Because of the broken cell $\HH_0$, the above decoupling is not symmetry preserving, the only surviving symmetry being particle-hole since the invariance under conjugation with respect to the basis $\{|x\uparrow\>,|x\downarrow\>\}_{x\in\Ir}$ does not depend on the cell structure but only on the compatibility of the decoupling with such a basis. However, this decoupling yields information about the $2\times2$ matrix Schur function $\bs{f}$ of $\HH_0$. Applying the right decoupling version of Theorem~\ref{thm:dec} to this decoupling with $\HH_V=\HH_C=\HH_0$, we get
\begin{equation} \label{eq:f-cell}
 \bs{f} = \ch_0^* (f_L \oplus f_R),
\end{equation}
with $f_L$ the Schur function of $|0\uparrow\>$ with respect to $W_L$ and $f_R$ the Schur function of $|0\downarrow\>$ with respect to $W_R$. According to \eqref{eq:SI+-fin} and Theorem~\ref{thm:si}, $\bs{f}(\pm1)$ belong to the same symmetry type $\symS$ as $W$ and
\begin{equation} \label{eq:SS-SI0}
 \six_\pm(W) = \six_\pm(\bs{f}(\pm1)) =
 \frac{1}{2} \tr \ch_0(\1_2\pm\bs{f}(\pm1)) =
 \pm\frac{1}{2} (f_L(\pm1) + f_R(\pm1)),
\end{equation}
where we have taken into account that $\ch_0$ is traceless.

Although explicit expressions for $f_{L/R}$ are not available, their properties make it possible to estimate the possible values of $f_{L/R}(\pm1)$, and thus of $\six_\pm(W)$. Theorem~\ref{thm:sym} states that the chiral symmetry $\ch$ of $W$ induces the chiral symmetry $\ch_0$ on the $2\times2$ unitaries $\bs{f}(\pm1)$. This means that $\ch_0\bs{f}(\pm1)$ are involutions, hence their eigenvalues can be only 1 or $-1$. On the other hand, \eqref{eq:f-cell} shows that $\ch_0\bs{f}(\pm1)$ are indeed diagonal with diagonal entries $f_{L/R}(\pm1)$, which thus must lie on $\{1,-1\}$. We conclude that
\begin{equation} \label{eq:SS-SI}
 \six_\pm(W) =
 \pm\frac{1}{2} (f_L(\pm1) + f_R(\pm1)) \in \{-1,0,1\}.
\end{equation}
To complete the classification of topological phases for split-step walks we should know if all these possibilities are actually present, and their connection with the possible values of left and right indices. The second question will be answered in the following subsection, while the first one will wait until Subsect.~\ref{sss:SStoph}.

\subsection{Split-step left and right indices}
\label{sss:SSLR}

The previous right decoupling is not appropriate for the calculation of left and right indices because it is not symmetry preserving. Instead, we can perform a left decoupling $W\mapsto VW$ with perturbation subspace $\HH_V=\tw\HH_0=\spn\{|-1\downarrow\>,|0\uparrow\>\}$, where
$$
 V = \tw\ch_0^* \oplus \1_{\tw\HH_0^\bot}.
$$
This amounts to substituting $\tw\ch_0$ by $\1_2$ in the left factor of $W=\tw\ch\ch$. Since this perturbation only modifies the factor $\tw\ch$ changing it by another orthogonal involution, apart from particle-hole, the decoupling preserves the chiral symmetry $\ch$ and the composition of both, i.e. time-reversal. We find that $VW=W_L\oplus W_R$ is a symmetry preserving decoupling which splits into left and right walks on $\HH_L=\HH_{<0}$ and $\HH_R=\HH_{\ge0}$. They are given again by CMV matrices $W_L=\tw\ch_L\ch_L$ and $W_R=\tw\ch_R\ch_R$, where
$$
 \tw\ch_L = \cdots \oplus \tw\ch_{-2} \oplus \tw\ch_{-1} \oplus 1,
 \quad
 \ch_L = \cdots \oplus \ch_{-3} \oplus \ch_{-2} \oplus \ch_{-1},
 \quad
 \tw\ch_R = 1 \oplus \tw\ch_1 \oplus \tw\ch_1 \oplus \cdots,
 \quad
 \ch_R = \ch_0 \oplus \ch_1 \oplus \ch_2 \oplus \cdots.
$$

From Proposition~\ref{prop:dSS-CYCLIC} we know that $\tw\HH_0$ is cyclic for $W$. Hence, as a consequence of Proposition~\ref{prop:C-pert}.{\it(ii)}, $|-1\downarrow\>$ and $|0\uparrow\>$ are cyclic vectors for $W_L$ and $W_R$ respectively. However, the corresponding Schur functions are not useful for the calculation of left and right indices because the subspaces spanned by these vectors are not symmetry invariant. We can take the cells containing the above vectors, i.e. $\HH_{-1}$ and $\HH_0$, as symmetry invariant subspaces which are cyclic for $W_L$ and $W_R$ respectively. The corresponding Schur functions, $\bs{f}_L$ and $\bs{f}_R$, will provide the left and right indices of $W$. To uncover the structure of these $2\times2$ matrix Schur functions, we can perform right perturbations $W_LV_L=W'_L\oplus1$ and $W_RV_R=1\oplus W'_R$ to decouple the states $|-1\downarrow\>$ and $|0\uparrow\>$ respectively, analogously to the right decoupling performed on $W$ in the previous subsection. This leads to the following expressions for the alluded Schur functions,
\begin{equation} \label{eq:SS-fLR}
 \bs{f}_{\!L} = \ch_{-1}^* (f_L \oplus 1),
 \qquad
 \bs{f}_{\!R} = \ch_0^* (1 \oplus f_R),
\end{equation}
with $f_{L/R}$ the Schur function of $|-1\uparrow\>$/$|0\downarrow\>$ with respect to $W'_{L/R}$.

Since the decoupling is symmetry invariant, $\ch_{-1}$/$\ch_0$ is a chiral symmetry for $\bs{f}_{L/R}(\pm1)$. Therefore, as in the previous subsection, the only possible values of $f_{L/R}(\pm1)$ are 1 and $-1$. From \eqref{eq:SI+-fin} and Corollary~\ref{cor:dec} we know that the left and right indices can be calculated as
\begin{equation} \label{eq:SS-SILR}
\begin{aligned}
 & \sixL(W) = \six_+(\bs{f}_{\!L}(1)) + \six_-(\bs{f}_{\!L}(-1)) =
 \frac{1}{2} \tr\ch_{-1}(\bs{f}_{\!L}(1) - \bs{f}_{\!L}(-1)) =
 \frac{1}{2} (f_L(1)-f_L(-1))
 \in \{-1,0,1\},
 \\
 & \sixR(W) = \six_+(\bs{f}_{\!R}(1)) + \six_-(\bs{f}_{\!R}(-1)) =
 \frac{1}{2} \tr\ch_0(\bs{f}_{\!R}(1) - \bs{f}_{\!R}(-1)) =
 \frac{1}{2} (f_R(1)-f_R(-1))
 \in \{-1,0,1\}.
\end{aligned}
\end{equation}

\bigskip

This result allows us to reobtain the expression \eqref{eq:SS-SI} for the symmetry indices $\six_\pm(W)$ with no additional effort. Consider the previous decoupling $VW=W_L\oplus W_R$ with $V = \tw\ch_0^* \oplus \1_{\tw\HH_0^\bot}$ and perturbation subspace $\HH_V=\tw\HH_0$. Applying Corollary~\ref{cor:dec} and using \eqref{eq:SS-fLR} we find that the $4\times4$ matrix Schur function $\bs{f}$ of $\HH_C=\HH_{-1}\oplus\HH_0$ with respect to $W$ is given by
$$
 \bs{f} =
 (\bs{f}_{\!L} \oplus \bs{f}_{\!R})
 (1 \oplus \tw\ch_0^* \oplus 1) =
 (\ch_{-1}^* \oplus \ch_0^*)
 (f_L \oplus \tw\ch_0^* \oplus f_R).
$$
The sum of cells $\HH_1\oplus\HH_0$ is obviously cyclic for $W$, hence from \eqref{eq:SI+-fin} and Theorem~\ref{thm:si} we obtain
\begin{equation} \label{eq:SS-SI2}
 \six_\pm(W) = \six_\pm(\bs{f}(\pm1)) =
 \frac{1}{2} \tr(\ch_{-1}\oplus\ch_0)(\1_4\pm\bs{f}(\pm1)) =
 \pm\frac{1}{2} (f_L(\pm1)+f_R(\pm1)).
\end{equation}

The above identity may appear to be the same as \eqref{eq:SS-SI}. There is, however, a slight difference between \eqref{eq:SS-SI2} and \eqref{eq:SS-SI}. To make clear this difference, let us consider in general the right decoupling $WV_x=W_{L,x}\oplus W_{R,x}$ at $\HH_{V_x}=\HH_x$ given by $V_x=\ch_x^*\oplus\1_{\HH_x^\bot}$, so that $W_{L,x}$ and $W_{R,x}$ are walks on $\HH_{L,x}=\HH_{<x}\oplus\spn\{|x\uparrow\>\}$ and $\HH_{R,x}=\spn\{|x\downarrow\>\}\oplus\HH_{>x}$ respectively. Then, we introduce the notation
\begin{equation} \label{eq:fLRx}
 f_L^{x\uparrow} =
 \begin{gathered}
 	\text{ Schur function of } |x\uparrow\>
	\\[-2pt]
 	\text{ with respect to } W_{L,x}
 \end{gathered}
 \qquad\quad
 f_R^{x\downarrow} = \begin{gathered}
 	\text{ Schur function of } |x\downarrow\>
	\\[-2pt]
 	\text{ with respect to } W_{R,x}
 \end{gathered}
\end{equation}
With this notation, \eqref{eq:SS-SI} reads as
$$
 \six_\pm(W) = \pm\frac{1}{2}
 (f_L^{0\uparrow}(\pm1) + f_R^{0\downarrow}(\pm1)),
$$
while \eqref{eq:SS-SI2} becomes
$$
 \six_\pm(W) = \pm\frac{1}{2}
 (f_L^{-1\uparrow}(\pm1) + f_R^{0\downarrow}(\pm1)).
$$
Furthermore, the relation \eqref{eq:SS-SI}, obtained using the referred right decoupling at $\HH_V=\HH_0$, may be generalized to the decoupling $W\mapsto WV_x$ at an arbitrary cell $\HH_{V_x}=\HH_x$ because all of them are cyclic for $W$. This yields,
$$
 \six_\pm(W) = \pm\frac{1}{2}
 (f_L^{x\uparrow}(\pm1) + f_R^{x\downarrow}(\pm1)),
 \qquad x\in\Ir.
$$
On the other hand, \eqref{eq:SS-SI2} follows from a left decoupling at $\HH_V=\tw\HH_0=\spn\{|-1\downarrow\>,|0\uparrow\>\}$. Its generalization to a similar decoupling with decoupling subspace $\tw\HH_x=\spn\{|x-1\downarrow\>,|x\uparrow\>\}$ leads to
$$
 \six_\pm(W) = \pm\frac{1}{2}
 (f_L^{x-1\uparrow}(\pm1) + f_R^{x\downarrow}(\pm1)),
 \qquad x\in\Ir.
$$
These two generalized relations imply that the quantities
$f_L^{x\uparrow}(\pm1)$, $f_R^{x\downarrow}(\pm1)$ are independent of the site $x$, so that \eqref{eq:SS-SI2} and \eqref{eq:SS-SI} can be considered as the same identity.

\subsection{Split-step topological phases and proof of Theorem~\ref{thm:SS-class}}
\label{sss:SStoph}

According to \eqref{eq:SS-SI} and \eqref{eq:SS-SILR}, the $x$-independent quantities
$$
 f_L(\pm1):=f_L^{x\uparrow}(\pm1),
 \;
 f_R(\pm1):=f_R^{x\downarrow}(\pm1)
 \in\{-1,1\},
$$
account for all the possibilities of the three indices $\sixL(W)$, $\sixR(W)$, $\six_-(W)$ in the case of split-step walks, which can be summarized in the single formula
$$
 (\sixL(W),\sixR(W),\six_-(W)) =
 \frac{1}{2} (f_L(1)-f_L(-1),f_R(1)-f_R(-1),-(f_L(-1)+f_R(-1))).
$$
All such possibilities are outlined in table \eqref{eq:table-16-ti-thm}, concluding that in the set of split-step walks with essential gaps around $\pm1$ there are representatives of at most 15 different topological phases, i.e. homotopy classes of admissible walks. This does not mean necessarily that two such split-step walks with the same indices are connected by a continuous curve of essentially gapped split-step walks, but by a continuous curve of admissible walks. Also, there is no guarantee yet that all the above 15 possibilities really occur among the set of essentially gapped split-step walks. We will answer affirmatively this question by finding explicit representatives for each of the possibilities listed in table \eqref{eq:table-16-ti-thm}. As indicated there, these representatives will be particular cases of the split-step walks provided by the following lemma.

\begin{lem} \label{lem:SS-cross}
The split-step walks given by
\begin{equation} \label{eq:SS-cross}
 \theta_{1,x} =
 \begin{cases}
 	\theta_{1,R}, & x>0,
 	\\
 	\theta_{1,L}, & x\le0,
 \end{cases}
 \qquad\qquad
 \theta_{2,x} =
 \begin{cases}
 	\theta_{2,R}, & x>0,
 	\\
 	\theta_{2,L}, & x\le0,
 \end{cases}
 \qquad\qquad
 \theta_{i,L/R}\in\textstyle(-\frac{\pi}{2},\frac{\pi}{2}),
\end{equation}
are essentially gapped around $\pm1$ iff $\theta_{2,L/R}\pm\theta_{1,L/R}\ne0$, thus they are admissible iff $|\theta_{2,L/R}|\ne|\theta_{1,L/R}|$. In this case, if $f_L^{x\uparrow}$ and $f_R^{x\downarrow}$ are the Schur functions defined in \eqref{eq:fLRx},
$$
 f_L(\pm1) = f_L^{x\uparrow}(\pm1) = -\sgn(\theta_{2,L} \pm \theta_{1,L}),
 \qquad
 f_R(\pm1) = f_R^{x\downarrow}(\pm1) = \sgn(\theta_{2,R} \pm \theta_{1,R}).
$$
\end{lem}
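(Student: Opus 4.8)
The plan is to reduce both assertions to one explicit computation on the two half-line walks obtained by decoupling the crossover \eqref{eq:SS-cross} at the cell $\HH_0$, exactly as in Sect.~\ref{sss:SS+-}. With $V_0=\ch_0^*\oplus\1$ one has $WV_0=W_{L,0}\oplus W_{R,0}$, where $W_{L,0}$ on $\HH_{<0}\oplus\spn\{|0\uparrow\>\}$ is the half-line CMV matrix with the period-two Schur parameters $(-\sin\theta_{2,L},-\sin\theta_{1,L},-\sin\theta_{2,L},\dots)$ and $W_{R,0}$ on $\spn\{|0\downarrow\>\}\oplus\HH_{>0}$ is the half-line CMV matrix with the period-two Schur parameters $(\sin\theta_{2,R},\sin\theta_{1,R},\sin\theta_{2,R},\dots)$; this is read off from \eqref{eq:SP} and the identifications made in Sect.~\ref{sss:SS+-}, using that for \eqref{eq:SS-cross} every cell with $x\le0$ (resp.\ $x>0$) carries the angles $\theta_{i,L}$ (resp.\ $\theta_{i,R}$). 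Consequently $W_{L,0}$ and $W_{R,0}$ are exactly the left and right half-line restrictions of the translation invariant split-step walks with constant angles $\theta_{i,L}$ and $\theta_{i,R}$, and the functions $f_L^{0\uparrow}$, $f_R^{0\downarrow}$ of \eqref{eq:fLRx} are the scalar Schur functions of the first basis vectors of these CMV matrices.

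For the essential gap I would argue that $V_0-\1$ has finite rank, so $\sigma_{\mathrm{ess}}(W)=\sigma_{\mathrm{ess}}(W_{L,0})\cup\sigma_{\mathrm{ess}}(W_{R,0})$, and that the essential spectrum of a half-line restriction of a translation invariant walk equals the (purely band) spectrum of the bi-infinite walk — the latter being itself obtained by a finite-rank decoupling whose two half-line pieces have equal essential spectrum, and for which essential and full spectrum coincide since a translation invariant walk has no finite-dimensional eigenspace. The spectrum of the translation invariant split-step walk with constant angles $\theta_1,\theta_2$ I would then compute by Fourier transform: its $2\times2$ Bloch matrix has determinant $1$ and trace $2\cos\theta_1\cos\theta_2\cos k-2\sin\theta_1\sin\theta_2$, so its eigenvalues $e^{\pm i\omega(k)}$ obey $\cos\omega(k)=\cos\theta_1\cos\theta_2\cos k-\sin\theta_1\sin\theta_2$, which sweeps $[-\cos(\theta_1-\theta_2),\cos(\theta_1+\theta_2)]$ as $k$ runs over $\Tw$. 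Hence $+1$ (resp.\ $-1$) lies in the spectrum iff $\cos(\theta_1+\theta_2)=1$ (resp.\ $\cos(\theta_1-\theta_2)=1$), i.e.\ iff $\theta_1+\theta_2=0$ (resp.\ $\theta_1-\theta_2=0$) for angles in $(-\frac{\pi}{2},\frac{\pi}{2})$. Applying this to the $L$ and $R$ bulks shows that $W$ is essentially gapped around $\pm1$ precisely when $\theta_{2,L/R}\pm\theta_{1,L/R}\ne0$; since $x=\pm y\Leftrightarrow|x|=|y|$, this reads $|\theta_{2,L/R}|\ne|\theta_{1,L/R}|$, which is also the admissibility criterion because split-step walks automatically satisfy the locality and balancedness requirements.

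For the boundary values I would compute once and for all the scalar Schur function $f$ of the first basis vector of a half-line CMV matrix with period-two \emph{real} Schur parameters $(a,b,a,b,\dots)$, $|a|,|b|<1$. Since the twice-shifted Schur parameter sequence coincides with the original one, the second Schur iterate in \eqref{eq:SA} equals $f$ again, so $f=T_b(T_a f)$; clearing denominators (all parameters real) gives $z(za+b)f(z)^2-(z^2-1)f(z)-(zb+a)=0$. Its branches are $\frac{(z^2-1)\pm\sqrt{\Delta(z)}}{2z(za+b)}$ with $\Delta(z)=(z^2-1)^2+4z(za+b)(zb+a)$, and a short expansion at $z=0$ together with $f(0)=\alpha_0=a$ selects the $+$ branch; a one-line estimate (using $(z\mp1)^2\ge0$ and $|ab|<1$) shows $\Delta>0$ on the real segments $[0,1]$, resp.\ $[-1,0]$, whenever $a+b\ne0$, resp.\ $a-b\ne0$, so this branch continues analytically to $z=1$, resp.\ $z=-1$, where $\Delta(\pm1)=4(a\pm b)^2$ yields $f(1)=\sgn(a+b)$ and $f(-1)=\sgn(a-b)$. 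Specializing $(a,b)$ to $(-\sin\theta_{2,L},-\sin\theta_{1,L})$ and to $(\sin\theta_{2,R},\sin\theta_{1,R})$, using the site-independence of $f_L^{x\uparrow}(\pm1)$ and $f_R^{x\downarrow}(\pm1)$ established in Sect.~\ref{sss:SSLR}, and applying $\sgn(\sin\alpha\pm\sin\beta)=\sgn(\alpha\pm\beta)$ for $\alpha,\beta\in(-\frac{\pi}{2},\frac{\pi}{2})$ (sum-to-product), gives $f_L(\pm1)=-\sgn(\theta_{2,L}\pm\theta_{1,L})$ and $f_R(\pm1)=\sgn(\theta_{2,R}\pm\theta_{1,R})$.

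The step I expect to be the main obstacle is the sign bookkeeping at the very end: correctly identifying which branch of the algebraic curve $z(za+b)f^2-(z^2-1)f-(zb+a)=0$ is the Schur function near $z=0$, checking that no branch point of $\Delta$ is crossed on the way to $z=\pm1$ so that the value $\sgn(a\pm b)$ is attained on that same branch (the removable singularities of the closed-form expression at zeros of $za+b$ causing no trouble, since $f$ itself is analytic on $\Dk$), and then carrying without error the two sign reversals — one from the negated left-hand Schur parameters, one from $\sin$ versus its argument. By comparison the Fourier band computation and the reduction of the essential spectrum to the bulk bands are routine; the only point there deserving a word is that the half-line pieces may carry edge-state eigenvalues at $\pm1$, which are harmless for the Schur values but must not be counted as essential spectrum.
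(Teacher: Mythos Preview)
Your argument is correct, and the computation of $f(\pm1)$ via the fixed-point equation for a 2-periodic Schur function is essentially the paper's computation (the paper writes it as $T_{\tilde s}f=T_s^{-1}f$, but this gives the same quadratic and the same branch selection on $(-1,1)$). The genuine difference is in how the essential gap condition is obtained. You go the standard spectral-theory route: Weyl stability of the essential spectrum under the finite-rank decoupling, then identify $\sigma_{\mathrm{ess}}$ of each half-line piece with the Fourier band spectrum of the corresponding translation invariant bulk. The paper instead stays entirely inside the Schur machinery: by Theorem~\ref{thm:u-eg} applied to the cyclic cell $\HH_0$, the essential gaps of $W$ are exactly the arcs of $\Tw$ on which $\bs f=\ch_0^*(f_L\oplus f_R)$ extends analytically and is unitary, so the question becomes whether the scalar $f_{L/R}$ are analytic and unimodular near $\pm1$; this is then read off from the location of the branch points of the explicit formula on $\Tw$ (the paper's $\delta$ factorization). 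Your approach is more elementary and modular; the paper's is valuable as a demonstration that the Schur characterization of essential gaps is self-sufficient, with the branch-point analysis on $\Tw$ doing double duty for the gap condition and the unitarity of $f(\pm1)$. One small remark: your ``one-line estimate'' for $\Delta>0$ on the real segment is a little optimistic as stated; the cleanest way is to note the factorization $\Delta(z)=(z^2-2\cos(\tilde\theta+\theta)z+1)(z^2+2\cos(\tilde\theta-\theta)z+1)$, which places all zeros on $\Tw$ and makes $\Delta>0$ on $(-1,1)$ immediate.
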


\begin{proof}
Using the notation behind \eqref{eq:fLRx}, it was shown in Subsect.~\ref{sss:SS+-} that $W_L=W_{L,0}$ and $W_R=W_{R,0}$ are CMV matrices with Schur parameters $(-\alpha_{-n})_{n\ge0}$ and $(\alpha_n)_{n\ge2}$ respectively, with $\alpha_n$ as in \eqref{eq:SP}. In our case,
$$
 (-\alpha_{-n})_{n\ge0} = (-\tilde{s}_L,-s_L,-\tilde{s}_L,-s_L,\dots),
 \qquad
 (\alpha_n)_{n\ge2} = (\tilde{s}_R,s_R,\tilde{s}_R,s_R,\dots),
 \qquad
 \begin{aligned}
 	s_{L/R}=\sin\theta_{1,L/R},
 	\\[-3pt]
	\tilde{s}_{L/R}=\sin\theta_{2,L/R}.
 \end{aligned}
$$
These are respectively the Schur parameters of $f_L=f_L^{0\uparrow}$ and $f_R=f_R^{0\downarrow}$, the Schur functions of the first vector in the basis of $\HH_L=\HH_{L,0}$ and $\HH_R=\HH_{R,0}$, as follows from the results about CMV matrices and Schur functions pointed out in Sect.~\ref{sec:SPTP}.

According to \eqref{eq:f-cell}, the Schur function of the cell $\HH_0$ is $\bs{f}=\ch_0^*(f_L \oplus f_R)$. In view of Theorem~\ref{thm:u-eg} and the cyclicity of $\HH_0$, the essential gaps of $W$ are characterized by the analyticity and unitarity of $\bs{f}$, i.e. of $f_{L/R}$. Both, $f_L$ and $f_R$ are Schur functions with 2-periodic sequences of Schur parameters. A Schur function $f$ with a 2-periodic sequence of Schur parameters $(\tilde{s},s,\tilde{s},s,\dots)$ coincides with its second Schur iterate $f_2=T_sT_{\tilde{s}}f$ coming from the Schur algorithm \eqref{eq:SA}. This leads to the equation $T_{\tilde{s}}f = T_s^{-1}f$, i.e.
$$
 \frac{1}{z} \frac{f(z)-\tilde{s}}{1-\tilde{s}f(z)} =
 \frac{zf(z)+s}{1+szf(z)},
$$
which yields
\begin{equation} \label{eq:f1}
 f(z) = \frac{z^2-1+\sqrt{\Delta}}{2z(s+\tilde{s}z)},
 \qquad
 \Delta = (1-z^2)^2+4z(s+\tilde{s}z)(sz+\tilde{s}),
\end{equation}
the branch of the square root being determined by $\sqrt{\Delta}\xrightarrow{z\to0}1$ due to the analyticity of $f$ on $\Dk$. Denoting
$$
 \theta=\arcsin s,
 \qquad
 \tilde\theta=\arcsin\tilde{s},
$$
the proof ends by showing that, except for $\tilde\theta=\mp\theta$, $f$ is analytic and unitary on $\Tw$ in a neighbourhood of $\pm1$, and in this case $f(\pm1)=\sgn(\tilde\theta\pm\theta)$.

Concerning the value of $f$ at $\pm1$, the requirement for the branch of $\sqrt{\Delta}$ implies that $f$ must be evaluated on $(-1,1)$ using the positive value of $\sqrt{\Delta}$ because $\Delta>0$ on $(-1,1)$. Therefore, \eqref{eq:f1} gives
\begin{equation} \label{eq:f1at}
 f(\pm1) = \frac{|\tilde{s} \pm s|}{\tilde{s} \pm s} =
 \sgn(\tilde\theta \pm \theta),
 \qquad
 \tilde\theta\pm\theta\ne0.
\end{equation}
As for the analyticity of $f$ on $\Tw$, the apparent singularity at $-s/\tilde{s}$ lies on $\Tw$ only when $\tilde{s}=\pm s$, but then it is removable. Thus the only singularities of $f$ on $\Tw$ come from the branch points of $\sqrt{\Delta}$, i.e. the zeros of $\Delta$. The expression of $f$ on $\Tw$,
\begin{equation} \label{eq:f1circle}
 f(e^{i\omega}) = \frac{i\sin\omega + \sqrt{\delta}}{s+\tilde{s}e^{i\omega}},
 \qquad
 \delta =
 (\cos\omega + \cos(\tilde\theta-\theta))
 (\cos\omega - \cos(\tilde\theta+\theta)),
\end{equation}
shows that the branch points $e^{i\omega}\in\Tw$ are characterized by $\cos\omega=\cos(\tilde\theta+\theta)$ or $\cos\omega=-\cos(\tilde\theta-\theta)$. None of them is $\pm1$ as long as $\tilde\theta\ne\mp\theta$, hence this condition is equivalent to the analyticity of $f$ around $\pm1$. The branch points also determine the arcs of $\Tw$ where $f$ is unitary. Since $\delta=|s+\tilde{s}e^{i\omega}|^2-(\sin\omega)^2$, we find that $|f(e^{i\omega})|=1$ iff $\delta\ge0$. On the other hand, the expression of $\delta$ in \eqref{eq:f1circle} shows that, when $\tilde\theta\ne\pm\theta$, there are four branch points which split $\Tw$ into four arcs where $\delta$ has constant sign, which is positive in the left and right arcs and negative in the upper and lower arcs. The right arc degenerates into the point 1 when $\tilde\theta=-\theta$, and the left one into the point $-1$ when $\tilde\theta=\theta$. We conclude that the arc containing $\pm1$, where $f$ is analytic and unitary, only closes for $\tilde\theta=\mp\theta$.
\end{proof}

The split-step representatives of the potential split-step phases in table \eqref{eq:table-16-ti-thm} of Theorem~\ref{thm:SS-class} may now be obtained from the walks of the previous lemma having Schur functions $f_{L/R}$ with definite parity. This completes the classification of topological phases for non-translation invariant split-step walks and we are hence ready to complete the proof of Theorem~\ref{thm:SS-class}.

\begin{proof}[Proof of Theorem~\ref{thm:SS-class}]
It only remains to prove the last statement. We will use the notation of Lemma~\ref{lem:SS-cross} and its proof. From \eqref{eq:rSA} we know that $f_{L/R}$ is an odd or even function whenever $\theta_{2,L/R}=0$ or $\theta_{1,L/R}=0$ respectively. According to Lemma~\ref{lem:SS-cross}, the following four situations yield admissible split-step crossovers:
\renewcommand{\arraystretch}{1.5}
\begin{equation} \label{eq:table-4}
\begin{tabular}{|c|c|c|c|}
 \hline
 & \kern-2pt
 & $\begin{gathered}
 		\\[-12pt]
    	\theta_{1,R}\ne0
		\\[-2pt]
		\theta_{2,R}=0
		\\[2pt]
    \end{gathered}$ 		
 & $\begin{gathered}
 		\\[-12pt]
    	\theta_{1,R}=0
		\\[-2pt]
		\theta_{2,R}\ne0
		\\[2pt]
    \end{gathered}$
 \\ \hline
 & \kern-2pt & &
 \\[-15.7pt]
 \hline
  \kern9pt
  $\begin{gathered}
 		\\[-12pt]
    	\theta_{1,L}\ne0
		\\[-2pt]
		\theta_{2,L}=0
		\\[2pt]
   \end{gathered}$
   \kern5pt				& \kern-2pt	
 						& $\begin{aligned}
    						& \scriptstyle\theta_{1,L}\ne0
							& \kern-3pt & \scriptstyle\theta_{1,R}\ne0
							\\[-3pt]
							& \scriptstyle\theta_{2,L}=0
							& \kern-3pt & \scriptstyle\theta_{2,R}=0
    					   \end{aligned}$
					    & $\begin{aligned}
    						& \scriptstyle\theta_{1,L}\ne0
							& \kern-3pt & \scriptstyle\theta_{1,R}=0
							\\[-3pt]
							& \scriptstyle\theta_{2,L}=0
							& \kern-3pt & \scriptstyle\theta_{2,R}\ne0
    					   \end{aligned}$
 \\ \hline
 $\begin{gathered}
 		\\[-12pt]
    	\theta_{1,L}=0
		\\[-2pt]
		\theta_{2,L}\ne0
		\\[2pt]
   \end{gathered}$ 		& \kern-2pt
 						& $\begin{aligned}
    						& \scriptstyle\theta_{1,L}=0
							& \kern-3pt & \scriptstyle\theta_{1,R}\ne0
							\\[-3pt]
							& \scriptstyle\theta_{2,L}\ne0
							& \kern-3pt & \scriptstyle\theta_{2,R}=0
    					   \end{aligned}$
						& $\begin{aligned}
    						& \scriptstyle\theta_{1,L}=0
							& \kern-3pt & \scriptstyle\theta_{1,R}=0
							\\[-3pt]
							& \scriptstyle\theta_{2,L}\ne0
							& \kern-3pt & \scriptstyle\theta_{2,R}\ne0
    					   \end{aligned}$
 \\ \hline
\end{tabular}
\end{equation}
Consider the upper-left option of \eqref{eq:table-4}. Then, $f_L$ and $f_R$ are odd Schur functions with Schur parameters $(0,-s_L,0,-s_L,\dots)$ and $(0,s_R,0,s_R,\dots)$ respectively. Lemma~\ref{lem:SS-cross} also provides the explicit values $f_L(\pm1)=\mp\sgn(\theta_{1,L})$ and $f_R(\pm1)=\pm\sgn(\theta_{1,R})$. In consequence, playing with the signs of $\theta_{1,L}$ and $\theta_{1,R}$, this crossover gives representatives for 4 of the possibilities in table \eqref{eq:table-16-ti-thm}, all those satisfying $f_{L/R}(-1)=-f_{L/R}(1)$, which correspond to the 4 options in the upper-left corner of \eqref{eq:table-16-ti-thm}.

In the lower-right case of \eqref{eq:table-4} $f_L$ and $f_R$ are even Schur functions with Schur parameters $(-\tilde{s}_L,0,-\tilde{s}_L,0,\dots)$ and $(\tilde{s}_R,0,\tilde{s}_R,0,\dots)$ respectively. This implies that $f_{L/R}(-1)=f_{L/R}(1)$, while Lemma~\ref{lem:SS-cross} states that $f_L(\pm1)=-\sgn(\theta_{2,L})$ and $f_R(\pm1)=\sgn(\theta_{2,R})$. Different choices for the signs of $\theta_{2,L}$ and $\theta_{2,R}$ lead to representatives of the 4 possibilities in the lower-right corner of table \eqref{eq:table-16-ti-thm}.

A similar analysis shows that the upper-right option of \eqref{eq:table-4} yields odd/even Schur functions $f_{L/R}$, giving representatives of the 4 cases in the upper-right corner of \eqref{eq:table-16-ti-thm}, while the 4 possibilities in the lower-left corner of \eqref{eq:table-16-ti-thm} have representatives with even/odd Schur functions $f_{L/R}$ arising from the lower-left case of \eqref{eq:table-4}.
\end{proof}

Finally, let us comment on the changes in the split-step phases when reducing the number of symmetries that must be preserved by the homotopies. Forgetting some symmetries of the split-step walk gives in general a coarser classification of topological phases. However, a closer look at the discussion giving rise to the split-step phases \eqref{eq:table-16-ti-thm} reveals that the same classification holds even if we do not take into account the particle-hole and time-reversal symmetries. In contrast, if we just keep time-reversal or particle-hole, the phases landscape changes. The triviality of the symmetry type $\symS = \{\1,\rv\}$, $\rv^2=\1$, implies that split-step walks have a single phase when assuming only time reversal. Also, if we consider the split-step walks as instances of the symmetry type $\symS = \{\1,\ph\}$, $\ph^2=\1$, there are at most 8 phases that come from the possible combinations of the indices $\sixL(W),\sixR(W),\six_-(W)\in\Ir_2$. Actually, these 8 phases are present for split-step walks. To see this, note that the discussions of subsections~\ref{sss:SS+-} and \ref{sss:SSLR} remain unchanged in this situation, except for the formulas giving the symmetry indices of the finite-dimensional unitaries $\bs{f}(\pm1)$, which should use now the first line of \eqref{eq:SI+-fin} instead of the second one. Then, \eqref{eq:SS-SI} becomes
$$
 (-1)^{\six_\pm(W)} =
 (-1)^{\six_\pm(\bs{f}(\pm1))} =
 \det(\mp\bs{f}(\pm1)) =
 \det(\mp\ch_0^*(f_L(\pm1) \oplus f_R(\pm1))) =
 -f_L(\pm1) f_R(\pm1),
$$
while \eqref{eq:SS-SILR} translates as
$$
\begin{aligned}
 (-1)^{\sixL(W)} & =
 (-1)^{\six_+(\bs{f}_L(1))} (-1)^{\six_-(\bs{f}_L(-1))} =
 \det(-\bs{f}_L(1)) \det(\bs{f}_L(-1))
 \\ & =
 \det(-\ch_{-1}^*(f_L(1) \oplus 1))
 \det(\ch_{-1}^*(f_L(-1) \oplus 1)) =
 f_L(1) f_L(-1),
 \\
 (-1)^{\sixR(W)} & =
 (-1)^{\six_+(\bs{f}_R(1))} (-1)^{\six_-(\bs{f}_R(-1))} =
 \det(-\bs{f}_R(1)) \det(\bs{f}_R(-1))
 \\ & =
 \det(-\ch_0^*(1 \oplus f_R(1)))
 \det(\ch_0^*(1 \oplus f_R(-1))) =
 f_R(1) f_R(-1).
\end{aligned}
$$
The analogue of table~\eqref{eq:table-16-ti-thm} for this setting
$$
\renewcommand{\arraystretch}{1.5}
\begin{tabular}{|c|c|c|c|c|c|}
 \hline
 \;$(\sixL(W),\sixR(W),\six_-(W))$\;	& \kern-2pt
 & \;$f_R(\pm1)=\pm1$\; 				& \;$f_R(\pm1)=\mp1$\;
 & \;$f_R(\pm1)=1$\; 					& \;$f_R(\pm1)=-1$\;
 \\ \hline
 & \kern-2pt & & & &
 \\[-15.7pt]
 \hline
 $f_L(\pm1)=\pm1$		& \kern-2pt
 						& $(1,1,1)$	 & $(1,1,0)$   & $(1,0,0)$   & $(1,0,1)$
 \\ \hline
 $f_L(\pm1)=\mp1$ 		& \kern-2pt	
 						& $(1,1,0)$  & $(1,1,1)$   & $(1,0,1)$   & $(1,0,0)$
 \\ \hline
 $f_L(\pm1)=1$ 			& \kern-2pt
 						& $(0,1,0)$  & $(0,1,1)$   & $(0,0,1)$   & $(0,0,0)$
 \\ \hline
 $f_L(\pm1)=-1$ 		& \kern-2pt
 						& $(0,1,1)$  & $(0,1,0)$   & $(0,0,0)$   & $(0,0,1)$
 \\ \hline
\end{tabular}
$$
is the result of applying to \eqref{eq:table-16-ti-thm} the ``forget homomorphism'' $\Ir \xrightarrow{\mod\,2} \Ir_2$ \cite{long} between the index groups of the symmetry types $\symS = \{\1,\ph,\rv,\ch\}$, $\ph^2=\rv^2=\ch^2=\1$ and $\symS = \{\1,\ph\}$, $\ph^2=\1$. The above table shows that the crossovers of translation invariant split-step walks given in \eqref{eq:table-16-ti-thm} cover the alluded 8 phases, which, except for the phase $(0,0,0)$, follow by combining in pairs the 15 phases corresponding to the symmetry type $\symS = \{\1,\ph,\rv,\ch\}$, $\ph^2=\rv^2=\1$. This means that split-step walks from different components of such pairs may be continuously connected preserving essential locality, essential gaps and particle-hole, but paying the price of violating the chiral symmetry along the path.

\section{Further Examples}
\label{sec:ex}
Starting from the general model of walks with an arbitrary even number of internal degrees of freedom introduced in Subsect.~\ref{subse:classofsymQW}, in this section we will explore the phase diagram of coined quantum walks with chiral symmetry.

\subsection{Chiral coined walks with higher-dimensional coins}
\label{ss:ex-Cd}

We will consider 1D coined walks with a chiral symmetry, but with cells of arbitrary even dimension $2d$ \eqref{eq:dcells}. They are the specialization of the walks given in Proposition~\ref{prop:dSS} for a trivial coin operator $C_2=\1$, i.e.
\begin{equation} \label{eq:dC}
 W=SC_1,
 \qquad
 S = S_\downarrow S_\uparrow
   = \sum_{\substack{x\in\Ir \\ r=1,\dots,d}}
	|x+1\uparrow r\>\<x\uparrow r| +
	|x-1\downarrow r\>\<x\downarrow r|,
\end{equation}
with $S$ the conditional shift which moves forward/backward the ``half-cells'' $\HH_x^{\uparrow/\downarrow}$. When the coin $C_1$ has the form \eqref{eq:hC}, $W=\tw\ch\ch$ where
\begin{equation} \label{eq:tch-C}
 \tw\ch = \bigoplus_{x\in\Ir} \tw\ch_x =
 \bigoplus_{x\in\Ir}
 \begin{pmatrix}
 	0 & \1_d \\[-5pt] \1_d & 0
 \end{pmatrix},
\end{equation}
with $\tw\ch_x$ acting on $\tw\HH_x$, and $\ch$ being a chiral symmetry of $W$.
Thus, we refer to these walks as chiral coined walks. We will classify the topological phases of the essentially gapped chiral coined walks with $2d\times 2d$ coins.

The phase structure for $d=1$ follows from the previous results on split-step walks. For coined walks, the Schur functions $f_{L/R}$ in Theorem~\ref{thm:SS-class} are odd because their sequences of Schur parameters have null Schur parameters at odd places. This corresponds to the 4 phases $(1,1,1)$, $(1,-1,0)$, $(-1,1,0)$, $(-1,-1,-1)$ in the upper-left corner of table~\eqref{eq:table-16-ti-thm}, from which only $(1,-1,0)$ and $(-1,1,0)$ survive in the translation invariant case because then the essential gaps become strict gaps and thus $\six_\pm(W)=0$. For $d>1$, the topological phases of coined walks appear not to have been discussed elsewhere, even in the translation invariant case.

\subsubsection{Chiral coined walk indices $\six_\pm(W)$}
\label{sss:C+-}

According to Theorem~\ref{thm:si}, the cyclicity of any cell allows us to obtain the symmetry indices $\six_\pm(W)$ out of the $2d\times2d$ matrix Schur function $\bs{f}$ for the cell $\HH_0$. Using \eqref{eq:SI+-fin} we find that
$$
 \six_\pm(W) = \six_\pm(\bs{f}(\pm1)) =
 \frac{1}{2} \tr \ch_0(\1_{2d}\pm\bs{f}(\pm1)) =
 \pm\frac{1}{2} \tr \ch_0 \bs{f}(\pm1)
 \in \{-d,-d+1,\dots,-1,0,1,\dots,d-1,d\},
$$
because $\ch_0$ is traceless and $\ch_0\bs{f}(\pm1)$ is a $2d\times2d$ unitary involution since, due to Theorem~\ref{thm:sym}, $\ch_0$ is a chiral symmetry of $\bs{f}(\pm1)$. As in the split-step case, we can say something extra about the Schur function $\bs{f}$ by using the local right decoupling $WV=W_L\oplus W_R$ with decoupling subspace $\HH_0$ given by
$$
 V = \ch_0^* \oplus \1_{\HH_0^\bot}.
$$
The left part $W_L$ of the decoupled walk acts on $\HH_L=\HH_{<0}\oplus\HH_0^\uparrow$, while the right part $W_R$ is a walk on $\HH_R=\HH_0^\downarrow\oplus\HH_{>0}$. The version of Theorem~\ref{thm:dec} for right decouplings implies that
\begin{equation} \label{eq:f-Cell}
 \bs{f} = \ch_0^*(f_L \oplus f_R),
\end{equation}
where $f_{L/R}$ is the $d\times d$ matrix Schur function of $\HH_0^{\uparrow/\downarrow}$ with respect to $W_{L/R}$. Therefore,
\begin{equation} \label{eq:C-SI}
 \six_\pm(W) =
 \pm\frac{1}{2} (\tr f_L(\pm1) + \tr f_R(\pm1)).
\end{equation}
Since $f_{L/R}(\pm1)$ are $d\times d$ matrix unitary involutions in which $\ch_0\bs{f}(\pm1)$ splits,
\begin{equation} \label{eq:trfLR}
 \tr f_L(\pm1), \; \tr f_R(\pm1) \in \{-d,-d+2,\dots,d-2,d\}.
\end{equation}
We will see that these traces also yield the left and right indices of the walk.

\subsubsection{Chiral coined walk left and right indices}
\label{sss:CLR}

To discuss the left and right indices we will perform the symmetry preserving left decoupling $W\mapsto VW=W_L\oplus W_R$ with decoupling subspace $\tw\HH_0$, given by
$$
 V = \tw\ch_0^* \oplus \1_{\tw\HH_0^\bot}.
$$
The walks $W_L$ and $W_R$ act on $\HH_L=\HH_{<0}$ and $\HH_R=\HH_{\ge0}$ respectively. Proposition~\ref{prop:dSS-CYCLIC} states that $\tw\HH_0=\HH_{-1}^\downarrow\oplus\HH_0^\uparrow$ is cyclic for $W$, thus Proposition~\ref{prop:C-pert}.{\it(ii)} implies that $\HH_{-1}^\downarrow$ and $\HH_0^\uparrow$ are cyclic for $W_L$ and $W_R$ respectively. Taking $\HH_{-1}$ and $\HH_0$ as symmetry invariant subspaces which are cyclic for $W_L$ and $W_R$, the corresponding Schur functions, $\bs{f}_{\!L}$ and $\bs{f}_{\!R}$, will provide the left and right indices of $W$.

The structure of these $2d\times2d$ matrix Schur functions follows from right perturbations of $W_{L/R}$ which are analogous to those performed on $W$ in the previous subsection. The right perturbations $W_LV_L=W'_L\oplus\1_d$ and $W_RV_R=\1_d\oplus W'_R$ which decouple respectively the subspaces $\HH_{-1}^\downarrow$ and $\HH_0^\uparrow$, yield the following expressions for such Schur functions,
$$
 \bs{f}_{\!L} = \ch_{-1}^*(f_L \oplus \1_d),
 \qquad
 \bs{f}_{\!R} = \ch_0^*(\1_d \oplus f_R),
$$
where $f_{L/R}$ is the $d\times d$ matrix Schur function of $\HH_{-1/0}^{\uparrow/\downarrow}$ with respect to $W'_{L/R}$. From \eqref{eq:SI+-fin} and Corollary~\ref{cor:dec},
\begin{equation} \label{eq:C-SILR}
\begin{aligned}
 & \sixL(W) = \six_+(\bs{f}_{\!L}(1)) + \six_-(\bs{f}_{\!L}(-1)) =
 \frac{1}{2} \tr\ch_{-1}(\bs{f}_L(1) - \bs{f}_L(-1)) =
 \frac{1}{2} (\tr f_L(1) - \tr f_L(-1)),
 \\
 & \sixR(W) = \six_+(\bs{f}_{\!R}(1)) - \six_-(\bs{f}_{\!R}(-1)) =
 \frac{1}{2} \tr\ch_0(\bs{f}_R(1) - \bs{f}_R(-1)) =
 \frac{1}{2} (\tr f_R(1) - \tr f_R(-1)).
\end{aligned}
\end{equation}
The initial left decoupling preserves the chiral symmetry $\ch$, hence $W_L$ and $W_R$ have respectively the chiral symmetries $\ch_L=\oplus_{x<0}\ch_x$ and $\ch_R=\oplus_{x\ge0}\ch_x$. Therefore, $\ch_{-1/0}$ is a chiral symmetry for the unitary $\bs{f}_{\!L/R}(\pm1)$, which means that $\ch_{-1}\bs{f}_L(\pm1)$ and $\ch_0\bs{f}_R(\pm1)$ are involutions. In consequence, the $d\times d$ unitaries $f_{L/R}(\pm1)$ are also involutions, leading once again to the constraint \eqref{eq:trfLR}.

Analogously to the split-step case, we can use the above result to reobtain the expression \eqref{eq:C-SI} for the symmetry indices $\six_\pm(W)$. Applying Corollary~\ref{cor:dec} to the previous left decoupling $VW=W_L\oplus W_R$ we find that the $4d\times4d$ matrix Schur function $\bs{f}$ of $\HH_{-1}\oplus\HH_0$ with respect to $W$ is given by
$$
 \bs{f} =
 (\bs{f}_{\!L}\oplus \bs{f}_{\!R}) (\1_d \oplus \tw\ch_0^* \oplus \1_d) =
 (\ch_{-1}^* \oplus \ch_0^*)(f_L \oplus \tw\ch_0^* \oplus f_R).
$$
Since $\HH_1\oplus\HH_0$ is cyclic for $W$, Theorem~\ref{thm:si} and \eqref{eq:SI+-fin} yield
\begin{equation} \label{eq:C-SI2}
 \six_\pm(W) = \six_\pm(\bs{f}(\pm1)) =
 \frac{1}{2} \tr(\ch_{-1}\oplus\ch_0)(\1_{4d}\pm\bs{f}(\pm1)) =
 \pm\frac{1}{2} (\tr f_L(\pm1) + \tr f_R(\pm1)).
\end{equation}

Nevertheless, there is an apparent difference between \eqref{eq:C-SI2} and \eqref{eq:C-SI}. Let $WV_x=W_{L,x}\oplus W_{R,x}$ be the right decoupling at $\HH_{V_x}=\HH_x$ given by $V_x=\ch_x^*\oplus\1_{\HH_x^\bot}$, so that $W_{L,x}$ and $W_{R,x}$ are walks on $\HH_{L,x}=\HH_{<x}\oplus\HH_x^\uparrow$ and $\HH_{R,x}=\HH_x^\downarrow\oplus\HH_{>x}$ respectively. Then, using the notation
\begin{equation} \label{eq:fLRx-C}
 f_L^{x\uparrow} =
 \begin{gathered}
 	\text{ Schur function of } \HH_x^\uparrow
	\\[-2pt]
 	\text{ with respect to } W_{L,x}
 \end{gathered}
 \qquad\quad
 f_R^{x\downarrow} = \begin{gathered}
 	\text{ Schur function of } \HH_x^\downarrow
	\\[-2pt]
 	\text{ with respect to } W_{R,x}
 \end{gathered}
\end{equation}
\eqref{eq:C-SI} becomes
$$
 \six_\pm(W) = \pm\frac{1}{2}
 (\tr f_L^{0\uparrow}(\pm1) + \tr f_R^{0\downarrow}(\pm1)),
$$
while \eqref{eq:C-SI2} reads as
$$
 \six_\pm(W) = \pm\frac{1}{2}
 (\tr f_L^{-1\uparrow}(\pm1) + \tr f_R^{0\downarrow}(\pm1)).
$$
Generalizing \eqref{eq:C-SI} to the decoupling $W\mapsto WV_x$ at an arbitrary cell $\HH_{V_x}=\HH_x$ gives
$$
 \six_\pm(W) = \pm\frac{1}{2}
 (\tr f_L^{x\uparrow}(\pm1) + \tr f_R^{x\downarrow}(\pm1)),
 \qquad x\in\Ir.
$$
Also, \eqref{eq:C-SI2}, which follows from a left decoupling at $\tw\HH_0=\HH_{-1}^\downarrow\oplus\HH_0^\uparrow$, may be generalized to a similar decoupling at $\tw\HH_x=\HH_{x-1}^\downarrow\oplus\HH_x^\uparrow$, leading to
$$
 \six_\pm(W) = \pm\frac{1}{2}
 (\tr f_L^{x-1\uparrow}(\pm1) + \tr f_R^{x\downarrow}(\pm1)),
 \qquad x\in\Ir.
$$
As a consequence, $\tr f_L^{x\uparrow}(\pm1)$ and $\tr f_R^{x\downarrow}(\pm1)$ are independent of $x\in\Ir$, which shows that \eqref{eq:C-SI} and \eqref{eq:C-SI2} are actually the same identity.

\subsubsection{Chiral coined walk topological phases}
\label{sss:Ctoph}

The identities \eqref{eq:C-SI} and \eqref{eq:C-SILR} show that, for any even dimension $2d$ of the coins \eqref{eq:hC}, the topological phases of the chiral coined walks are characterized by the $x$-independent traces
\begin{equation} \label{eq:trfLR2}
 \tr f_L(\pm1):=\tr f_L^{x\uparrow}(\pm1),
 \;
 \tr f_R(\pm1):=\tr f_R^{x\downarrow}(\pm1)
 \in\{-d,-d+2,\dots,d-2,d\},
\end{equation}
which determine the three indices $\sixL(W)$, $\sixR(W)$, $\six_-(W)$. However, a special property of coined walks yields an additional constraint on $\tr f_{L/R}(\pm1)$ which reduces the number of possible phases.

\begin{prop} \label{prop:C-odd}
The Schur function of a cell with respect to a coined walk \eqref{eq:dC} is an odd function.
\end{prop}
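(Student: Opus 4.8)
The plan is to read the oddness of $\bs f$ directly off the operator representation \eqref{eq:f-W} of the Schur function, by exhibiting a unitary that intertwines $W$ with $-W$ and acts trivially on the chosen cell. First I would introduce the staggered (parity) unitary
$$
 U = \bigoplus_{x\in\Ir} (-1)^x\,\1_{\HH_x},
$$
which is diagonal with respect to the cell structure; in particular $U$ preserves every cell $\HH_y$ together with its orthogonal complement, with $UP_y = (-1)^y P_y$. Since the coin $C_1 = \bigoplus_x C_{1,x}$ of a coined walk \eqref{eq:dC} is block-diagonal in the cell index, $UC_1U^* = C_1$; whereas the conditional shift $S = S_\downarrow S_\uparrow$ moves each half-cell to a neighbouring site, so conjugation by $U$ produces an overall factor $(-1)^{\pm1} = -1$, i.e. $USU^* = -S$. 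Consequently $UWU^* = (USU^*)(UC_1U^*) = -SC_1 = -W$.

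Next I would transfer this to the Schur function $\bs f$ of a cell $\HH_{x_0}$, which by \eqref{eq:f-W} is $\bs f(z) = P_{x_0}(W-zP_{x_0}^\bot)^{-1}P_{x_0}$ with $P_{x_0}^\bot = \1-P_{x_0}$, and which is evaluated at both $z$ and $-z$ in $\Dk$, where the relevant operators are invertible. From $UWU^* = -W$ and $UP_{x_0}^\bot U^* = P_{x_0}^\bot$ one obtains $U(W+zP_{x_0}^\bot)U^* = -(W-zP_{x_0}^\bot)$, hence $(W+zP_{x_0}^\bot)^{-1} = -U^*(W-zP_{x_0}^\bot)^{-1}U$. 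Substituting into $\bs f(-z) = P_{x_0}(W+zP_{x_0}^\bot)^{-1}P_{x_0}$ and using $UP_{x_0} = (-1)^{x_0}P_{x_0}$ and $P_{x_0}U^* = (-1)^{x_0}P_{x_0}$, the two unimodular scalars cancel and we get $\bs f(-z) = -\bs f(z)$. Thus $\bs f$ is odd.

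I do not expect a real obstacle here. The only point worth spelling out is that $U$ restricts to a \emph{scalar} multiple of the identity on a single cell --- this is exactly what makes the compression of the intertwining relation to $\HH_{x_0}$ change $\bs f$ only by the overall sign, rather than by a non-trivial unitary conjugation; for a sum of cells of mixed parity $U$ would act on $\HH_C$ by a non-scalar diagonal unitary and one would merely obtain oddness up to that fixed conjugation, which is why the statement is phrased for a single cell. As a cross-check, and an alternative proof, one can note that in the block-CMV factorization $W = \tw\ch\ch$ written in the half-cell basis the factor $\tw\ch$ of a coined walk consists of the trivial reflections \eqref{eq:tch-C}, i.e. the corresponding Schur parameters vanish, so oddness also follows from the Schur-parameter criterion \eqref{eq:rSA}; the operator argument above is, however, shorter and manifestly independent of the coin size $d$.
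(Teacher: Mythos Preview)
Your proof is correct. The underlying observation is the same as the paper's --- the parity grading of the cell structure --- but the implementations differ. The paper expands the Schur function as the power series $\bs f(z)=\sum_{n\ge0}P_x(W^*P_x^\bot)^nW^*P_x\,z^n$ and argues that for even $n$ the operator $(W^*P_x^\bot)^nW^*$ swaps the even and odd sublattices $\HH_e=\bigoplus_a\HH_{2a}$ and $\HH_o=\bigoplus_a\HH_{2a+1}$, so the corresponding coefficients vanish. You instead encode the same parity swap as the intertwining relation $UWU^*=-W$ for the grading operator $U=\bigoplus_x(-1)^x\1_{\HH_x}$ and feed it directly into the resolvent formula \eqref{eq:f-W}. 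Your route is a bit more streamlined: it avoids the series and delivers $\bs f(-z)=-\bs f(z)$ in one stroke, and it makes transparent the role of $\HH_C$ being a single cell (so that $U$ acts as a scalar there). The paper's term-by-term argument, on the other hand, is closer in spirit to the companion proof of Proposition~\ref{prop:ueC-even}, where the analogous parity statement is established by tracking how $W(P_0^\bot W)^n\HH_0$ sits inside the shifted cells $\tw\HH_x$.
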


\begin{proof}
The even and odd subspaces
$$
 \HH_e = \bigoplus_{a\in\Ir} \HH_{2a},
 \qquad
 \HH_o = \bigoplus_{a\in\Ir} \HH_{2a+1},
$$
are exchanged by a coined walk $W=SC$, i.e. $W\HH_e=\HH_o$ and $W\HH_o=\HH_e$. Besides, these subspaces are invariant for the orthogonal projection $P_x$ onto a cell $\HH_x$, and also for the complementary projection $P_x^\bot=\1-P_x$. On the other hand, \eqref{eq:f-W} yields the following power expansion for the Schur function $\bs{f}$ of $\HH_x$,
\begin{equation} \label{eq:f-exp}
 \bs{f}(z) = \sum_{n\ge0} P_x(W^*P_x^\bot)^nW^*P_x z^n,
 \qquad
 z\in\Dk.
\end{equation}
The proposition follows from the fact that $(W^*P_x^\bot)^nW^*$ exchanges $\HH_e$ and $\HH_o$ when $n$ is even.
\end{proof}

With the help of the previous result we can finally classify the topological phases of the chiral coined walks.

\begin{thm} \label{thm:C-class}
The coined walks \eqref{eq:dC} with $2d\times 2d$ coins $C_{1,x}$ satisfying \eqref{eq:hC} and essential gaps around $\pm1$ exhibit $(d+1)^2$ symmetry protected topological phases, i.e. homotopy classes of admissible walks belonging to the symmetry type $\symS = \{\1,\ch\}$, $\ch^2=\1$. If $f_L=f_L^{x\uparrow}$ and $f_R=f_R^{x\downarrow}$ are the Schur functions defined in \eqref{eq:fLRx-C}, these phases are characterized by the $x$-independent traces \eqref{eq:trfLR2} according to
$$
 \sixL(W)=\tr f_L(1),
 \;
 \sixR(W)=\tr f_R(1) \in \{-d,-d+2,\dots,d-2,d\},
 \qquad
 \six_-(W)=\frac{1}{2}(\sixL(W)+\sixR(W)).
$$
Any of these phases has a representative which is an orthogonal sum of split-step walks.
Translation invariance reduces the number of phases to $d+1$, which arise by imposing $\tr f_L(1)=-\tr f_R(1)$.
\end{thm}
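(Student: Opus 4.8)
The plan is to assemble the identities already established for this model, use the parity statement of Proposition~\ref{prop:C-odd} to collapse them to the advertised index formulas, read off that the indices range over a set of size $(d+1)^2$, and finally realize every admissible value by an orthogonal sum of split-step walks. For the first step, combining \eqref{eq:C-SI} and \eqref{eq:C-SILR} gives, for every coined walk \eqref{eq:dC} with coin \eqref{eq:hC} and essential gaps around $\pm1$,
\[
 \sixL(W)=\tfrac12\bigl(\tr f_L(1)-\tr f_L(-1)\bigr),\qquad
 \sixR(W)=\tfrac12\bigl(\tr f_R(1)-\tr f_R(-1)\bigr),
\]
\[
 \six_\pm(W)=\pm\tfrac12\bigl(\tr f_L(\pm1)+\tr f_R(\pm1)\bigr).
\]
By Proposition~\ref{prop:C-odd} the full-cell Schur function $\bs f$ of $\HH_0$ with respect to $W$ is odd; since $\bs f=\ch_0^*(f_L\oplus f_R)$ by \eqref{eq:f-Cell}, with $\ch_0^*$ a constant unitary respecting the half-cell splitting, the $d\times d$ blocks $f_L=f_L^{x\uparrow}$ and $f_R=f_R^{x\downarrow}$ are themselves odd (after the analytic continuation through the essential gaps provided by Theorem~\ref{thm:u-eg}), so $f_{L/R}(-1)=-f_{L/R}(1)$. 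Substituting this into the displayed formulas yields $\sixL(W)=\tr f_L(1)$, $\sixR(W)=\tr f_R(1)$ and $\six_-(W)=\tfrac12(\sixL(W)+\sixR(W))$. Since $f_{L/R}(\pm1)$ are $d\times d$ unitary involutions --- being the two blocks into which the chirally symmetric unitary $\ch_0\bs f(\pm1)$ splits --- their traces lie in $\{-d,-d+2,\dots,d-2,d\}$, and the triple $(\sixL(W),\sixR(W),\six_-(W))$ is an injective function of $(\tr f_L(1),\tr f_R(1))$, which ranges over a set of $(d+1)^2$ elements. As these three indices form a complete set of homotopy invariants for admissible walks of symmetry type $\symS=\{\1,\ch\}$, $\ch^2=\1$ (Sect.~\ref{sec:SPTP}), distinct values of $(\tr f_L(1),\tr f_R(1))$ give distinct phases, so at most $(d+1)^2$ phases occur.

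For the realization, recall that for $d=1$ the coined walks attain all four values $(\sixL,\sixR)\in\{\pm1\}^2$: the split-step crossovers of Lemma~\ref{lem:SS-cross} with $\theta_{2,L}=\theta_{2,R}=0$ and $\theta_{1,L},\theta_{1,R}$ of arbitrary nonzero signs give $(\sixL,\sixR)=(-\sgn\theta_{1,L},\sgn\theta_{1,R})$, i.e. the four upper-left entries of table~\eqref{eq:table-16-ti-thm}, and being $\theta_2\equiv0$ walks they are coined walks. Given a target $(\tr f_L(1),\tr f_R(1))=(2k-d,2l-d)$ with $k,l\in\{0,\dots,d\}$, I would pick nonnegative integers $a,b,c,e$ with $a+b+c+e=d$, $a+b=k$, $a+c=l$ --- possible because $\max(0,k+l-d)\le\min(k,l)$ --- and form the orthogonal sum of $a$ walks of type $(+1,+1)$, $b$ of type $(+1,-1)$, $c$ of type $(-1,+1)$ and $e$ of type $(-1,-1)$. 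Regrouping the internal spaces so that the $d$ ``$\uparrow$'' components become $\HH_x^\uparrow$ and the $d$ ``$\downarrow$'' components become $\HH_x^\downarrow$ turns this sum into a coined walk \eqref{eq:dC} whose $2d\times2d$ coin is the block-diagonal assembly of the $2\times2$ split-step coins; one checks that such an assembly again has the form \eqref{eq:hC} with diagonal blocks $A_x,B_x$, and that the sum remains essentially gapped and essentially local, hence admissible. Because the Schur function of a direct sum of subspaces with respect to a direct sum of unitaries is the direct sum of the Schur functions, the decoupling at a cell of the summed walk produces $f_L^{x\uparrow}=\bigoplus_i f_{L,i}^{x\uparrow}$ (and similarly on the right), so $\sixL$ and $\sixR$ are additive over these orthogonal sums and this representative has $\sixL=2k-d$, $\sixR=2l-d$. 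Hence all $(d+1)^2$ phases occur, each with an orthogonal-sum-of-split-step representative. I expect the genuinely delicate point here to be the bookkeeping: checking that the block-diagonal assembly of split-step coins really lies in the family \eqref{eq:hC}, that the regrouped operator is literally a coined walk of the form \eqref{eq:dC} with the intended half-cell structure, and that essential gappedness and index additivity survive the orthogonal sum; the arithmetic with $(k,l)$ and $(a,b,c,e)$ is routine.

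For the translation invariant case, if $W$ is translation invariant the essential gap condition becomes a strict gap (Sect.~\ref{sec:SPTP}), so $\six_\pm(W)=0$ and therefore $\six_-(W)=\tfrac12(\sixL(W)+\sixR(W))=0$; equivalently $\tr f_L(1)=-\tr f_R(1)$, which leaves only $d+1$ possible values for the triple of indices. Conversely every such value is attained by the orthogonal sum of $k$ copies of the $d=1$ translation invariant coined walk with $\theta_1<0$ and $d-k$ copies with $\theta_1>0$ (which have $(\sixL,\sixR)=(1,-1)$ and $(-1,1)$ respectively), and this sum is again translation invariant. Hence translation invariance reduces the count to exactly $d+1$ phases.
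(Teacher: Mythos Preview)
Your proof is correct and follows essentially the same approach as the paper: the oddness from Proposition~\ref{prop:C-odd} collapses \eqref{eq:C-SI} and \eqref{eq:C-SILR} to the stated index formulas, and the realization step builds a direct sum of $d$ split-step coined walks with $\theta_{2}\equiv0$. The only difference is cosmetic: where you carry out the $(a,b,c,e)$ arithmetic, the paper simply writes down the diagonal blocks $A_{1,x}=\operatorname{diag}(\cos\theta_x^{(r)})$, $B_{1,x}=\operatorname{diag}(\sin\theta_x^{(r)})$ explicitly and observes that choosing the signs of $\theta_{1,L}^{(r)}$ and $\theta_{1,R}^{(r)}$ independently realizes any target pair of traces --- this also disposes of the ``bookkeeping'' you flagged, since the diagonal form manifestly satisfies \eqref{eq:hC}.
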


\begin{proof}
Proposition~\ref{prop:C-odd} implies that the Schur functions $f_{L/R}$ in \eqref{eq:f-Cell} are odd, hence $f_{L/R}(-1)=-f_{L/R}(1)$. Bearing in mind \eqref{eq:C-SI} and \eqref{eq:C-SILR}, this yields the expressions of the symmetry indices given in the theorem, giving rise to at most $(d+1)^2$ phases.
To prove that chiral coined walks with $2d\times2d$ coins actually exhibit all these $(d+1)^2$ phases, we need to find a representative for each of the possible values of $\tr f_{L/R}(1)$. For this purpose we consider a diagonal choice for the $d\times d$ blocks of the coin $C_{1,x}$, namely
\begin{equation} \label{eq:ABdiag}
 A_{1,x} = \scriptsize
 \begin{pmatrix}
 \cos\theta_x^{(1)}
 \\
 & \kern-18pt \cos\theta_x^{(2)}
 \\[-3pt]
 & & \kern-18pt \ddots
 \\[-4pt]
 & & & \kern-12pt \cos\theta_x^{(d)}
 \end{pmatrix},
 \qquad
 B_{1,x} = \scriptsize
 \begin{pmatrix}
 \sin\theta_x^{(1)}
 \\
 & \kern-18pt \sin\theta_x^{(2)}
 \\[-3pt]
 & & \kern-18pt \ddots
 \\[-4pt]
 & & & \kern-12pt \sin\theta_x^{(d)}
 \end{pmatrix},
 \qquad
 \normalsize \theta_x^{(r)} \in \textstyle (-\frac{\pi}{2},\frac{\pi}{2}).
\end{equation}
This choice only links states $|x\uparrow r\>$, $|x\downarrow r\>$ with the same index $r$, thus $W=\oplus_{r=1}^dW^{(r)}$ decouples into coined walks $W^{(r)}$ of split-step type with coin angles $\theta_{1,x}^{(r)}=\theta_x^{(r)}$ and $\theta_{2,x}^{(r)}=0$. The essential gap around $\pm1$ for $W$ obviously translates into similar ones for $W^{(r)}$, while the chiral symmetry of $W$ is orthogonal sum of those of $W^{(r)}$. In view of the operator representation \eqref{eq:f-W}, the Schur function $\bs{f}$ of the cell $\HH_0$ is the $2d\times2d$ matrix function
$$
 \bs{f} = \scriptsize
 \begin{pmatrix}
 f^{(1)}
 \\[-2pt]
 & \kern-9pt f^{(2)}
 \\[-6pt]
 & & \kern-12pt \ddots
 \\[-5pt]
 & & & \kern-4pt f^{(d)}
 \end{pmatrix},
$$
where $f^{(r)}$ is the $2\times2$ matrix Schur function of $\spn\{|0\uparrow r\>,|0\downarrow r\>\}$ with respect to $W^{(r)}$. The previously discussed decouplings of $W$ follow from similar decouplings of the walks $W^{(r)}$, already analyzed in the split-step example. We conclude that the $d\times d$ matrix functions $f_{L/R}$ in \eqref{eq:trfLR2} are diagonal, and their diagonals are the scalar functions $f_{L/R}^{(r)}$ which play a similar role for the coined walks $W^{(r)}$ of split-step type. As for any coined walk, the functions $f_{L/R}^{(r)}$ are odd, and the split-step example shows that their possible values at 1 are $\pm1$. Therefore, we can get any of the values of $\tr f_{L/R}(1)=\sum_{r=1}^d f_{L/R}^{(r)}(1)$ in \eqref{eq:trfLR2} just by choosing properly the split-step walks $W^{(r)}$.

Under translation invariance the essential gaps become strict gaps, thus $\six_\pm(W)=0$ and $\sixL(W)=-\sixR(W)$. Hence $\tr f_L(1)=-\tr f_R(1)$, as follows from the previous results. Moreover, there is a chiral coined walk with $\tr f_R(1)=-\tr f_L(1)$ being any integer in $\{-d,-d+2,\dots,d-2,d\}$. Just choose $A_{1,x}$ and $B_{1,x}$ as in \eqref{eq:ABdiag} with site independent angles $\theta_x^{(r)}=\theta^{(r)}\in(-\pi/2,\pi/2)\setminus\{0\}$. According to Theorem~\ref{thm:SS-class}, this makes $f_R^{(r)}(1)=-f_L^{(r)}(1)=\sgn\theta^{(r)}$. Hence, we get any of the alluded values of of $\tr f_{L/R}(1)=\sum_{r=1}^d f_{L/R}^{(r)}(1)$ by a suitable choice of the number of positive and negative angles $\theta^{(r)}$.
\end{proof}

For $d=1$, the above theorem gives the 4 phases already observed for the coined walk specialization of the split-step walk (upper-left corner of table~\eqref{eq:table-16-ti-thm} in Theorem~\ref{thm:SS-class}). As for $d>1$, for instance, in the case $d=2$, chiral coined walks present 9 phases whose symmetry indices are given in the table below, whose principal anti-diagonal corresponds to the 3 phases of the translation invariant case.
$$
\renewcommand{\arraystretch}{1.5}
\begin{tabular}{|c|c|c|c|c|}
 \hline
 \;$(\sixL(W),\sixR(W),\six_-(W))$\;	& \kern-2pt
 & \;$\tr f_R(\pm1)=2$\; 			
 & \;$\tr f_R(\pm1)=0$\; 			
 & \;$\tr f_R(\pm1)=-2$\; 		
 \\ \hline
 & \kern-2pt & & &
 \\[-15.7pt]
 \hline
 $\tr f_L(\pm1)=2$	 	& \kern-2pt
 						& $(2,2,2)$	    & $(2,0,1)$		& $(2,-2,0)$
 \\ \hline
 $\tr f_L(\pm1)=0$  	& \kern-2pt	
 						& $(0,2,1)$ 	& $(0,0,0)$ 	& $(0,-2,-1)$
 \\ \hline
 $\tr f_L(\pm1)=-2$ 	& \kern-2pt
 						& $(-2,2,0)$  	& $(-2,0,-1)$   & $(-2,-2,-2)$
 \\ \hline
\end{tabular}
$$

We have found that chiral coined walks with $2d\times2d$ coins exhibit $(d+1)^2$ topological phases, among which only $d+1$ arise in the translation invariant case. The later result is valid, not only for $x$-independent coins $C_{1,x}$, but also for periodic ones with an arbitrary period. Combining the arguments given in Theorem~\ref{thm:C-class} with the results of the split-step example we can see that every phase has a representative which is a crossover of translation invariant chiral coined walks. This agrees with the fact that the total number of phases coincides with the possible combinations of translation invariant ones. However, this naive counting, already proved to be misleading in the split-step case, will change even more dramatically in the following example.

\subsection{Walks unitarily equivalent to chiral coined walks}
\label{ss:ex-ueCd}

We will consider now the specialization of the walks in Proposition~\ref{prop:dSS} for a trivial coin $C_1=\1$. They have the form
\begin{equation} \label{eq:uedC}
 W = S_\downarrow C_2 S_\uparrow = S_\uparrow^* (SC_2) S_\uparrow,
\end{equation}
where $S = S_\downarrow S_\uparrow = S_\uparrow S_\downarrow$ is the conditional shift given in \eqref{eq:dC}. We will assume that the coins $C_{2,x}$ satisfy \eqref{eq:hC}, so that $W$ has a chiral symmetry. Although these walks are unitarily equivalent to chiral coined walks, we cannot infer their topological phases from those obtained in the previous example because the equivalence is given by the unitary operator $S_\uparrow$, which does not preserve the cell structure. Indeed, the topological phases of this model turn out to be quite different from those of chiral coined walks.

According to Proposition~\ref{prop:dSS}, $W=\tw\ch\ch$ where
$$
 \ch = \bigoplus_{x\in\Ir} \ch_x =
 \bigoplus_{x\in\Ir}
 \begin{pmatrix}
 	0 & \1_d \\[-5pt] \1_d & 0
 \end{pmatrix}
$$
is the chiral symmetry of $W$. Comparing with the previous example, we see that $W^*=\ch\tw\ch$ is a chiral coined walk, but with respect to the cell structure $\HH=\oplus_{x\in\Ir}\tw\HH_x$. Therefore, analyzing the topological phases of chiral walks of the form \eqref{eq:uedC} is the same as studying the topological phases of chiral coined walks \eqref{eq:dC} with respect to the modified cells $\tw\HH_x$ and the corresponding chiral symmetry $\tw\ch$ \eqref{eq:tch-C}. This alternative view of the walks \eqref{eq:uedC} allows us to translate to them most of the results from the previous example. Actually, the only --but key-- difference comes from the parity of the the Schur function of a cell.

\begin{prop} \label{prop:ueC-even}
The Schur function of a cell with respect a walk \eqref{eq:uedC} is an even function.
\end{prop}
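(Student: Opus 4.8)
The plan is to mirror, step for step, the treatment of the chiral coined walk in Subsect.~\ref{sss:C+-}, and to locate the single place where the parity flips. By Proposition~\ref{prop:dSS-CYCLIC} every cell $\HH_x$ is cyclic, so $\bs f$ is the $2d\times2d$ matrix Schur function of $\HH_x$. First I would perform the local right decoupling $WV_x=W_{L,x}\oplus W_{R,x}$ at $\HH_{V_x}=\HH_x$ with $V_x=\ch_x^*\oplus\1_{\HH_x^\bot}$, where now $\ch_x=\begin{pmatrix}0&\1_d\\\1_d&0\end{pmatrix}$ is the pure cell-swap coming from $C_1=\1$. Exactly as in the derivation of \eqref{eq:f-Cell}, the right-decoupling version of Theorem~\ref{thm:dec}/Corollary~\ref{cor:dec} gives $\bs f=\ch_x^*(f_L\oplus f_R)$, with $f_{L/R}$ the $d\times d$ matrix Schur functions of the half-cells $\HH_x^{\uparrow/\downarrow}$ with respect to $W_{L/R}$. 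Since the prefactor $\ch_x^*$ is independent of $z$, $\bs f$ is even if and only if both $f_L$ and $f_R$ are even, so the whole statement reduces to the evenness of the half-line half-cell Schur functions.

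Next I would read off the Schur parameters of $f_L,f_R$. As in the split-step and chiral-coined analyses, $W_{L/R}$ are (operator-valued) CMV matrices \eqref{eq:eCMV}: the factor $\tw\ch$ supplies the blocks $\tw\ch_x$ and the factor $\ch$ supplies the blocks $\ch_x$, which alternate in the CMV product \eqref{eq:CMV}. The key point is that for the walks \eqref{eq:uedC} it is the $\ch$-factor that is trivial, $\ch_x=\begin{pmatrix}0&\1_d\\\1_d&0\end{pmatrix}=\Theta(0)$, so the matrix $\Theta$-blocks sitting at the odd positions of the parameter sequences of $W_{L/R}$ are $\Theta(0)$; equivalently, the matrix Schur parameters of $f_L$ and $f_R$ vanish at all odd positions. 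By the matrix analogue of the substitution rule in \eqref{eq:rSA}, a Schur function whose parameters are null at the odd places satisfies $f(z)=g(z^2)$ and is therefore even. This is the mirror image of Proposition~\ref{prop:C-odd}: there $C_2=\1$ makes the $\tw\ch$-factor trivial, placing the zeros at the even positions and producing odd functions; here the roles of $\ch$ and $\tw\ch$ are exchanged, which flips the parity.

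The main obstacle is the last implication for general $d$, i.e. the operator-valued parameter-parity correspondence together with the CMV identification of the decoupled half-line walks; for $d=1$ this is literally the split-step computation, where the parameters of $f_{L/R}$ are $(\dots,0,\tilde s,0,\tilde s,\dots)$ with $\tilde s=\sin\theta_2$ and the zeros coming from $\theta_1=0$ (that is, from $C_1=\1$). I would either invoke the matrix Schur algorithm of matrix OPUC theory directly, or, for an entirely self-contained argument, reduce to this scalar case block-by-block by noting that a diagonal coin splits $W$ into an orthogonal sum of split-step walks, as in the proof of Theorem~\ref{thm:C-class}, and handle the general coin by the matrix Schur algorithm. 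It is worth stressing why the short grading argument used for Proposition~\ref{prop:C-odd} cannot simply be repeated here: the only nontrivial $\Ir_2$-grading that $W$ flips is the $\tw$-cell parity (the product of the cell and spin parities), and under this grading the original cell $\HH_x$ does not lie in a single grade---its two half-cells $\HH_x^\uparrow,\HH_x^\downarrow$ sit in opposite grades. Consequently the exchange argument of Proposition~\ref{prop:C-odd} only shows that in the expansion \eqref{eq:f-exp} the grade-off-diagonal blocks of $\bs f$ carry even powers and the grade-diagonal blocks carry odd powers; the genuine content---that the odd, diagonal part vanishes---is precisely what the CMV parameter structure (the boundary defect $\Theta(0)$ of the decoupled walks) supplies.
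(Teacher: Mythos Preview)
Your decoupling step and the formula $\bs f=\ch_x^*(f_L\oplus f_R)$ are fine, and for $d=1$ your argument is complete: it is the split-step computation with $\theta_{1,x}=0$, so the scalar Schur parameters of $f_{L/R}$ vanish at odd places and \eqref{eq:rSA} gives evenness. The difficulty is exactly the one you flag: for $d>1$ your proof rests on two facts that the paper never establishes---a matrix Geronimus identification of the block-CMV coefficients of $W_{L/R}$ with the matrix Schur parameters of $f_{L/R}$, and the matrix analogue of the parity rule in \eqref{eq:rSA}. Your proposed remedies do not close this gap within the paper's framework: citing matrix OPUC theory is an external import, and the block-by-block reduction only handles diagonal $C_{2,x}$, while the proposition is stated for arbitrary coins of the form \eqref{eq:hC}. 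So as written the proof is incomplete for $d>1$.

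The paper's argument takes a different and self-contained route that avoids Schur parameters entirely. It works directly with the power expansion \eqref{eq:f-exp} and shows that the odd coefficients $P_0(W^*P_0^\bot)^nW^*P_0$ vanish by tracking, via an induction on $n$, exactly which $\tw\HH_y$ the subspace $W(P_0^\bot W)^n\HH_0$ lands in. The key inclusions $W\HH_y^\uparrow\subset\tw\HH_{y+1}$, $W\HH_y^\downarrow\subset\tw\HH_y$, $W\tw\HH_y\subset\tw\HH_{y-1}\oplus\tw\HH_{y+1}$ together with the effect of $P_0^\bot$ (which kills $\HH_0^\uparrow\subset\tw\HH_0$ and $\HH_0^\downarrow\subset\tw\HH_1$) force, for odd $n$, the image to lie in $(\tw\HH_{-n}\oplus\cdots\oplus\tw\HH_{-1})\oplus(\tw\HH_2\oplus\cdots\oplus\tw\HH_{n+1})$, which misses both $\tw\HH_0$ and $\tw\HH_1$ and hence is orthogonal to $\HH_0$. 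This is strictly finer than the $\Ir_2$-grading you analyze at the end: you correctly observe that the grading alone only controls which blocks carry which parity, while the paper's induction records the precise set of $\tw$-indices and thereby supplies the extra vanishing you identified as the missing content.
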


\begin{proof}
Without loss, we can consider the Schur function $\bs{f}$ of $\HH_0$. We need to show that the coefficients $P_0(W^*P_0^\bot)^nW^*P_0=(P_0W(P_0^\bot W)^nP_0)^*$ of its power expansion \eqref{eq:f-exp} vanish for odd $n$. This is a direct consequence of the following inclusion
\begin{equation} \label{eq:inc2}
 W(P_0^\bot W)^n\HH_0 \subset
 (\tw\HH_{-n} \oplus \cdots \oplus \tw\HH_{-3} \oplus \tw\HH_{-1})
 \oplus (\tw\HH_2 \oplus \tw\HH_4 \oplus \cdots \oplus \tw\HH_{n+1}),
 \qquad \text{odd } n,
\end{equation}
which we will prove by induction. From the definition \eqref{eq:uedC} of the walk we find that
\begin{equation} \label{eq:inc}
 W\HH_x^\uparrow \subset \tw\HH_{x+1},
 \qquad
 W\HH_x^\downarrow \subset \tw\HH_x,
 \qquad
 W\HH_x \subset \tw\HH_x \oplus \tw\HH_{x+1},
 \qquad
 W\tw\HH_x \subset \tw\HH_{x-1} \oplus \tw\HH_{x+1}.
\end{equation}
For $n=1$, \eqref{eq:inc} yields
$$
 WP_0^\bot W\HH_0 \subset WP_0^\bot(\tw\HH_0\oplus\tw\HH_1)
 = W(\HH_{-1}^\downarrow\oplus\HH_1^\uparrow)
 \subset \tw\HH_{-1}\oplus\tw\HH_2.
$$
Assuming \eqref{eq:inc2} for a given even index $n$, using again \eqref{eq:inc} we obtain
$$
\begin{aligned}
 W(P_0^\bot W)^{n+2}\HH_0
 & \subset WP_0^\bot W
 \left(
 (\tw\HH_{-n} \oplus \cdots \oplus \tw\HH_{-3} \oplus \tw\HH_{-1})
 \oplus
 (\tw\HH_2 \oplus \tw\HH_4 \oplus \cdots \oplus \tw\HH_{n+1})
 \right)
 \\
 & \subset WP_0^\bot
 \left(
 (\tw\HH_{-n-1} \oplus \cdots \oplus \tw\HH_{-2} \oplus \tw\HH_0)
 \oplus
 (\tw\HH_1 \oplus \tw\HH_3 \oplus \cdots \oplus \tw\HH_{n+2})
 \right)
 \\
 & = W
 \left(
 (\tw\HH_{-n-1} \oplus \cdots \oplus \tw\HH_{-2} \oplus \HH_{-1}^\downarrow)
 \oplus
 (\HH_1^\uparrow \oplus \tw\HH_3 \oplus \cdots \oplus \tw\HH_{n+2})
 \right)
 \\
 & =
 (\tw\HH_{-n-2} \oplus \cdots \oplus \tw\HH_{-3} \oplus \tw\HH_{-1})
 \oplus
 (\tw\HH_2 \oplus \tw\HH_4 \oplus \cdots \oplus \tw\HH_{n+3}).
\end{aligned}
$$
\end{proof}

Compared with chiral coined walks, the change in the parity of the Schur function of a cell for the walks \eqref{eq:uedC} is the origin of a very different structure of topological phases for these walks with a chiral symmetry.

\begin{thm} \label{thm:ueC-class}
The walks \eqref{eq:uedC} with $2d\times 2d$ coins $C_{2,x}$ satisfying \eqref{eq:hC} and essential gaps around $\pm1$ exhibit $2d+1$ symmetry protected topological phases, i.e. homotopy classes of admissible walks belonging to the symmetry type $\symS = \{\1,\ch\}$, $\ch^2=\1$. If $f_L=f_L^{x\uparrow}$ and $f_R=f_R^{x\downarrow}$ are the Schur functions defined in \eqref{eq:fLRx-C}, these phases are characterized by the $x$-independent traces \eqref{eq:trfLR2} according to
$$
 \sixL(W)=\sixR(W)=0,
 \qquad
 \six_-(W)=-\frac{1}{2}(\tr f_L(1)+\tr f_R(1)) \in \{-d,-d+1,\dots,d-1,d\}.
$$
Any of these phases has a representative which is an orthogonal sum of split-step walks.
Translation invariance reduces the phases to a single one with $\sixL(W)=\sixR(W)=\six_-(W)=0$.
\end{thm}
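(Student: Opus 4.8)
The plan is to run the proof of Theorem~\ref{thm:C-class} essentially unchanged, with the parity of the cell Schur function reversed: the oddness used there is replaced by the evenness supplied by Proposition~\ref{prop:ueC-even}, and this single change is what produces the different phase structure. First I would note that \eqref{eq:uedC} is the $C_1=\1$ instance of Proposition~\ref{prop:dSS}, so $W=\tw\ch\ch$ with $\ch=\bigoplus_x\ch_x$ the full-cell flip (a traceless balanced chiral symmetry acting on $\HH_x$) and $\tw\ch=\bigoplus_x\tw\ch_x$ built from $C_2$ and acting on $\tw\HH_x$. The decoupling analysis of Subsects.~\ref{sss:C+-} and \ref{sss:CLR} uses only this $W=\tw\ch\ch$ block structure, not the specific origin of the two factors: the right decoupling $WV_x=W_{L,x}\oplus W_{R,x}$ at $\HH_x$ with $V_x=\ch_x^*\oplus\1_{\HH_x^\bot}$ (which replaces $\ch_x$ by $\1_{2d}$ and peels $\HH_{<x}\oplus\HH_x^\uparrow$ off from $\HH_x^\downarrow\oplus\HH_{>x}$), the symmetry-preserving left decoupling $VW=W_L\oplus W_R$ at $\tw\HH_x$ with $V=\tw\ch_x^*\oplus\1_{\tw\HH_x^\bot}$, the factorization $\bs f=\ch_0^*(f_L\oplus f_R)$ with $f_{L/R}=f_{L/R}^{x\uparrow/x\downarrow}$ of \eqref{eq:fLRx-C}, the index formulas \eqref{eq:C-SI} and \eqref{eq:C-SILR}, and the $x$-independence of $\tr f_L^{x\uparrow}(\pm1),\tr f_R^{x\downarrow}(\pm1)\in\{-d,-d+2,\dots,d-2,d\}$, all carry over verbatim. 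The main (if essentially bookkeeping) obstacle is to check carefully that, with $\ch$ and $\tw\ch$ now in exchanged roles relative to the coined-walk case, these perturbations are still decouplings into half-line walks and still do (respectively do not) preserve the symmetries as claimed; granted that, the rest is short.

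Next I would feed Proposition~\ref{prop:ueC-even} into the factorization $\bs f=\ch_0^*(f_L\oplus f_R)$: since $\ch_0$ is independent of $z$, evenness of the $2d\times2d$ cell Schur function $\bs f$ forces $f_L$ and $f_R$ to be even, hence $f_{L/R}(-1)=f_{L/R}(1)$, the values at $\pm1$ being the analytic continuations through the essential gaps guaranteed by admissibility. Substituting $\tr f_{L/R}(-1)=\tr f_{L/R}(1)$ into \eqref{eq:C-SILR} yields $\sixL(W)=\sixR(W)=0$, while \eqref{eq:C-SI} gives $\six_-(W)=-\frac12(\tr f_L(1)+\tr f_R(1))$. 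Since each of $\tr f_L(1),\tr f_R(1)$ is the trace of a $d\times d$ unitary involution, it is an integer of the parity of $d$ lying in $[-d,d]$, so the sum is an even integer in $[-2d,2d]$ and $\six_-(W)\in\{-d,-d+1,\dots,d\}$; hence there are at most $2d+1$ phases.

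Finally I would realise all $2d+1$ values exactly as in Theorem~\ref{thm:C-class}. Choosing $A_{2,x},B_{2,x}$ diagonal as in \eqref{eq:ABdiag} makes $W=\bigoplus_{r=1}^d W^{(r)}$ an orthogonal sum of split-step walks $W^{(r)}$ with $\theta_{1,x}^{(r)}\equiv0$ and $\theta_{2,x}^{(r)}=\theta_x^{(r)}$, and then $\bs f$ and $f_{L/R}$ become block-diagonal with scalar split-step entries $f^{(r)}$ and $f_{L/R}^{(r)}$. Taking each $W^{(r)}$ to be a crossover $\theta_x^{(r)}=\theta_{2,L}^{(r)}$ for $x\le0$ and $\theta_x^{(r)}=\theta_{2,R}^{(r)}$ for $x>0$, with $\theta_{2,L/R}^{(r)}\ne0$, Lemma~\ref{lem:SS-cross} gives $f_L^{(r)}(1)=-\sgn\theta_{2,L}^{(r)}$ and $f_R^{(r)}(1)=\sgn\theta_{2,R}^{(r)}$, each of which can be made $+1$ or $-1$ at will; summing over $r$ then makes $\tr f_L(1)+\tr f_R(1)$ run through every even integer in $[-2d,2d]$, so $\six_-(W)$ attains every value in $\{-d,\dots,d\}$ by an orthogonal sum of admissible split-step walks, which also proves the representative claim. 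For the translation-invariant statement, a translation-invariant admissible walk is strictly gapped (Sect.~\ref{sec:SPTP}), so $\HH^\pm=\{0\}$ and $\six_\pm(W)=0$; combined with $\sixL(W)=\sixR(W)=0$ this forces $\six_-(W)=0$, leaving the single phase with all indices zero, which is non-empty (e.g. $d$ copies of a fixed gapped translation-invariant split-step walk with $\theta_1=0$, $\theta_2\ne0$).
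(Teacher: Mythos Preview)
Your proposal is correct and follows essentially the same route as the paper: transfer the decoupling machinery of Subsects.~\ref{sss:C+-}--\ref{sss:CLR} (which only uses the $W=\tw\ch\ch$ block structure), replace the oddness of the cell Schur function by the evenness from Proposition~\ref{prop:ueC-even} to force $f_{L/R}(-1)=f_{L/R}(1)$, read off $\sixL=\sixR=0$ and $\six_-(W)=-\tfrac12(\tr f_L(1)+\tr f_R(1))$ from \eqref{eq:C-SI}--\eqref{eq:C-SILR}, then realise all $2d+1$ values by diagonal coins giving orthogonal sums of split-step walks with $\theta_{1,x}^{(r)}\equiv0$. Your explicit invocation of Lemma~\ref{lem:SS-cross} for the crossover representatives and your remark that the main care point is checking the decouplings and their symmetry properties with the roles of $\ch$ and $\tw\ch$ swapped are both apt, and match what the paper does (and leaves) implicitly.
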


\begin{proof}
The parity of $\bs{f}$ is also inherited by the functions $f_{L/R}$ appearing in the decomposition $\bs{f}=\ch_0^*(f_L\oplus f_R)$ generated by the right decoupling of $W$ at $\HH_0$. As in the coined walk case, the conclusion is that the phases of the walks \eqref{eq:uedC} are controlled by the values of the even functions $f_{L/R}$ at $\pm1$, which are $d\times d$ unitary involutions related by $f_{L/R}(-1)=f_{L/R}(1)$. Their traces have the possible values
$$
 \tr f_L(1), \; \tr f_R(1) \in \{-d,-d+1,\dots,d-1,d\},
$$
giving rise to the symmetry indices
$$
\begin{gathered}
 \sixL(W) = \frac{1}{2} (\tr f_L(1) - \tr f_L(-1)) = 0,
 \qquad
 \sixR(W) = \frac{1}{2} (\tr f_R(1) - \tr f_R(-1)) = 0,
 \\
 \six_-(W) = -\frac{1}{2} (\tr f_L(-1) + \tr f_R(-1)) =
 -\frac{1}{2} (\tr f_L(1) + \tr f_R(1)).
\end{gathered}
$$
Since the left and right indices are null, the phases of this model are characterized by a single symmetry index, $\six_-(W)\in\{-d,-d+1,\dots,d-1,d\}$. Therefore, these walks have at most $2d+1$ phases. Just as in the case of chiral coined walks, we can build explicit examples for all the alluded $2d+1$ potential phases by taking orthogonal sums of split-step walks $W^{(r)}$, in this case with $\theta_{1,x}^{(r)}=0$. Therefore, all the alluded phases are present in this model. Among these $2d+1$ phases, only $(0,0,0)$ is compatible with the strict gaps around $\pm1$ imposed by translation invariance, and indeed any orthogonal sum of split-step walks $W^{(r)}$ with $x$-independent angles $\theta_{1,x}^{(r)}=0$ and $\theta_{2,x}^{(r)}={\theta}^{(r)}\in(-\pi/2,\pi/2)\setminus\{0\}$ is a translation invariant representative of such a phase.
\end{proof}

For $d=1$, the 3 phases given by the previous theorem are those already observed for the split-step walks with $\theta_{1,x}=0$ (lower-right corner of table~\ref{eq:table-16-ti-thm}). As an example for $d>1$, the following table relates the symmetry indices with the possible values of $\tr f_{L/R}(\pm1)$ for the 5 phases in the case $d=2$.
$$
\renewcommand{\arraystretch}{1.5}
\begin{tabular}{|c|c|c|c|c|}
 \hline
 \;$(\sixL(W),\sixR(W),\six_-(W))$\;	& \kern-2pt
 & \;$\tr f_R(\pm1)=2$\; 			
 & \;$\tr f_R(\pm1)=0$\; 			
 & \;$\tr f_R(\pm1)=-2$\; 		
 \\ \hline
 & \kern-2pt & & &
 \\[-15.7pt]
 \hline
 $\tr f_L(\pm1)=2$			& \kern-2pt
 						 	& $(0,0,-2)$ 	& $(0,0,-1)$	& $(0,0,0)$
 \\ \hline
 $\tr f_L(\pm1)=0$  		& \kern-2pt	
 						 	& $(0,0,-1)$ 	& $(0,0,0)$ 	& $(0,0,1)$
 \\ \hline
 $\tr f_L(\pm1)=-2$ 		& \kern-2pt
 						 	& $(0,0,0)$  	& $(0,0,1)$   	& $(0,0,2)$
 \\ \hline
\end{tabular}
$$
The anti-diagonals yield the same phase, the principal one corresponding to the only phase present under translation invariance.

Despite exhibiting a single phase in the translation invariant case --i.e. periodic coins $C_{2,x}$ with any period--, the number $2d+1$ of non-translation invariant phases of this model grows linearly with $d$. This example clearly shows that the number of topological phases of a model cannot be inferred from the number of its translation invariant phases.

\section*{Acknowledgements}

C.~Cedzich acknowledges support by the Excellence Initiative of the German Federal and State Governments (Grant ZUK 81) and the DFG (project B01 of CRC 183).

T.~Geib and R.~F.~Werner acknowledge support from the ERC grant DQSIM, the DFG SFB 1227 DQmat, and the European project SIQS.

The work of L. Vel\'azquez is partially supported by the research project MTM2017-89941-P from Ministerio de Econom\'{\i}a, Industria y Competitividad of Spain and the European Regional Development Fund (ERDF), by project UAL18-FQM-B025-A (UAL/CECEU/FEDER) and by Project E26\_17R of Diputaci\'on General de Arag\'on (Spain) and the ERDF 2014-2020 ``Construyendo Europa desde Arag\'on''.

A.~H.~Werner thanks the Alexander von Humboldt Foundation for its support with
a Feodor Lynen Fellowship and acknowledges financial support from the
European Research Council (ERC Grant Agreement No. 337603) and VILLUM
FONDEN via the QMATH Centre of Excellence (Grant No. 10059).

\bibliography{tphbib}

\end{document}